
\documentclass[journal,twosided,a4paper]{IEEEtran}  





\usepackage{amsmath,amssymb,amsfonts,mathtools,enumerate,graphicx,array,tikz,pgfplots,amsthm,mathdots,lipsum,cuted}
\usepackage{enumitem}
\usepackage{varwidth}
\usepackage[linesnumbered,ruled,vlined]{algorithm2e}
\usepackage{tikz}
\usepackage{adjustbox}
\usepackage{stfloats,lipsum} 
\usepackage{multirow,multicol}
\usepackage{hhline,cancel}
\usetikzlibrary{shapes.geometric, arrows, fit,automata, positioning,shapes}
\tikzstyle{startstop} = [rectangle, rounded corners, 
minimum width=3cm, 
minimum height=1cm,
text centered, 
draw=black, 
fill=red!30]

\tikzstyle{io} = [trapezium, 
trapezium stretches=true, 
trapezium left angle=70, 
trapezium right angle=110, 
minimum width=3cm, 
minimum height=1cm, text centered, 
draw=black, fill=blue!30]

\tikzstyle{process} = [rectangle, 
minimum width=3.5cm, 
minimum height=1.15cm, 
text centered, 
text width=3cm, 
draw=black, 
fill=orange!30]

\tikzstyle{decision} = [diamond, 
minimum width=3cm, 
minimum height=1cm, 
text centered, 
draw=black, 
fill=green!30]
\tikzstyle{arrow} = [thick,->,>=stealth]
\tikzstyle{myfit} = [draw,dashed,blue, inner xsep=10pt, inner ysep=15pt, rounded corners=5pt]
\tikzstyle{mytitle}=[draw,densely dashed,blue, fill=red!50, inner sep=5pt, right, yshift=-3.5cm, xshift = -1.825cm]
\newtheorem{defn}{Definition}
\newtheorem{assumption}{Assumption}
\newtheorem{theorem}{Theorem}
\newtheorem{proposition}{Proposition}
\newtheorem{lemma}{Lemma}
\newtheorem{claim}{Claim}
\newtheorem{cor}{Corollary}
\newtheorem{note}{NOTE}

\title{\LARGE \bf
	Fast Randomized Subspace System Identification for Large I/O Data 
}

\author{Vatsal Kedia and Debraj Chakraborty
	\thanks{The authors are with the Department of Electrical Engineering, Indian Institute of Technology	Bombay, Mumbai, Maharashtra, India. {\tt\small Email: \{vatsalkedia, dc\}@ee.iitb.ac.in}}%
}

\begin{document}
	\maketitle
	\begin{abstract}
		In this article, a novel fast randomized subspace system identification method for estimating combined deterministic-stochastic LTI state-space models, is proposed. The algorithm is especially well-suited to identify high-order and multi-scale systems with both fast and slow dynamics, which typically require a large number of input-output data samples for accurate identification using traditional subspace methods. Instead of working with such large matrices, the dataset is compressed using randomized methods, which preserve the range-spaces of these matrices almost surely. A novel identification algorithm using this compressed dataset, is proposed. This method enables the handling of extremely large datasets, which often make conventional algorithms like N4SID, MOESP, etc. run out of computer memory. Moreover the proposed method outperforms these algorithms in terms of memory-cost, data-movement, flop-count and computation time for cases where these algorithms still work in-spite of large data sizes. The effectiveness of the proposed algorithm is established by theoretical analysis and various real and simulated case studies.
	\end{abstract}

	
	\section{INTRODUCTION}
	
	Due to the easy availability of sensor readings and the simultaneous development of highly precise identification algorithms (e.g. see \cite{Ljung_book} and the references therein), data-driven system identification has acquired widespread adoption. Undoubtedly, permanent storage on local hard-drives or on the cloud has become cheap and accessible \cite{Storage}. On the other hand, advances in industrial sensor technology have made long sustained recordings of industrial processes feasible. This has led to wide availability of large amounts of system level input-output data. This data can potentially be used for developing accurate models of the underlying dynamical systems. Conventional system identification algorithms running on personal computers, require to access the stored data by copying it to temporary storage such as random access memory (RAM) and then to processor cache memory (PCM). However, processor caches remain relatively expensive and of limited capacity (see Fig.\ref{fig:memory}). This necessitates frequent transfer of portions of the data between RAM and PCM, thereby degrading algorithm performance. As a result, reducing data transfers is often key to accelerating numerical algorithms in real world \cite{tropp_2020} (eg. see LAPACK \cite{lapack}, BLAS \cite{blas} for modern numerical linear algebra algorithms implementing such optimized transfers).
	
	Input-output data is collected from real-time processes in sampled form. Tuning the sampling frequency and the time period over which the data is collected is a simple way to regulate data size for system identification.  It is known that sampling time plays a very crucial role in identifying the underlying model \cite{Ljung_book}. Conventionally, the sampling frequency is chosen to be around ten times the “guessed" bandwidth of the system (\cite{Ljung_book}, pg. 452). In other words, the sampling frequency is determined by the fastest eigenvalue of the system. On the other hand, the total duration of the collected data is guided by the slowest eigenvalue \cite{Schoukens_explength_2015}.  Hence for unknown systems which might have both very slow, as well as very fast modes, the total number of samples required to identify all the modes becomes very large. For example, in PHWR nuclear reactors the fastest time-constants are in the order of 0.05 seconds, while the slowest oscillations due to Xenon occur over 20 hours \cite{nuclear}. A quick calculation shows that, sampling at 20 times per second for three days (roughly four times the slowest time constant) leads to a single signal producing a $20 \times 60 \times 60 \times 24 \times 3 = 5184000$ sized vector. Other examples exhibiting fast and slow dynamics include, blast furnances \cite{blastfurnace}, reactive distillation columns \cite{distillation_column}, batteries \cite{battries} etc. Sub-sampling and/or reducing the recording duration risks mis-identification of the modes. This results in system identification tasks with necessarily very large input-output data sizes.
	\begin{figure}[h]
		\centering
		\includegraphics[width=0.45\textwidth]{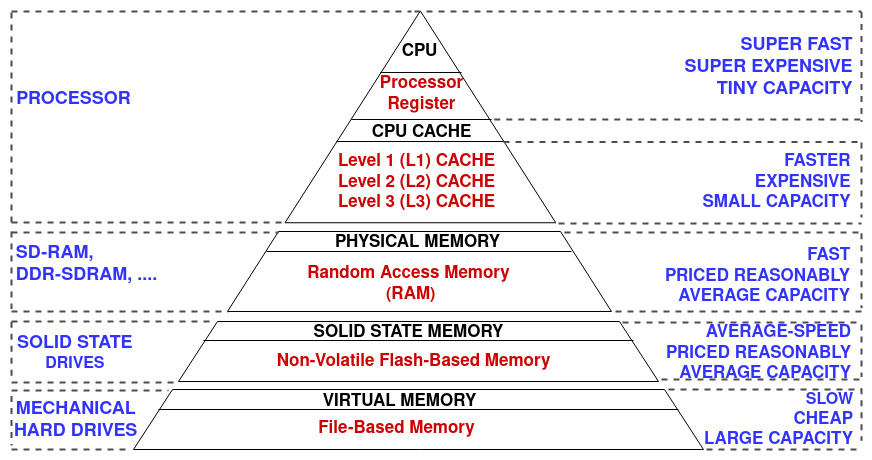}
		\caption{Memory heirarchy \cite{Memoryhierarchy}}
		\label{fig:memory}
	\end{figure}
	
	While various approaches to modeling and identification are used in practice (e.g. see \cite{ljung2010perspectives}
	and the references therein), for large-scale systems, state-space models are preferred. An early example of identification of state space models of linear dynamical systems is the Ho-Kalman algorithm \cite{HoKalman} based on the impulse response data. Subsequently, subspace identification methods based on input-output data were developed (\cite{Ljung_book}, \cite{SubspaceId_book}).   Variations of subspace based methods include Canonical Variate Analyis (CVA) \cite{CVA_Larimore}, Multivariable Output-Error State Space (MOESP) \cite{MOESP_Verhaegan} method and Numerical algorithms for State Space System Identification (N4SID) \cite{N4SID_Overschee_detsto}.
	All subspace identification algorithms are based on finding system related subspaces from I/O data matrices through a series of projections and decompositions. Hence these algorithms require QR and/or SVD decomposition to be performed on various matrices composed of recordings of system inputs and outputs. When dealing with large sample sizes as mentioned above, this leads to prohibitively large time/space complexity as well as increased overhead due to frequent data transfers between PCM and RAM. In systems with moderate RAM capacity, identification of these large and multi-scale systems might encounter ``out-of-memory" issue due to insufficient slow memory (RAM) space.  
	
	Several attempts to increase the efficiency of subspace identification algorithms exist in the literature. Fast subspace identification has been investigated in  \cite{Cho_Kailath_1994}, \cite{Cho_Kailath_1995}, \cite{Sima_2004}, \cite{Dohler_2012} where the focus has been only on the usage of faster QR decomposition methods. The methods proposed have been shown to produce inaccurate estimates for some commonly encountered types of data matrices (e.g. see example 2 in \cite{Sima_2001}). Parameter estimation technique which is independent of data size was attempted in \cite{Katayama_LQfullmodel}. However in all these articles, algorithmic performance was typically characterized by flop counts, while very less or no attention was given on memory usage and data movement, which play crucial roles in determining the computation time for large data matrices (\cite{Demmel_seqQR}, \cite{Demmel_report}). 
	
	Randomized matrix approximation algorithms have gained significant popularity due to their wide applications in large scale data analysis and scientific computing (e.g. see \cite{tropp_2020}, \cite{mahoney_2011}, \cite{halko_2011} and references therein). Subspace identification based on randomized approximation of subspaces has been
	attempted in  \cite{kramer2018_CURHankel}, \cite{minster2021_ERARSVD} and \cite{anderson2022_ESRSVD}.
	All of these papers have assumed the availability of impulse-response data (similar to the Ho-Kalman
	algorithm \cite{HoKalman}). In \cite{kramer2018_CURHankel}, the subspaces were obtained by performing
	SVD on the CUR approximation of the data Hankel matrix. In \cite{minster2021_ERARSVD} and \cite{anderson2022_ESRSVD}, the factorization of the Hankel
	matrix was done by using subspace iteration based randomized SVD.
	In all these articles, significant speedup was obtained as compared to conventional
	Ho-Kalman algorithm while having comparable accuracy in estimates. However, these articles have not suggested any efficient method for estimating the Markov parameters. Consequently the proposed methods are of limited practical applicability.
	
Hence, in this paper, we propose a fast randomized subspace identification algorithm which can handle: (i) the out-of-memory issue and, (ii) extremely large input-output datasets with limited cache capacity at a much higher speed as compared to conventional methods. In the proposed algorithm, we use a randomized range estimation method to approximate the range-space of various data matrices. The primary idea behind the algorithm is to perform iterative/sequential block multiplication between the given (fat/wide) data matrix and a suitably chosen (tall) random matrix of appropriate size. This operation compresses the (large) data matrix to a much smaller (almost square) matrix, while preserving the range space of the original data matrix almost surely (a.s.). Subsequently the typical steps in the subspace identification algorithm such as QR factorization and SVD are performed on the smaller compressed data matrix, instead of the original data matrix. This leads to: 
	(i) improved RAM runtime usage (space complexity), (ii) decreased flop count, and (iii) reduced data transfers between RAM and PCM.   In addition, the estimation of $\{A, B, C, D, K\}$ matrices using conventional methods like N4SID and MOESP depends on the original I/O data size, making it computationally expensive for large data sizes. Hence, a new method to identify $\{B, D, K\}$ matrices is introduced which is independent of the data size and hence takes lesser flops and computation-time as compared to the known methods. Our algorithm successfully identifies the system even when conventional subspace methods such as N4SID and MOESP fail due to out-of-memory issue and outperforms these algorithms in situations where these algorithms can still function in-spite of  large data sizes. Our main contributions are,
	\begin{enumerate}
		\item Combined deterministic-stochastic system identification for large and multi-scale LTI systems with large I/O data sizes. 
		\item Streaming data compression to handle out-of-memory issue.
		\item Fast QR and/or SVD decomposition due to smaller matrix dimension leading to reduced flops and data movement between RAM and PCM. 
		\item Lesser runtime RAM usage due to iterative block update in all computation.
		\item Fast $\{A, B, C, D, K\}$ estimate independent of I/O data size.
	\end{enumerate} 
	Preliminary versions of this work have been published in  \cite{Vatsal_Fastsub} and  \cite{Vatsal_RandSID}. This paper includes the following major extensions:
	\begin{enumerate}[label=(\roman*)]
		\item Identification and detailed analysis of stochastic part including new Theorems and proofs (section \ref{subsection:section_stochasticident}).
		\item Theoretical analysis for the proposed algorithm proving that the  proposed algorithm is fast as compared to conventional algorithms (section \ref{section:algo_performance}) for large data sizes. 
		\item The proposed algorithm is applied to identify model for a real world pressurized heavy water nuclear reactor (PHWR): Zone power variation (section \ref{section_case_studies}). 
	\end{enumerate}
	
	
	\section{Preliminaries and Problem Formulation} 
	We assume that the input $u(t) \in \mathbb{R}^{m}$ and the output $y(t)  \in \mathbb{R}^{p}$ of the following
	$n^{th}$ order discrete time LTI system shown in \eqref{ssmodel},
	are recorded up to $N_{t}$ samples: i.e. $\{u(i),y(i)\}$ $\forall i\in\{0,1,\hdots,N_{t}-1\}$ are recorded.
	\begin{equation}
		\begin{aligned} & x(t+1)=Ax(t)+Bu(t)+Ke(t)\\
			& y(t)=Cx(t)+Du(t)+e(t)
		\end{aligned}
		\label{ssmodel}
	\end{equation}
	Here $e(t)\in\mathbb{R}^{p}$ is known as the innovations process vector
	and is assumed to be a white noise sequence with zero mean and finite
	covariance i.e. $\mathbb{E}\{e(t_{1})e^{T}(t_{2})\}=\eta\delta_{t_{1}t_{2}}$
	for all time instants $t_{1}$ and $t_{2}$, where, $\eta \in \mathbb{R}^{p \times p} > 0$ and
	$\delta$ is the Kronecker delta function. The system parameters $\{A,B,C,D\}$
	are of appropriate dimensions: $A\in\mathbb{R}^{n\times n}$, $B\in\mathbb{R}^{n\times m}$,
	$C\in\mathbb{R}^{p\times n}$, $D\in\mathbb{R}^{p\times m}$ while
	the Kalman gain is denoted by $K\in\mathbb{R}^{n\times p}$.
	
	The objective of any subspace identification algorithm is to estimate
	the model order $n$ and system parameters $\{A,B,C,D,K\}$ up to similarity
	transforms. In the next subsection we briefly review some conventional
	subspace identification algorithms based on \cite{LjungSID}.	
	\subsection{Conventional subspace algorithms}\label{section_conv}
	
	First, a prediction horizon $k$ is chosen such that $k>n$ and the block
	size $N:=N_{t}-2k+2$ is defined. Let us denote a block Hankel matrix based on input sequence $\{u(i)\}$ as follows: 
	\begin{equation*}
		U_{i|i+k-1} :=\begin{bsmallmatrix}u(i) & u(i+1) & \hdots & u(i+N-1)\\
			u(i+1) & u(i+2) & \hdots & u(i+N)\\
			\vdots & \vdots & \ddots & \vdots\\
			u(i+k-1) & u(i+k) & \hdots & u(i+k+N-2)
		\end{bsmallmatrix}
	\end{equation*}
	Now, the past input block Hankel matrix is defined as $U_{p} :=U_{0|k-1} \in \mathbb{R}^{km \times N}$ by substituting $i=0$ above while the corresponding future input matrix is defined as $U_{f} := U_{k|2k-1} \in \mathbb{R}^{km \times N}$ (substituting $i=k$). Similarly we define the output block Hankel matrices $Y_{p},Y_{f}\in\mathbb{R}^{kp\times N}$ using
	the the past/future output data $\{y(i)\}$. Although no recordings of noise are
	assumed to be available, for the sake of notational convenience,
	similar matrices are also defined for the corresponding past and future
	innovations processes: $E_{p},E_{f}\in\mathbb{R}^{kp\times N}$. The
	past input and output data is combined into $W_{p}:=\begin{bmatrix}U_{p}^{T} & Y_{p}^{T}\end{bmatrix}^{T}\in\mathbb{R}^{k(m+p)\times N}$. 
	We further denote, $\Theta_{k} \in \mathbb{R}^{kp\times n}$ as the extended observability matrix,
	$\Psi_{k} \in \mathbb{R}^{kp\times km}$ as impulse response Toeplitz matrix, and $\Phi_{k} \in \mathbb{R}^{kp\times kp}$ as
	noise impulse response Toeplitz matrix as shown below.
	\begin{align*}\label{matrix_structure}
		\Theta_{k} & =\begin{bmatrix}C\\
			CA\\
			CA^{2}\\
			\vdots\\
			CA^{k-1}
		\end{bmatrix} \hspace{0.5cm} \Psi_{k}=\begin{bmatrix}D & 0 & \hdots & 0\\
			CB & D & \ddots & \vdots\\
			\vdots & \ddots & \ddots & 0\\
			CA^{k-2}B & \hdots & CB & D
		\end{bmatrix}\\
		& \Phi_{k}=\begin{bmatrix}I & 0 & \hdots & 0\\
			CK & I & \ddots & \vdots\\
			\vdots & \ddots & \ddots & 0\\
			CA^{k-2}K & \hdots & CK & I
		\end{bmatrix}
	\end{align*}
	\begin{assumption}\label{assumption:system} \cite{LjungSID} The following are assumed: 
		\begin{enumerate}[label=(\roman*)]
			\item The input $u(t)$ is persistently exciting of order $2k$. 
			\item The input $u(t)$ is uncorrelated with innovations $e(t)$. 
			\item No feedback from the output $y(t)$ to the input $u(t)$ exists. 
			\item Eigenvalues of $(A-KC)$ are stable. 
			\item The pair $\{A,C\}$ is observable and the pair $\{A,[B\hspace{0.1cm}K]\}$ is controllable. 
			
		\end{enumerate}
	\end{assumption}
	\begin{defn}
		\label{def:stochasticorthogonalproj}(\cite{SubspaceId_book} pp. 28)
		Let $A\in\mathbb{R}^{p\times j}$, $B\in\mathbb{R}^{q\times j}$ and
		$\mathbb{E}_{j}[.]$ denote the average over one-infinitely long experiment $(j \to \infty)$. In stochastic framework,
		the orthogonal projection of row-space of $A$ onto row-space of $B$
		is defined as 
		\begin{equation}\label{expectation_sto}\begin{aligned}
			A/B&=\mathbb{E}_{j}[AB^{T}]\mathbb{E}_{j}[BB^{T}]^{\dagger}B \approx \bigg[\frac{1}{j}AB^{T}\bigg]\bigg[\frac{1}{j}BB^{T}\bigg]^{\dagger}B\\
			&=[AB^T][BB^T]^{\dagger}B
		\end{aligned}
		\end{equation}
	\end{defn}
	Let $X_{f} \in \mathbb{R}^{n \times N}$ denote the future state sequence defined as $X_{f} := \begin{bmatrix} x(k) & x(k+1)& \hdots & x(k+N-1)\end{bmatrix}$. Using \eqref{ssmodel} recursively and with the data matrices as defined
	above, we get, 
	\begin{equation}
		\begin{aligned} & Y_{f}=\Theta_{k}X_{f}+\Psi_{k}U_{f}+\Phi_{k}E_{f}.\end{aligned}
		\label{Yfuture}
	\end{equation}
	
	Let $\bar{A} := A - KC$, $\bar{B} := B - KD$, $\Upsilon_{k} := \begin{bmatrix} \bar{A}^{k-1}\bar{B} &  \bar{A}^{k-2}\bar{B} & \hdots & \bar{B}\end{bmatrix} \in \mathbb{R}^{n \times km}$ be the modified reversed extended controllability matrix and $\Upsilon_{k}^{e} := \begin{bmatrix} \bar{A}^{k-1}K & \bar{A}^{k-2}K & \hdots & K\end{bmatrix} \in \mathbb{R}^{n \times kp}$
	be the modified reversed extended stochastic controllability matrix. Then, under the assumptions listed above and for large prediction horizons
	$k$, it can be shown \cite{LjungSID} that $X_{f}=L_{p}W_{p}$
	for $L_{p}:=\begin{bmatrix}{\Upsilon}_{k} & {\Upsilon}_{k}^{e}\end{bmatrix}\in\mathbb{R}^{n\times k(m+p)}$. Thereby \eqref{Yfuture} reduces to 
	\begin{equation}
		Y_{f}=\Theta_{k}L_{p}W_{p}+\Psi_{k}U_{f}+\Phi_{k}E_{f}.\label{Yfuturesto}
	\end{equation}
	Subspace algorithms uses orthogonal and/or oblique projections to
	extract subspaces that contains system related information like the extended observability matrix ($\Theta_{k}$) and/or a Kalman state sequence ($X_{f}$).
	One of the methods to accomplish this is to orthogonally project $Y_{f}$
	onto the joint span of $W_{p}$ and $U_{f}$ as follows (using Definition \ref{def:stochasticorthogonalproj}): 
	\begin{equation}\small
		\begin{aligned}Y_{f}/\begin{bsmallmatrix}W_{p}\\
				U_{f}
			\end{bsmallmatrix} & =\Theta_{k}L_{p}W_{p}/\begin{bsmallmatrix}W_{p}\\
				U_{f}
			\end{bsmallmatrix}+\Psi_{k}U_{f}/\begin{bsmallmatrix}W_{p}\\
				U_{f}
			\end{bsmallmatrix}+\Phi_{k}\underbrace{E_{f}/\begin{bsmallmatrix}W_{p}\\
				U_{f}
			\end{bsmallmatrix}}_{= 0 }\\
			& =\Theta_{k}{L}_{p}W_{p}+\Psi_{k}U_{f}
		\end{aligned}
		\label{Yfuturefinal}
	\end{equation}
	The third term in the above equation is zero because $E_{f}$ is
	not correlated with $W_{p}$ and $U_{f}$ in open-loop \cite{LjungSID}.
	
	Now using oblique projections, $Y_{f}$ orthogonally projected onto the joint span of $W_{p}$ and $U_{f}$  can also be written as, 
	\begin{equation}\label{Yf_ortho}
		\begin{aligned}Y_{f}/\begin{bmatrix}W_{p}\\
				U_{f}
			\end{bmatrix} & =Y_{f}/_{U_{f}}W_{p}+Y_{f}/_{W_{p}}U_{f}\\
			& =\underbrace{\bar{L}_{p}W_{p}}_{=:\zeta}+L_{U_{f}}U_{f}
		\end{aligned}
	\end{equation}
	On comparing equation \eqref{Yfuturefinal} and \eqref{Yf_ortho}
	we get $\bar{L}_{p} = \Theta_{k}L_{p}$ and $\zeta :=\bar{L}_{p}W_{p} \in \mathbb{R}^{kp \times N}$. An efficient way to calculate the oblique projection is by using the $QR$ decomposition. 
	\subsubsection{QR step}
	
	Perform LQ decomposition on $H:=\begin{bmatrix}U_{f}^{T} & W_{p}^{T} & Y_{f}^{T}\end{bmatrix}^{T}\in\mathbb{R}^{2k(m+p)\times N}$, or equivalently perform QR decomposition on $H^{T}$, to obtain the decomposition of $Y_{f}$ as shown in \eqref{Yfuturesto}. 
	\begin{equation}\label{Hmatrix}
		\begin{aligned}
			H &= \begin{bmatrix}U_f\\W_{p}\\Y_f\end{bmatrix}
			&= \underbrace{\begin{bmatrix} R_{11} & 0 & 0 \\R_{21} & R_{22} & 0 \\ R_{31} & R_{32} & R_{33} \end{bmatrix}}_{L} \begin{bmatrix}
				Q_{1}^T \\ Q_{2}^T \\Q_{3}^T
			\end{bmatrix}
	\end{aligned} \end{equation}
	From \eqref{Hmatrix},
	\begin{equation}\label{Yf_ortho_data}
		Y_{f}/\begin{bmatrix}W_{p} \\U_{f}\end{bmatrix} = R_{32}R_{22}^{\dagger}W_{p} + (R_{31}-R_{32}R^{\dagger}_{22}R_{21})R^{-1}_{11}U_f
	\end{equation}
	On comparing \eqref{Yf_ortho} and \eqref{Yf_ortho_data},
	\begin{equation}\label{zetaPsik}\begin{aligned}
			& \bar{L}_{p} = R_{32}R_{22}^\dagger
		\end{aligned}
	\end{equation}
	
	Now, it can be shown that using \eqref{Yfuture} and \eqref{Yfuturefinal}, the first terms of $Y_f$ can be equated as, 
	\begin{equation}\label{eqzeta}
		 \underbrace{\Theta_{k} X_f}_{theoretical} = \underbrace{\bar{L}_{p}W_{p}}_{data} =: \zeta.
	\end{equation} 
	In practical implementations of the QR factorization for large matrices (eg. MATLAB, LAPACK \cite{lapack}, etc.),  a sequential version of the QR algorithm \cite{Demmel_report}, which partitions the data into (say) $d$ sized blocks and iteratively computes the Q and R factors, is used.
	\subsubsection{SVD step}
	
	Next we calculate the SVD of $\zeta$ as follows: 
	\begin{equation}
		\begin{aligned}\zeta & =\begin{bmatrix}U_{1} & U_{2}\end{bmatrix}\begin{bmatrix}\Sigma_{1} & 0\\
				0 & \Sigma_{2}
			\end{bmatrix}\begin{bmatrix}V_{1}^{T}\\
				V_{2}^{T}
			\end{bmatrix}\\
			& =U_{1}\Sigma_{1}V_{1}^{T}+\underbrace{U_{2}\Sigma_{2}V_{2}^{T}}_{noise}\\
			& \approx U_{1}\Sigma_{1}V_{1}^{T}=\underbrace{U_{1}\Sigma_{1}^{1/2}}_{{\hat{\Theta}}_{k}}\underbrace{\Sigma_{1}^{1/2}V_{1}^{T}}_{\hat{X}_{f}}
		\end{aligned}
		\label{eqzetasvd}
	\end{equation}
	The second term is ignored assuming that the noise component is negligible
	as compared to the system contribution. The order of the system is
	determined from $rank(\zeta)$ = $rank(\hat{\Theta}_{k})$ = $n$ where $\hat{\Theta}_{k} = U_{1}\Sigma_{1}^{1/2} \in \mathbb{R}^{kp \times n}$ and $\hat{X}_{f} = \Sigma_{1}^{1/2}V_{1}^{T} \in \mathbb{R}^{n \times N}$. For large matrices, modern SVD implementations use block QR algorithms (eg. see LAPACK's dgesvd algorithm \cite{lapack}). Let $\mathcal{R}(.)$ denote the range-space of a matrix. Then:
	\begin{proposition}
		\label{Lemma_zetathetak} \cite{SubspaceId_book} $\mathcal{R}(\zeta)=\mathcal{R}(\Theta_{k})$. 
	\end{proposition}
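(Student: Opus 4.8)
The plan is to exploit the factorization $\zeta = \Theta_{k}X_{f}$ already established in \eqref{eqzeta} and to reduce the claim to a pair of rank statements. First I would record the trivial inclusion: since every column of $\zeta = \Theta_{k}X_{f}$ is a linear combination of the columns of $\Theta_{k}$, we have $\mathcal{R}(\zeta) \subseteq \mathcal{R}(\Theta_{k})$ with no hypotheses whatsoever. The entire content of the proposition therefore lies in showing that this inclusion cannot be strict, which I would do by a dimension count: it suffices to prove $\operatorname{rank}(\zeta) = \operatorname{rank}(\Theta_{k}) = n$, since a subspace contained in another subspace of the same finite dimension must coincide with it.

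Next I would pin down the two factor ranks separately. For $\Theta_{k}$, observability of the pair $\{A,C\}$ (Assumption \ref{assumption:system}(v)) together with the choice $k>n$ guarantees that the extended observability matrix has full column rank $n$; this is standard and follows because the first $n$ block rows already constitute a rank-$n$ observability matrix. For $X_{f}$, I would argue that the future state sequence has full row rank $n$. This is the delicate ingredient: it rests on persistency of excitation of order $2k$ (Assumption \ref{assumption:system}(i)), controllability of $\{A,[B\ K]\}$ (Assumption \ref{assumption:system}(v)), and the absence of feedback and input--noise correlation (Assumption \ref{assumption:system}(ii)--(iii)), which together ensure that the states sampled over the horizon excite all $n$ directions of $\mathbb{R}^{n}$, so that $\operatorname{rank}(X_{f}) = n$.

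With both factor ranks equal to $n$, I would invoke Sylvester's rank inequality, $\operatorname{rank}(\Theta_{k}X_{f}) \ge \operatorname{rank}(\Theta_{k}) + \operatorname{rank}(X_{f}) - n = n + n - n = n$, while submultiplicativity of rank gives $\operatorname{rank}(\Theta_{k}X_{f}) \le \min\{\operatorname{rank}(\Theta_{k}),\operatorname{rank}(X_{f})\} = n$. Hence $\operatorname{rank}(\zeta) = n = \operatorname{rank}(\Theta_{k})$. Combining this equality of dimensions with the inclusion $\mathcal{R}(\zeta) \subseteq \mathcal{R}(\Theta_{k})$ from the first step yields $\mathcal{R}(\zeta) = \mathcal{R}(\Theta_{k})$, as claimed.

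I expect the main obstacle to be the full-row-rank claim for $X_{f}$. The inclusion and the Sylvester argument are essentially mechanical once the two factor ranks are in hand; establishing $\operatorname{rank}(X_{f}) = n$ is where the system-theoretic assumptions genuinely enter, and a careful proof would have to trace how persistency of excitation propagates through the state recursion in \eqref{ssmodel} to guarantee that the sampled states are not confined to a proper subspace of $\mathbb{R}^{n}$.
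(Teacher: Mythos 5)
Your argument is correct and is essentially the standard proof of this result: the paper itself offers no proof (it simply cites \cite{SubspaceId_book}), and the cited reference establishes the claim exactly as you do, via the factorization $\zeta=\Theta_{k}X_{f}$, full column rank of $\Theta_{k}$ from observability with $k>n$, and full row rank of $X_{f}$ from persistency of excitation and controllability. The only piece you defer --- $\operatorname{rank}(X_{f})=n$ --- is indeed the substantive system-theoretic lemma, and it is precisely the part the reference proves from Assumption \ref{assumption:system}; note also that once $\Theta_{k}$ has full column rank, $\operatorname{rank}(\Theta_{k}X_{f})=\operatorname{rank}(X_{f})$ follows directly, so Sylvester's inequality, while valid, is not even needed.
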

	
	\subsection{Estimating system matrices} \label{system_matrices}
	A brief overview of the N4SID and MOESP class of algorithms \cite{SubspaceId_book} is presented below. 
	\subsubsection{N4SID}(\cite{SubspaceId_book}, \cite{N4SID_Overschee_detsto},  \cite{Katayama_book})
	In this method, the system parameters are found by using the estimate of the state sequence matrix $X_{f}$ introduced in \eqref{eqzetasvd} (denoted by $\hat{X}_f$) and solving the stacked state space equations in the least squares sense. 
	\subsubsection{MOESP}
	This method uses the estimate of the extended observability matrix $\hat{\Theta}_{k}$ (see \eqref{eqzetasvd}) to estimate the $\{A,C\}$ pair: $A$ is estimated using the shift invariance property of $\Theta_{k}$ as,
	\begin{equation}\label{eqAmatrix}\begin{aligned}
			\hat{A} = arg \underset{A \in \mathbb{R}^{n \times n}}{min} || \hat{\Theta}^{\downarrow}_{k}A - \hat{\Theta}^{\uparrow}_{k}||_{F}= (\hat{\Theta}_{k}^{\downarrow})^{\dagger}\hat{\Theta}_{k}^{\uparrow}
		\end{aligned}	
	\end{equation} 
	and $\hat{C} = \hat{\Theta}_{k}(1:p,:)$, where, $\hat{\Theta}_{k}^{\downarrow} := \hat{\Theta}_{k}(1:p(k-1),:)$ and $\hat{\Theta}_{k}^{\uparrow} := \hat{\Theta}_{k}(p+1:kp,:)$. The estimation of $\{B, D,K\}$ for MOESP class of algorithms makes use of full data of size $N$. For example, in \cite{Ljung_book}, an approach to estimate $\{B, D\}$ is given by
	\begin{equation}
		arg \hspace{0.1cm}\underset{B,D}{min} \frac{1}{N} \sum_{t=1}^{N} || y(t) - \hat{C}(qI - \hat{A})^{-1}Bu(t) - Du(t)||
	\end{equation}
	Several other approaches to estimate $\{B, D, K\}$ can be found in \cite{Ljung_book}, \cite{MOESP_Verhaegan}, \cite{Ljung_MATLAB}, \cite{MOESP_bd_Verhaegan} and \cite{MOESP_Verhaegen_stochastic}.
	In summary, the major steps of conventional subspace identification are presented in Algorithm \ref{algo:conv_SID}.
	\begin{algorithm}\label{algo:conv_SID}
		\textbf{Input:} Load input-output data into RAM.\\
		Formulate data matrices from input-output data in RAM. \\
		Perform LQ decomposition on $H = \frac{1}{\sqrt{N}}\begin{bmatrix}
			U_{f}^{T} & U_{p}^{T} & Y_{p}^{T} & Y_{f}^{T}
		\end{bmatrix}^{T} \in \mathbb{R}^{2k(m+p) \times N}$ (see \eqref{Hmatrix}). Use sequential QR if $H$ is large.\\
		Perform SVD on $\zeta = \bar{L}_pW_{p} \in \mathbb{R}^{kp \times N}$ to estimate $\Theta_{k}$ and/or $X_{f}$ (see \eqref{eqzetasvd}). Use sequential SVD if $\zeta$ is large.\\
		Estimate $\{A, B, C, D, K\}$ using N4SID or MOESP type algorithms (refer section \ref{system_matrices}).\\
		\textbf{Output:} Estimated $\{A, B, C, D, K\}$.
		\caption{Conventional subspace identification}
	\end{algorithm}
	
	\begin{assumption}\label{assumption:data} Throughout this paper
		we assume: 
		\begin{enumerate}[label=(\roman*)]
			\item Input-output data (i.e. $U \in \mathbb{R}^{m \times N_{t}}$ and $Y \in \mathbb{R}^{p \times N_{t}}$) fits into RAM (slow memory). We ignore the space occupied by the raw I/O data (U and Y) in all computations related to space-complexity in this paper.
			\item There are no sub-levels in cache memory.
		\end{enumerate}
	\end{assumption}
	\begin{note}\label{Case:RAM_space}
		In this article we consider two cases depending on size of 	the matrix $H \in \mathbb{R}^{2k(m+p) \times N}$ with $N >> 2k(m+p)$:
		\begin{enumerate}
			\item  The matrix $H$ fit into RAM but not in cache memory.
			\item The matrix $H$ does not fit into RAM (out-of-memory).
		\end{enumerate} 
	\end{note}
	The case 2 above is often encountered in many modern applications, such as high-dimensional systems and systems with widely separated time-scales, as outlined in the introduction. 
	\subsection{Algorithm performance}
	We will evaluate the algorithm performance based on a combination of the following performance metrics. 
	\begin{enumerate}
		\item Memory-cost (denoted by $M$) is defined as the RAM space needed for the algorithm to identify the system parameters (see Assumption \ref{assumption:data}). Let one unit of space be required to store one word \cite{Demmel_seqQR}. Then, e.g., the matrix $A \in \mathbb{R}^{m \times n}$ requires $mn$ units of space.
		\item Flop count (denoted by $F$) is defined as the number of floating point operations i.e. the number of additions and multiplications required to estimate the model.
		\item Data movement \cite{Demmel_report} (denoted by DM) between RAM (slow memory) and PCM (fast memory) during a run of the algorithm: This is characterized by two variables namely, $(a)$ $\#words$, denoting the total number of read and written words between RAM and PCM, and $(b)$ $\#messages$, denoting the total data packets moved between RAM and PCM.
	\end{enumerate}
	The actual computation-time can be expressed as (\cite{Demmel_seqQR}, \cite{Demmel_report}),
	\begin{equation}\label{Talgo}
		T_{algo} = F \times \gamma + \#messages \times \alpha + \#words \times \beta
	\end{equation} 
	where, $\gamma$ denotes time per flop, $\alpha$ denotes latency and $\beta$ is inverse of the memory bandwidth. We assume read and write bandwidth between slow and fast memory are same \cite{Demmel_seqQR}. The last two terms of the above equation constitute the communication time between slow (RAM) and fast memory (PCM). However, $\alpha, \beta$ and $\gamma$ are machine dependent and not known in most real-life situations.  Hence, we use the following proxy representing a combination of computation-time costs as well as memory-cost.
	
	\begin{defn}
	    The algorithm cost ($C$) is defined as:
	\begin{equation}\label{Algo_cost}
		C = M + F + \underbrace{(\#words + \#messages)}_{=: DM}
	\end{equation}
	\end{defn}
	
	\subsection{Main issues and Problem Formulation}\label{section_main_issues}
	For the conventional methods (see section \ref{section_conv} and \ref{system_matrices}), the cost defined in \eqref{Algo_cost} can increase substantially for high-dimensional system and large $N$:
	\begin{enumerate} \item Memory-cost ($M_{conv}$): The space required for QR decomposition (see \eqref{Hmatrix}) is $2k(m+p)N$ units, to store the matrix $W_{p}$ is $k(m+p)N$ units and to store $\zeta$ is $kpN$ units (see Algorithm \ref{algo:conv_SID}). Adding all of them, we get, 
		\begin{equation}\label{Mconv}
			M_{conv} \approx k(3m+4p)N
		\end{equation} 
		We have not considered the memory space for the matrices that does not contains $N$ as one of its dimensions.
		\item Flop-count ($F_{conv}$): The number of flops for conventional algorithms comprises of the flops associated with the three major computation steps as shown in table \ref{tab:flopcount_conv}. 
		\begin{table}[h]
			\setlength\extrarowheight{2pt}
			\begin{center}
				\caption{Flop-count for Conventional method}
				\label{tab:flopcount_conv}
				\scalebox{1}{
					\begin{tabular}{|l|l|}
						\hline
						\multicolumn{1}{|c|}{\textbf{Algorithm steps}} & \multicolumn{1}{c|}{\textbf{Flop-count}}                  \\ \hline
						\begin{tabular}[c]{@{}l@{}}QR on $H^{T}$   \end{tabular}          &   \begin{tabular}[c]{@{}l@{}}$8k^2(m+p)^2N - \frac{16}{3}k^3(m+p)^3$ \end{tabular}          \\ \hline
						\begin{tabular}[c]{@{}l@{}}Matrix mul. $(L_{p}W_{p})$ \\+ SVD on $\zeta$ \end{tabular}                      &          \begin{tabular}[c]{@{}l@{}}$\underbrace{2k^2p(m+p)N}_{\text{mat mul.}} + \underbrace{2k^2p^2N + 2k^3p^3}_{\text{SVD}}$ \end{tabular}                                   \\ \hline
						\begin{tabular}[c]{@{}l@{}}Estimating \\$\{A, B, C, D, K\}$ \end{tabular}                            &          $\approx (4n^2+(6m+2p)n+2m^2+2pm)N$                             \\ \hline
				\end{tabular}}
			\end{center}
		\end{table}  
		
		$F_{ABCDK}$: The N4SID algorithm uses least squares with $N$ equations and estimation of $\{B, D, K\}$ by MOESP approach uses full data i.e. $N$ equations, leading to approximately $\mathcal{O}(n^2N)$ computations \cite{Ljung_book}, \cite{MOESP_Verhaegen_stochastic}. Therefore, using Table \ref{tab:flopcount_conv} above we can write: \begin{equation}\label{FconvABCDK}\begin{aligned}
				F_{conv}  &= F_{QR} +F_{SVD} + F_{ABCDK}\\
		&\approx \big(8k^2(m+p)^2+2k^2p(m+p)+2k^2p^2\\&+4n^2+(6m+2p)n+2m^2+2pm \big)N\\&+2k^3p^3-(16/3)k^3(m+p)^3
			\end{aligned}
		\end{equation}
		\item Data moved: The $\#words$ and $\#messages$ moved will be incurred mainly due to the (sequential) QR and/or SVD steps. In all conventional methods, data movement for QR decomposition is lower bounded by $\frac{k^2N}{\sqrt{W}}$ (see \eqref{Hmatrix} and Appendix $A$ in \cite{Demmel_seqQR}), where $W$ denote the size of fast memory (in terms of floating point words).
	\end{enumerate} 
	
	From the above analysis, it is evident that the algorithm cost (C) defined above increases at-least linearly with $k$ and $N$ and can be prohibitively large for applications with large multi-scale systems (see Introduction). To address this issue we formulate the following problem:\\
	\textbf{Problem 1.} Design a streaming randomized system identification algorithm which reduces cost (C) defined in \eqref{Algo_cost}. This algorithm should simultaneously minimize the error between predicted and actual output. 
 
	

	\section{Preliminary Results: Matrix Range Approximation}\label{section:RNLAtools} 
	The key idea behind the proposed algorithm
	is approximating the range-spaces of various data matrices. This approximations
	are achieved via right matrix multiplication operation of the original
	matrices (fat/wide) with random gaussian iid matrices (tall). In this section, we guarantee
	the preservation of the range-space of a matrix a.s. under
	such an operation. 
	\begin{defn}
		\cite{bryc_rotation}, \cite{normal_rotation} A probability density function $g(.)$ over
		$\mathbb{R}^{n}$ is said to be rotationally invariant, if $\forall X\in\mathbb{R}^{n}$
		and for all rotation matrices $R\in\mathbb{R}^{n\times n}$ such that $RR^{T} = R^{T}R = I_{n}$,
		$g(RX)=g(X)$. 
	\end{defn}
	\begin{lemma}
		\label{Lemma_rotationinvariant}\cite{normal_rotation} Let $X\in\mathbb{R}^{n}$
		such that each $x_{i}\sim\mathcal{N}(0,1)$ is chosen independently
		$\forall i\in\{1,2,\hdots,n\}$. Then the joint density function of
		$X$ is rotationally invariant. 
	\end{lemma}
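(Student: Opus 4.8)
The plan is to write the joint density of $X$ explicitly and exploit the fact that, by independence, it depends on the coordinate vector only through its Euclidean norm. First I would use the independence of the components $x_i\sim\mathcal{N}(0,1)$ to factor the joint density $g(X)$ as a product of the individual standard normal densities:
\[
g(X) = \prod_{i=1}^{n} \frac{1}{\sqrt{2\pi}}\, e^{-x_i^2/2} = \frac{1}{(2\pi)^{n/2}} \exp\!\left(-\tfrac{1}{2}\sum_{i=1}^{n} x_i^2\right) = \frac{1}{(2\pi)^{n/2}}\, e^{-\|X\|^2/2},
\]
where $\|X\|^2 = X^T X$. The crucial observation is that $g$ is a function of $X$ only through the scalar quantity $\|X\|^2$.

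Next I would compute the norm of the rotated vector $RX$ using the defining orthogonality property $R^T R = I_n$ of a rotation matrix:
\[
\|RX\|^2 = (RX)^T (RX) = X^T R^T R\, X = X^T I_n\, X = X^T X = \|X\|^2,
\]
so that rotation preserves the Euclidean norm of $X$. Combining the two steps then gives the claim immediately: substituting $\|RX\|^2 = \|X\|^2$ into the density yields
\[
g(RX) = \frac{1}{(2\pi)^{n/2}}\, e^{-\|RX\|^2/2} = \frac{1}{(2\pi)^{n/2}}\, e^{-\|X\|^2/2} = g(X),
\]
which is exactly the rotational invariance asserted in the definition.

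There is no genuine obstacle here; the only point requiring care is recognizing that the whole argument hinges on the density depending on $X$ \emph{solely} through $\|X\|^2$, so that the orthogonality relation $R^T R = I_n$ does all of the work. I would emphasize that this reduction is precisely what makes the isotropic standard Gaussian the canonical rotationally-invariant distribution, and that no hypothesis beyond $R^T R = I_n$ is used (in particular $\det R = 1$ is never invoked), so the same proof covers every orthogonal $R$.
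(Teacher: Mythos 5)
Your proof is correct and is the standard argument: the paper itself gives no proof of this lemma (it is stated with a citation to \cite{normal_rotation}), and the factorization of the joint density into $(2\pi)^{-n/2}e^{-\|X\|^{2}/2}$ combined with $\|RX\|=\|X\|$ for $R^{T}R=I_{n}$ is exactly the expected verification of the paper's Definition of rotational invariance. Nothing further is needed.
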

	Let $Z=\begin{bmatrix}Z_{1} & Z_{2} & \hdots & Z_{m}\end{bmatrix}\in\mathbb{R}^{n\times m}$
	such that $Z_{k}\in\mathbb{R}^{n}$ for $k\in[1,2,...,m]$, and $z_{ij}\sim\mathcal{N}(0,1)$
	are chosen independently $\forall i,j$. If $Z_{v}:=\begin{bmatrix}Z_{1}^{T} & Z_{2}^{T} & \hdots & Z_{m}^{T}\end{bmatrix}^{T}\in\mathbb{R}^{mn}$,
	then trivially $Z_{v}\sim\mathcal{N}(0_{mn},I_{mn})$. Define, $Z_{R}:=R^{T}Z$
	such that $R\in\mathbb{R}^{n\times n}$, $R^{T}R=RR^{T}=I_{n}$ and
	$Z_{Rv}:=\begin{bmatrix}RZ_{1}^{T} & RZ_{2}^{T} & \hdots & RZ_{m}^{T}\end{bmatrix}^{T}\in\mathbb{R}^{mn}$ 
	\begin{lemma}
		\label{Lemma_matrix_normalrotation} $Z_{Rv}\sim\mathcal{N}(0_{mn},I_{mn})$. 
	\end{lemma}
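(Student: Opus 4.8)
The plan is to recognize $Z_{Rv}$ as a \emph{single} orthogonal transformation of the fully vectorized vector $Z_v$, and then conclude via the rotational invariance of the standard multivariate normal established in Lemma \ref{Lemma_rotationinvariant}. First I would observe that forming $Z_{Rv}$ by transforming each column of $Z$ by $R^{T}$ and re-stacking in the same order used to build $Z_v$ is precisely the block-diagonal action of $R^{T}$. Concretely, using the identity $\mathrm{vec}(R^{T}Z) = (I_m \otimes R^{T})\,\mathrm{vec}(Z)$ for column-wise vectorization, define
\[
\tilde{R} := I_m \otimes R^{T} = \begin{bmatrix} R^{T} & & \\ & \ddots & \\ & & R^{T} \end{bmatrix} \in \mathbb{R}^{mn \times mn},
\]
with $m$ copies of $R^{T}$ on the diagonal. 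Then directly from the definitions $Z_{Rv} = \tilde{R}\,Z_v$, since the $k$-th block of $\tilde{R}Z_v$ is exactly $R^{T}Z_k$.

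Second, I would check that $\tilde{R}$ is itself a rotation matrix of size $mn$. Because the diagonal blocks are decoupled, the mixed products vanish and
\[
\tilde{R}^{T}\tilde{R} = I_m \otimes (R R^{T}) = I_m \otimes I_n = I_{mn},
\]
where I used the hypothesis $R R^{T} = R^{T}R = I_n$. Hence $\tilde{R}$ is orthogonal on $\mathbb{R}^{mn}$.

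Finally, since $Z_v \sim \mathcal{N}(0_{mn}, I_{mn})$, its joint density $g$ is rotationally invariant by Lemma \ref{Lemma_rotationinvariant}. Applying the orthogonal map $\tilde{R}$ therefore leaves the distribution unchanged: by the standard change-of-variables formula the density of $Z_{Rv} = \tilde{R}Z_v$ at any point $y$ equals $g(\tilde{R}^{-1}y)\,\lvert\det \tilde{R}^{-1}\rvert = g(\tilde{R}^{T}y) = g(y)$, using $\lvert\det \tilde{R}\rvert = 1$ and rotational invariance applied to the orthogonal $\tilde{R}^{T}$. This gives $Z_{Rv} \sim \mathcal{N}(0_{mn}, I_{mn})$, as claimed.

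The only real obstacle is notational rather than mathematical: one must correctly match the stacking convention in the definition $Z_{Rv} := [\,R Z_1^{T}\ \cdots\ R Z_m^{T}\,]^{T}$ to the clean statement $Z_{Rv} = \tilde{R}Z_v$, being careful that the transposes there denote re-stacking the transformed columns $R^{T}Z_k$ (equivalently $RZ_k$, since $R^{T}$ is orthogonal whenever $R$ is). Once the block-diagonal orthogonal structure is correctly identified and its orthogonality verified, the conclusion is immediate from Lemma \ref{Lemma_rotationinvariant}.
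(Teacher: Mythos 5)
Your proof is correct and follows essentially the same route as the paper: both identify $Z_{Rv}$ as the image of $Z_v$ under the block-diagonal orthogonal matrix $\tilde{R}^{T}=\operatorname{diag}(R^{T},\dots,R^{T})$ and then invoke the rotational invariance of the standard Gaussian from Lemma \ref{Lemma_rotationinvariant}. You merely spell out the orthogonality check and the change-of-variables step that the paper leaves implicit.
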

	\begin{proof}
		It is easy to see that $Z_{Rv}=diag(R^{T},R^{T},\hdots,R^{T})Z_{v}=: \tilde{R}^{T}Z_{v}$
		where $\tilde{R}\in\mathbb{R}^{mn\times mn}$. Therefore using Lemma
		\ref{Lemma_rotationinvariant}, $Z_{Rv}\sim\mathcal{N}(0_{mn},I_{mn})$. 
	\end{proof}
	The next Lemma is similar to Theorem 1 in \cite{feng_randomrank}. 
	\begin{lemma}
		\label{Lemma_iid_nn} \label{Cor_iid_mn} Let $A\in\mathbb{R}^{m\times n}$
		be a random matrix whose entries are iid gaussian: $a_{ij}\sim\mathcal{N}(\mu,\,\sigma^{2})$.
		Then $A$ has full rank a.s. i.e. $\mathbb{P}\{rank(A)=min\{m,n\}\}=1$. 
	\end{lemma}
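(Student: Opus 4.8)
The plan is to exploit the fact that loss of full rank is a polynomial (algebraic) condition on the entries of $A$, and that the Gaussian law is absolutely continuous with respect to Lebesgue measure, so that an algebraic condition which is not identically satisfied occurs with probability zero. I would not attempt a direct probabilistic computation of the rank; instead I would bound the bad event by the zero set of a single nontrivial polynomial.

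First I would reduce to the case $m \le n$. Since $\mathrm{rank}(A) = \mathrm{rank}(A^{T})$ and $A^{T}$ again has iid Gaussian entries, it suffices to treat $m \le n$, so that $\min\{m,n\} = m$. In that case $\mathrm{rank}(A) < m$ holds if and only if every $m \times m$ minor of $A$ vanishes. Hence, letting $B$ be the square submatrix formed by the first $m$ columns, I have $\{\det B \ne 0\} \subseteq \{\mathrm{rank}(A) = m\}$, and it is enough to prove $\mathbb{P}\{\det B = 0\} = 0$.

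Next I would view $\det B$ as a polynomial $q$ in the $mn$ real variables $\{a_{ij}\}$. This polynomial is not identically zero: evaluating at the matrix whose first $m$ columns form the identity gives $q = 1$. The key step is then the standard fact that the zero set $\{q = 0\} \subset \mathbb{R}^{mn}$ of a nonzero polynomial has Lebesgue measure zero. I would establish this by induction on the number of variables via Fubini's theorem: in one variable a nonzero polynomial has finitely many roots, hence a measure-zero root set; for the inductive step I would write $q$ as a polynomial in the last variable whose coefficients are polynomials in the remaining variables, observe that outside a measure-zero set of the remaining variables at least one coefficient is nonzero (so that one-dimensional slice has finitely many roots), and integrate.

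Finally I would invoke absolute continuity. Since each $a_{ij}\sim\mathcal{N}(\mu,\sigma^{2})$ independently, the joint law of the entries admits the density $\prod_{i,j}(2\pi\sigma^{2})^{-1/2}\exp\!\big(-(a_{ij}-\mu)^{2}/(2\sigma^{2})\big)$ with respect to Lebesgue measure on $\mathbb{R}^{mn}$; consequently every Lebesgue-null event has probability zero. Applying this to $\{q=0\}$ yields $\mathbb{P}\{\det B = 0\} = 0$, and therefore $\mathbb{P}\{\mathrm{rank}(A) = m\} = 1$. The main obstacle is precisely the measure-zero claim for the zero set of a nonzero polynomial; once that induction is carried out, the reduction and the absolute-continuity step are immediate.
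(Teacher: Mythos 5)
Your proof is correct, and it is essentially the argument behind the result the paper relies on: the paper does not prove this lemma itself but cites Theorem 1 of the Feng--Zhang reference, whose proof is exactly this reduction of rank deficiency to the vanishing of a nontrivial polynomial (a minor's determinant) together with the fact that the zero set of a nonzero polynomial is Lebesgue-null and the Gaussian law is absolutely continuous. The only implicit assumption worth noting is $\sigma^2>0$, which the paper also takes for granted.
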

	\begin{cor}
		\label{Lemma_iid_submat} Let $A\in\mathbb{R}^{m\times n}$ be a random
		matrix whose entries are iid gaussian: $a_{ij}\sim\mathcal{N}(\mu,\,\sigma^{2})$.
		Then any square submatrix $A_{k}\in\mathbb{R}^{k\times k}$ of $A$
		has full rank a.s. i.e. $\mathbb{P}\{rank(A_{k})=k\}=1$. 
	\end{cor}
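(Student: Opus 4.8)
The plan is to reduce the statement directly to Lemma \ref{Lemma_iid_nn}. First I would fix an arbitrary choice of $k$ row indices $i_1 < \cdots < i_k$ and $k$ column indices $j_1 < \cdots < j_k$, and let $A_k$ denote the corresponding $k \times k$ submatrix of $A$. The crucial observation is that the entries of $A_k$ are precisely a sub-collection $\{a_{i_s j_t} : 1 \le s,t \le k\}$ of the entries of $A$. Since the full family $\{a_{ij}\}$ is iid $\mathcal{N}(\mu,\sigma^{2})$ by hypothesis, any sub-collection of it is again a family of iid $\mathcal{N}(\mu,\sigma^{2})$ random variables. Hence $A_k$ is itself a $k \times k$ random matrix whose entries are iid gaussian.

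Next I would apply Lemma \ref{Lemma_iid_nn} to $A_k$ with $m = n = k$. This yields $\mathbb{P}\{rank(A_k) = \min\{k,k\}\} = \mathbb{P}\{rank(A_k) = k\} = 1$, which is exactly the claim for the fixed submatrix.

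Finally, to obtain the statement for every square submatrix simultaneously, I would invoke a union bound. For a fixed size $k \le \min\{m,n\}$ there are only $\binom{m}{k}\binom{n}{k}$ distinct $k\times k$ submatrices, and summing over $k$ there are still only finitely many square submatrices in total. Each rank-deficiency event has probability zero by the previous step, so the event that \emph{some} square submatrix is rank-deficient is a finite union of null events and therefore also has probability zero. Taking complements, all square submatrices have full rank a.s.

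There is essentially no hard step here: the entire argument rests on the elementary fact that a sub-collection of iid random variables is again iid, after which Lemma \ref{Lemma_iid_nn} applies verbatim. The only point requiring a moment of care is the passage from a single fixed submatrix to all submatrices at once, which is handled cleanly by the finiteness of the index set together with finite subadditivity of probability.
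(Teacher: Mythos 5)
Your proposal is correct and follows essentially the same route as the paper: observe that the entries of any fixed square submatrix are again iid gaussian and apply Lemma \ref{Lemma_iid_nn}. The only addition is your final union-bound step making the conclusion hold simultaneously over all (finitely many) square submatrices, which is a harmless and valid strengthening the paper leaves implicit.
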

	\begin{proof}
		Since the elements of any submatrix $A_{k}$ are iid gaussian with
		$a_{ij}\sim\mathcal{N}(\mu,\,\sigma^{2})$, Lemma \ref{Lemma_iid_nn}
		ensures that $rank(A_{k})=k$ a.s. 
	\end{proof}
	\begin{lemma}
		\label{Theorem_rank_preservation} Let $A$ be any $\mathbb{R}^{m\times n}$
		matrix and $B\in\mathbb{R}^{n\times p}$ be a random matrix whose
		elements are iid gaussian with zero mean and unity variance i.e. $b_{ij}\sim\mathcal{N}(0,1)$
		with $m<p<n$. Then $rank(AB)=rank(A)$ a.s. i.e. $\mathbb{P}\{rank(AB)=rank(A)\}=1$. 
	\end{lemma}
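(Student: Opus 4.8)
The plan is to establish the deterministic inequality $\mathrm{rank}(AB)\le\mathrm{rank}(A)$ together with the almost-sure reverse inequality $\mathrm{rank}(AB)\ge\mathrm{rank}(A)$; the two together force equality a.s. The first half needs no randomness: every column of $AB$ is a linear combination of the columns of $A$, so $\mathcal{R}(AB)\subseteq\mathcal{R}(A)$ and hence $\mathrm{rank}(AB)\le\mathrm{rank}(A)$. Writing $r:=\mathrm{rank}(A)$, I note that since $m<n$ we have $r\le m<p$, a fact that will be needed at the end.

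For the reverse inequality I would first peel off the deterministic structure of $A$ via its singular value decomposition $A=U\Sigma V^{T}$, where $U\in\mathbb{R}^{m\times m}$ and $V\in\mathbb{R}^{n\times n}$ are orthogonal and $\Sigma\in\mathbb{R}^{m\times n}$ carries the $r$ positive singular values of $A$. Since $U$ is invertible, $\mathrm{rank}(AB)=\mathrm{rank}(\Sigma V^{T}B)$. The key probabilistic observation is that $V^{T}B$ is again a matrix with i.i.d.\ $\mathcal{N}(0,1)$ entries: this is precisely the rotational-invariance statement of Lemma \ref{Lemma_matrix_normalrotation} applied with $R=V$, i.e.\ left multiplication of a Gaussian matrix by an orthogonal matrix preserves its joint law. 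Set $\tilde{B}:=V^{T}B$, which therefore has the same distribution as $B$.

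Next I would exploit the block structure of $\Sigma$. Writing $\Sigma=\mathrm{diag}(\sigma_{1},\dots,\sigma_{r},0,\dots,0)$ padded with zeros to size $m\times n$, the product $\Sigma\tilde{B}$ retains only the first $r$ rows of $\tilde{B}$, each scaled by a strictly positive $\sigma_{i}$, and annihilates the rest. Hence $\mathrm{rank}(\Sigma\tilde{B})=\mathrm{rank}(\tilde{B}_{1:r,\,:})$, where $\tilde{B}_{1:r,\,:}\in\mathbb{R}^{r\times p}$ denotes the top $r$ rows of $\tilde{B}$. Being a submatrix of an i.i.d.\ Gaussian matrix, $\tilde{B}_{1:r,\,:}$ itself has i.i.d.\ Gaussian entries, so Lemma \ref{Lemma_iid_nn} applies and yields $\mathrm{rank}(\tilde{B}_{1:r,\,:})=\min\{r,p\}$ almost surely. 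Because $r\le m<p$, this minimum equals $r$, and chaining the equalities gives $\mathrm{rank}(AB)=r=\mathrm{rank}(A)$ a.s.

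I expect the only delicate point to be the distributional-invariance step, namely justifying that $V^{T}B$ remains an i.i.d.\ Gaussian matrix so that its leading $r\times p$ block falls under Lemma \ref{Lemma_iid_nn}; the remainder is bookkeeping with the SVD and the elementary fact that scaling rows by nonzero scalars preserves rank. One subtlety worth stating explicitly is that $V$ is a fixed (data-dependent but deterministic) orthogonal matrix whereas $B$ is random, so Lemma \ref{Lemma_matrix_normalrotation} is invoked conditionally on $A$; this is legitimate precisely because $A$ is deterministic and independent of $B$.
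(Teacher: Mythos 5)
Your proposal is correct and follows essentially the same route as the paper: SVD of $A$, the rotational-invariance lemma to show $V^{T}B$ is again i.i.d.\ Gaussian, and full rank a.s.\ of the relevant Gaussian block. The only cosmetic difference is that you apply Lemma \ref{Lemma_iid_nn} to the full $r\times p$ block of leading rows of $\tilde{B}$, whereas the paper invokes Corollary \ref{Lemma_iid_submat} on the $r\times r$ square submatrix $\tilde{B}_{1}$; both close the argument in the same way.
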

	\begin{proof}
		Let $rank(A)=k\leq m$. Then the SVD of $A$ can be written as $A=U\Sigma V^{T}$,
		where $\Sigma=\begin{bmatrix}\Sigma_{k} & 0\\
			0 & 0
		\end{bmatrix}$ with $\Sigma_{k}\in\mathbb{R}^{k\times k},V^{T}\in\mathbb{R}^{n\times n}$.
		Then from Lemma \ref{Lemma_matrix_normalrotation}, $\text{\ensuremath{\tilde{B}:=V^{T}B} }$
		is a random matrix whose elements are iid gaussian with $\tilde{b}_{ij}\sim\mathcal{N}(0,1)$,
		Now,
		
		\[
		\begin{aligned}AB & =(U\Sigma V^{T})B\\
			& =U\Sigma\tilde{B}\\
			& =U\begin{bmatrix}\Sigma_{k} & 0\\
				0 & 0
			\end{bmatrix}\tilde{B}\\
			& =U\begin{bmatrix}\Sigma_{k} & 0\\
				0 & 0
			\end{bmatrix}\begin{bmatrix}\tilde{B}_{1} & \tilde{B}_{2}\\
				\tilde{B}_{3} & \tilde{B}_{4}
			\end{bmatrix}\hspace{0.5cm}\text{(where, \ensuremath{\tilde{B}_{1}\in\mathbb{R}^{k\times k}})}\\
			& =U\begin{bmatrix}\Sigma_{k}\tilde{B}_{1} & \Sigma_{k}\tilde{B}_{2}\\
				0 & 0
			\end{bmatrix}
		\end{aligned}
		\]
		From Lemma \ref{Cor_iid_mn}, $rank(B)=p$. It follows that $rank(\tilde{B})=p$
		since $V$ is invertible. From the above calculation, $rank(AB)=rank(\Sigma_{k}\tilde{B}_{1})=rank(\tilde{B}_{1})$.
		Since both $U$ and $\Sigma_{k}$ are invertible, using corollary
		\ref{Lemma_iid_submat}, $rank(AB)=k$ a.s. 
	\end{proof}
	\begin{theorem}
		\label{Theorem_image_presevation} Let $A$ be any $\mathbb{R}^{m\times n}$
		matrix and $B\in\mathbb{R}^{n\times p}$ be a random matrix whose
		elements are iid gaussian with zero mean and unity variance i.e. $b_{ij}\sim\mathcal{N}(0,1)$
		with $m<p<n$. Then $\mathcal{R}(A)=\mathcal{R}(AB)$ a.s. 
	\end{theorem}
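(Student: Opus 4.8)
The plan is to combine a deterministic subspace inclusion with the almost-sure rank equality already furnished by Lemma \ref{Theorem_rank_preservation}. The observation driving everything is that $\mathcal{R}(AB)=\mathcal{R}(A)$ for subspaces of equal finite dimension is equivalent to $\mathcal{R}(AB)\subseteq\mathcal{R}(A)$ together with $\dim\mathcal{R}(AB)=\dim\mathcal{R}(A)$, so I only need to supply these two ingredients separately.

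First I would establish the inclusion $\mathcal{R}(AB)\subseteq\mathcal{R}(A)$, which holds for \emph{every} matrix $B$, random or not. Since each column of $AB$ is a linear combination of the columns of $A$ (with coefficients given by the corresponding column of $B$), the column space of $AB$ sits inside the column space of $A$. This step is purely pathwise and requires no probabilistic reasoning; it costs nothing and introduces no exceptional event. Next I would invoke Lemma \ref{Theorem_rank_preservation}, whose hypotheses ($b_{ij}\sim\mathcal{N}(0,1)$ iid, $m<p<n$) coincide exactly with those assumed here, to conclude $rank(AB)=rank(A)$ a.s. Because the rank of a matrix is precisely the dimension of its range-space, this delivers $\dim\mathcal{R}(AB)=\dim\mathcal{R}(A)$ on an event of probability one.

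Finally I would close with the elementary fact that a subspace contained in a second subspace of the same finite dimension must equal it. Applying this to the inclusion $\mathcal{R}(AB)\subseteq\mathcal{R}(A)$ on the probability-one event where the rank is preserved yields $\mathcal{R}(AB)=\mathcal{R}(A)$ a.s., which is the claim.

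I do not expect a genuine obstacle, since the entire probabilistic content has been isolated into Lemma \ref{Theorem_rank_preservation} and what remains is a deterministic inclusion plus a dimension count. The only point that deserves an explicit word is that the ``almost surely'' qualifier is inherited solely from the rank lemma: the inclusion itself is pathwise, so the final equality holds on exactly the same probability-one event on which $rank(AB)=rank(A)$, and no further exceptional set need be introduced.
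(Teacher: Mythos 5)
Your proposal is correct and is essentially identical to the paper's own proof, which likewise combines the trivial inclusion $\mathcal{R}(AB)\subseteq\mathcal{R}(A)$ with the almost-sure rank equality from Lemma \ref{Theorem_rank_preservation} and concludes by a dimension count. Your version simply spells out the reasoning (and the locus of the probability-one event) more explicitly.
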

	\begin{proof}
		Firstly, $\mathcal{R}(AB)\subseteq\mathcal{R}(A)$ follows trivially.
		Moreover, $rank(AB)=rank(A)$ a.s. from Lemma \ref{Theorem_rank_preservation}.
		Therefore, $\mathcal{R}(A)=\mathcal{R}(AB)$ a.s. 
	\end{proof}

	
	\section{Fast Randomized Subspace System Identification (FR2SID)}\label{fast_subspace}
	In order to address the issues discussed in section \ref{section_main_issues}, we propose an algorithm based on randomized column space preservation (see Theorem \ref{Theorem_image_presevation}) and a novel method to estimate $\{B,D,K\}$ matrices. In the proposed method $\{A, B, C, D, K\}$ are estimated using compressed I/O data. Consequently, while the effort required for the initial compression is still dependent on $N$, all the subsequent computations becomes independent of $N$. In addition, we utilize the traditional power method to enhance noise robustness.
	
	\subsection{Streaming Data Compression: Range-space approximation}\label{subsection:SDC}	
	Recall that the matrix $H:=\begin{bmatrix}U_{f}^{T} & W_{p}^{T} & Y_{f}^{T}\end{bmatrix}^{T}\in\mathbb{R}^{2k(m+p)\times N}$
	and define $N_{c}:=2k(m+p)+l$ where, $l>0$ is commonly known as
	the oversampling parameter \cite{halko_2011}. Define $\mathcal{C}\in\mathbb{R}^{N\times N_{c}}$
	to be a random matrix whose elements are iid gaussian with $(\mathcal{C})_{ij}\sim\mathcal{N}(0,\frac{1}{N_{c}})$. Let $q \in \{0,1\}$,  we further define,
	$\bar{U}_{f}:=(U_{f}U_{f}^T)^{q}U_{f}\mathcal{C}$, $\bar{U}_{p}:=(U_{p}U_{p}^T)^{q}U_{p}\mathcal{C}$, $\bar{Y}_{p}:=(Y_{p}Y_{p}^T)^{q}Y_{p}\mathcal{C}$,
	$\bar{Y}_{f}:=(Y_{f}Y_{f}^T)^{q}Y_{f}\mathcal{C}$, $\bar{E}_{p}:=(E_{p}E_{p}^T)^{q}E_{p}\mathcal{C}$, $\bar{E}_{f}:=(E_{f}E_{f}^T)^{q}E_{f}\mathcal{C}$,
	$\bar{W}_{p}:=(W_{p}W_{p}^T)^{q}W_{p}\mathcal{C}=\begin{bmatrix}\bar{U}_{p}^{T} & \bar{Y}_{p}^{T}\end{bmatrix}^{T}$
	and $\bar{H}:=(HH^T)^{q}H\mathcal{C}=\begin{bmatrix}\bar{U}_{f}^{T} & \bar{W}_{p}^{T} & \bar{Y}_{f}^{T}\end{bmatrix}^{T}\in\mathbb{R}^{2k(m+p)\times N_{c}}$. The next Lemma directly follows from Theorem \ref{Theorem_image_presevation}.
	\begin{lemma}
		\label{Lemma_iosubspacepreserved} $\mathcal{R}(U_{f})=\mathcal{R}(\bar{U}_{f})$,
		$\mathcal{R}(E_{f})=\mathcal{R}(\bar{E}_{f})$, $\mathcal{R}(U_{p})=\mathcal{R}(\bar{U}_{p})$,
		$\mathcal{R}(E_{p})=\mathcal{R}(\bar{E}_{p})$ and $\mathcal{R}(Y_{f})=\mathcal{R}(\bar{Y}_{f})$
		a.s. 
	\end{lemma}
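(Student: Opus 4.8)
The plan is to observe that every barred matrix listed in the statement shares the common form $\bar{X} = (XX^{T})^{q}X\mathcal{C}$, where $X$ ranges over $\{U_{f}, U_{p}, Y_{p}, Y_{f}, E_{p}, E_{f}\}$ and $q \in \{0,1\}$, and then to reduce each such expression to a single application of Theorem \ref{Theorem_image_presevation}. Writing $\bar{X}$ in this unified form makes clear that all six cases are instances of one argument, so I would prove the generic claim $\mathcal{R}((XX^{T})^{q}X\mathcal{C}) = \mathcal{R}(X)$ a.s. once and then instantiate it.

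First I would dispose of the power-iteration exponent $q$ by a purely deterministic linear-algebra fact: for any real matrix $X$ one has $\mathcal{R}(XX^{T}) = \mathcal{R}(X)$, since the inclusion $\mathcal{R}(XX^{T}) \subseteq \mathcal{R}(X)$ is immediate and $rank(XX^{T}) = rank(X)$ forces equality of dimensions. Iterating, $\mathcal{R}((XX^{T})^{q}X) = \mathcal{R}(X)$ for every $q \geq 0$. Setting $M := (XX^{T})^{q}X \in \mathbb{R}^{d \times N}$, this step collapses both the $q=0$ and the $q=1$ case to the same target $\mathcal{R}(X)$, so it suffices to show $\mathcal{R}(M\mathcal{C}) = \mathcal{R}(M)$ a.s.

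Next I would apply Theorem \ref{Theorem_image_presevation} with $A := M$ and $B := \mathcal{C}$, after checking two points. First, the theorem is stated for a Gaussian matrix of unit variance, whereas $(\mathcal{C})_{ij} \sim \mathcal{N}(0, \tfrac{1}{N_{c}})$; but $\mathcal{C} = N_{c}^{-1/2}G$ for a standard Gaussian $G$, and scaling by a nonzero constant alters neither range nor rank, so $\mathcal{R}(M\mathcal{C}) = \mathcal{R}(MG)$ and the theorem applies verbatim. Second, the dimensional hypothesis $m < p < n$ of the theorem translates, with $M \in \mathbb{R}^{d \times N}$ and $\mathcal{C} \in \mathbb{R}^{N \times N_{c}}$, into $d < N_{c} < N$: here $d = km$ for $U_{f}, U_{p}$ and $d = kp$ for $Y_{f}, Y_{p}, E_{f}, E_{p}$, each strictly below $N_{c} = 2k(m+p)+l$, while $N_{c} < N$ follows from the standing assumption $N \gg 2k(m+p)$ (Note \ref{Case:RAM_space}). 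Theorem \ref{Theorem_image_presevation} then gives $\mathcal{R}(M\mathcal{C}) = \mathcal{R}(M) = \mathcal{R}(X)$ a.s. for each matrix, and since there are only finitely many of them, a union bound preserves the almost-sure conclusion simultaneously.

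The only genuine obstacle is the handling of the power iteration $q = 1$: a naive reading would try to apply Theorem \ref{Theorem_image_presevation} directly to $X\mathcal{C}$, but the factor $(XX^{T})^{q}$ must first be absorbed through the deterministic identity $\mathcal{R}((XX^{T})^{q}X) = \mathcal{R}(X)$ before the randomized range-preservation result becomes applicable. Once that reduction is in place, the variance rescaling and the dimension count are routine bookkeeping.
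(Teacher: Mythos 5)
Your proposal is correct and follows the same route as the paper, which simply asserts that the lemma ``directly follows from Theorem~\ref{Theorem_image_presevation}''. You additionally make explicit two details the paper leaves implicit --- absorbing the power-iteration factor via the deterministic identity $\mathcal{R}((XX^{T})^{q}X)=\mathcal{R}(X)$, and rescaling the $\mathcal{N}(0,\tfrac{1}{N_{c}})$ entries of $\mathcal{C}$ to unit variance before invoking the theorem --- both of which are handled correctly.
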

	
	Let us define $N_{d} := N/d \in \mathbb{Z}_{+}$, where $d \in \mathbb{Z}_{+}$, and recall that $W$ is the size of fast memory. The following assumption ensures that the partitioned matrices fit into fast memory.
	\begin{assumption}\label{assumption:W_Nd}
		We choose $d$ such that $W \geq 4k(m+p)N_{d}+lN_{d}+(q+1)4k^2(m+p)^2+2kl(m+p)$.
	\end{assumption}
	\begin{itemize}
		\item Streaming Data Compression (SDC) Algorithm: The (large) data matrix $H$ is divided into $d$ block matrices satisfying assumption \ref{assumption:W_Nd} and then random matrices $\mathcal{C}_{i} \in \mathbb{R}^{N_{d} \times N_{c}}$  are generated sequentially for all $i \in \{1, 2, \hdots, d\}$. The steps for streaming data compression are summarized in Algorithm \ref{algo:fastdatapre}.
			\end{itemize}
	\begin{algorithm}[h]\label{algo:fastdatapre}
		\textbf{Input:} I/O training data i.e. $\{u(i)\}^{N_{t}-1}_{i=0}$ and $\{y(i)\}^{N_{t}-1}_{i=0}$.\\
		Choose hyper-parameters: $q$, $k$, $d$ and $l$ such that $q \in \{0,1\}$, $k >n$, $N_{d} = N/d \in \mathbb{Z}_{+}$, $N_{c} = 2k(m+p)+l$ and Assumption \ref{assumption:W_Nd} is satisfied.\\
		$i \gets 1$\;
		\While{$i \neq {d}$}{
			Formulate data matrices iteratively: $U_{p_{i}} \in \mathbb{R}^{km \times N_{d}}$, $U_{f_{i}}\in \mathbb{R}^{km \times N_{d}}$, $Y_{p_{i}}\in \mathbb{R}^{kp \times N_{d}}$ and $Y_{f_{i}}\in \mathbb{R}^{kp \times N_{d}}$.\\
			Generate random Gaussian iid matrix $\mathcal{C}_{i} \in \mathbb{R}^{N_{d} \times N_{c}}$.\\
			Perform matrix multiplication to compute: $\bar{U}_{p}= \bar{U}_{p}+(U_{p_{i}}U_{p_{i}}^T)^{q}U_{p_{i}}\mathcal{C}_{i}$,
			$\bar{U}_{f}=\bar{U}_{f} + (U_{f_{i}}U_{f_{i}}^T)^{q}U_{f_{i}}\mathcal{C}_{i}$, $\bar{Y}_{p}=\bar{Y}_{p} + (Y_{p_{i}}Y_{p_{i}}^T)^{q}Y_{p_{i}}\mathcal{C}_{i}$ and $\bar{Y}_{f}=\bar{Y}_{f} +  (Y_{f_{i}}Y_{f_{i}}^T)^{q}Y_{f_{i}}\mathcal{C}_{i}$.\\
			$i \gets i+1$\;
		}
		\textbf{Output:} $\bar{U}_{p}$, $\bar{U}_{f}$, $\bar{Y}_{p}$ and $\bar{Y}_{f}$. (Equivalently $\bar{H}$)
		\caption{Streaming Data Compression (SDC)}
	\end{algorithm}

It is important to note that, the matrix multiplication of $\bar{H} = (HH^T)^{q}H\mathcal{C}$ can be easily parallelized.

	\subsection{Projection: QR Step}\label{subsection:QRstep}
	The first step in conventional subspace identification is to perform
	QR on $H^{T}$ (see step 3 Algorithm \ref{algo:conv_SID}) to obtain the oblique projection $\zeta$. Since, we need only the $R$-factor from the QR step, we propose to perform QR on the compressed matrix $\bar{H}^{T} \in \mathbb{R}^{N_{c} \times 2k(m+p)}$. This step reduces the QR computation cost significantly since $N_{c}<<N$. Also, $\bar{H}$ is guaranteed to fit into the fast memory due to Assumption \ref{assumption:W_Nd}. Hence data movement between slow and fast memory is reduced, in turn leading to faster QR implementation as compared to QR on $H$. However for this method to work, we must show theoretically that appropriate projections can still be used to extract the desired subspaces even after data compression.
 
   First note that \eqref{Yfuturesto} can be right multiplied by $\mathcal{C}$  to yield
	\begin{equation}
		\bar{Y}_{f}=\bar{L}_{p}\bar{W}_{p}+\Psi_{k}\bar{U}_{f}+\Phi_{k}\bar{E}_{f}\label{eqYfinalOmega}
	\end{equation}
	Evidently, $\bar{L}_{p}$ remains the same as the uncompressed case, since the multiplication from right by $\mathcal{C}$  does not affect $\bar{L}_{p}$ . Now, the orthogonal projection of $\bar{Y}_{f}$ onto the joint span of $\bar{W}_{p}$ and $\bar{U}_{f}$ is 
	\begin{equation}
		\begin{aligned}\label{Ybarfinal}\bar{Y}_{f}/\begin{bmatrix}\bar{W}_{p}\\
				\bar{U}_{f}
			\end{bmatrix} & =\bar{L}_{p}\bar{W}_{p}/\begin{bmatrix}\bar{W}_{p}\\
				\bar{U}_{f}
			\end{bmatrix}+\Psi_{k}\bar{U}_{f}/\begin{bmatrix}\bar{W}_{p}\\
				\bar{U}_{f}
			\end{bmatrix}\\
			& +\Phi_{k}\bar{E}_{f}/\begin{bmatrix}\bar{W}_{p}\\
					\bar{U}_{f}
			\end{bmatrix}
		\end{aligned}
	\end{equation}
 From \eqref{Yfuturefinal}, we know $E_{f}/\begin{bmatrix}W_{p}\\U_{f}\end{bmatrix}=0$. Now we show that even after data compression using $\mathcal{C}$, the compressed innovation $\bar{E}_{f}$ remains approximately uncorrelated with compressed data $\bar{W}_{p}$ and $\bar{U}_{f}$. 
	
	Let us define columns of ${E}_{f}$ as $e_{i}\in \mathbb{R}^{kp}$ so that ${E}_{f} := \begin{bmatrix}e_{k} & e_{k+1} & \hdots & e_{k+N-1}\end{bmatrix}$ and similarly ${U}_{f} := \begin{bmatrix}u_{k} & u_{k+1} & \hdots & u_{k+N-1}\end{bmatrix}$. Let $\mathcal{C} \in \mathbb{R}^{N \times N_{c}}$ be defined as in section \ref{subsection:SDC} and further let the rows of the random matrix $\mathcal{C}$ be denoted as $\alpha_{i}^{T} \in \mathbb{R}^{1 \times N_{c}}$
	so that $\mathcal{C} :=\begin{bmatrix}\alpha_{k}^{T}\\
		\alpha_{k+1}^{T}\\
		\vdots\\
		\alpha_{k+N-1}^{T}
	\end{bmatrix}\in\mathbb{R}^{N\times N_{c}}$ and define $\beta_{ij}:=(\mathcal{C}\mathcal{C}^T)_{ij} = \alpha_{k+i-1}^{T}\alpha_{k+j-1} \hspace{0.15cm} \forall i,j \in \{1, 2, \hdots, N\}$. It is easy to see that, the diagonal elements of matrix $\mathcal{C}\mathcal{C}^T \in \mathbb{R}^{N \times N}$ follows $\chi^2$-distribution implies $\mathbb{E}[(\mathcal{C}\mathcal{C}^T)_{ii}] = 1$ and $\mathbb{E}[(\mathcal{C}\mathcal{C}^T)_{ij}] = 0$ for $i \neq j$. We assume that sample mean is approximately equal to the expected value. Therefore, \begin{equation}\label{eq_BBT_expectation} 
		\beta_{ij} \approx \begin{cases}
			1 & i = j\\
			0 & i \neq j \hspace{0.2cm} (iid)
	\end{cases}\end{equation}
 Next we present an intermediate result for characterizing the correlation between noise, input and output data. 
	\begin{lemma}
		\label{Lemma_Efbar} $\bar{E}_{f}\bar{U}_{f}^{T} = S_{E_{f}}^{q}\big(\sum_{i=1}^{N}\sum_{j=1}^{N}\beta_{ij}e_{k+i-1}u_{k+j-1}^{T}\big)S_{U_{f}}^{q}$ where $S_{E_{f}} = \sum_{s =1}^{N}e_{k+s-1}e_{k+s-1}^{T}$ and $S_{U_{f}} = \sum_{r=1}^{N}u_{k+r-1}u_{k+r-1}^{T}$.
	\end{lemma}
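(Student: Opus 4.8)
The plan is to proceed by a direct computation, unpacking the definitions of $\bar{E}_{f}$ and $\bar{U}_{f}$ and then collapsing the resulting matrix products into the claimed double sum. The only facts needed beyond the definitions are the symmetry of the Gram matrices and the outer-product expansion of a matrix–matrix product.

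First I would observe that, by definition, $E_{f}E_{f}^{T} = \sum_{s=1}^{N} e_{k+s-1} e_{k+s-1}^{T} = S_{E_{f}}$ and likewise $U_{f}U_{f}^{T} = S_{U_{f}}$, so that $(E_{f}E_{f}^{T})^{q} = S_{E_{f}}^{q}$ and $(U_{f}U_{f}^{T})^{q} = S_{U_{f}}^{q}$. Hence $\bar{E}_{f} = S_{E_{f}}^{q} E_{f} \mathcal{C}$ and $\bar{U}_{f} = S_{U_{f}}^{q} U_{f} \mathcal{C}$. Taking the transpose of the latter and using that $S_{U_{f}}$ is symmetric (being a sum of outer products), I obtain
\begin{equation*}
\bar{E}_{f} \bar{U}_{f}^{T} = S_{E_{f}}^{q}\, E_{f} \mathcal{C} \mathcal{C}^{T} U_{f}^{T}\, S_{U_{f}}^{q}.
\end{equation*}
This isolates the central factor $E_{f} \mathcal{C} \mathcal{C}^{T} U_{f}^{T}$, which is all that remains to be identified with the double sum.

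Next I would expand this inner factor using the column/row structure of the data and random matrices. Since the $i$-th column of $E_{f}$ is $e_{k+i-1}$ and the $i$-th row of $\mathcal{C}$ is $\alpha_{k+i-1}^{T}$, the product telescopes to $E_{f}\mathcal{C} = \sum_{i=1}^{N} e_{k+i-1} \alpha_{k+i-1}^{T}$, and similarly $\mathcal{C}^{T} U_{f}^{T} = \sum_{j=1}^{N} \alpha_{k+j-1} u_{k+j-1}^{T}$. Multiplying these two sums and pulling out the scalar $\alpha_{k+i-1}^{T} \alpha_{k+j-1} = \beta_{ij}$ gives
\begin{equation*}
E_{f} \mathcal{C} \mathcal{C}^{T} U_{f}^{T} = \sum_{i=1}^{N}\sum_{j=1}^{N} \beta_{ij}\, e_{k+i-1} u_{k+j-1}^{T},
\end{equation*}
and substituting this back into the displayed factorization yields exactly the claimed expression.

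I do not expect any serious obstacle here, since the computation is purely algebraic with no probabilistic content. The one point requiring care is the bookkeeping of transposes together with the reindexing of the outer-product sums: one must remember that $S_{U_{f}}^{q}$ survives transposition unchanged by symmetry, and that $\beta_{ij}$ is defined precisely as the scalar $\alpha_{k+i-1}^{T}\alpha_{k+j-1}$, so that it factors cleanly out of the double sum rather than interfering with the matrix ordering.
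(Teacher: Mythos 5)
Your proposal is correct and follows essentially the same direct computation as the paper: unpack the definitions, factor out $S_{E_f}^{q}$ and $S_{U_f}^{q}$ (the paper likewise uses the symmetry of $S_{U_f}$ implicitly when transposing), expand $E_f\mathcal{C}$ and $U_f\mathcal{C}$ as sums of outer products, and collect the scalars $\beta_{ij}=\alpha_{k+i-1}^{T}\alpha_{k+j-1}$. No gaps.
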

	\begin{proof}
		From definition of $\bar{E}_{f}$ and $\bar{U}_{f}$ we get:
		\begin{equation}\label{eq_EfUfbar}	\begin{aligned}\bar{E}_{f}\bar{U}_{f}^{T} &= [(E_{f}E_{f}^T)^{q}E_{f}\mathcal{C}] [(U_{f}U_{f}^T)^{q}U_{f}\mathcal{C}]^{T}\\
		&= \underbrace{(E_{f}E_{f}^T)^{q}}_{=:S_{E_f}^{q}}(E_{f}\mathcal{C})(U_{f}\mathcal{C})^{T} \underbrace{(U_{f}U_{f}^T)^{q}}_{=:S_{U_f}^{q}}\\
			\end{aligned}\end{equation}
		Using $E_{f}$ and $\mathcal{C}$:
		\[
		\begin{aligned}E_{f}\mathcal{C} &=\begin{bmatrix}e_{k} & e_{k+1} & \hdots & e_{k+N-1}\end{bmatrix}\begin{bmatrix}\alpha_{k}^{T}\\
				\alpha_{k+1}^{T}\\
				\vdots\\
				\alpha_{k+N-1}^{T}
			\end{bmatrix}\\
			& =e_{k}\alpha_{k}^{T}+e_{k+1}\alpha_{k+1}^{T}+\hdots+e_{k+N-1}\alpha_{k+N-1}^{T}
		\end{aligned}
		\]
		and $\begin{aligned}{U}_{f}\mathcal{C}=u_{k}\alpha_{k}^{T}+u_{k+1}\alpha_{k+1}^{T}+\hdots+u_{k+N-1}\alpha_{k+N-1}^{T}\end{aligned}
		$. Then the middle terms of \eqref{eq_EfUfbar} can be written
		as: 
		\[
		\begin{aligned}({E}_{f}\mathcal{C})({U}_{f}\mathcal{C})^{T} & =(e_{k}\alpha_{k}^{T}+e_{k+1}\alpha_{k+1}^{T}+\hdots+e_{k+N-1}\alpha_{k+N-1}^{T})\\
			& \times (u_{k}\alpha_{k}^{T}+u_{k+1}\alpha_{k+1}^{T}+\hdots+u_{k+N-1}\alpha_{k+N-1}^{T})^{T}\\
			& =\sum_{i=0}^{N-1}\sum_{j=0}^{N-1}(e_{k+i}\alpha_{k+i}^{T})(u_{k+j}\alpha_{k+j}^{T})^{T}\\
			& =\sum_{i=1}^{N}\sum_{j=1}^{N}e_{k+i-1}(\alpha_{k+i-1}^{T}\alpha_{k+j-1})u_{k+j-1}^{T}\\
			& =\sum_{i=1}^{N}\sum_{j=1}^{N}\beta_{ij}e_{k+i-1}u_{k+j-1}^{T}
		\end{aligned}
		\]
		It is easy to see that, $S_{E_{f}} = E_{f}E_{f}^T = \sum_{s=1}^{N}e_{k+s-1}e_{k+s-1}^T$ and $S_{U_{f}} = U_{f}U_{f}^T = \sum_{r=1}^{N}u_{k+r-1}u_{k+r-1}^T$
					\end{proof}
				
		Similarly, we can define $\bar{E}_{f}\bar{Y}_{p}^{T}$ as:
		\begin{equation}\label{eq_YpYpbar}
			\bar{E}_{f}\bar{Y}_{p}^{T} = S_{E_{f}}^{q}\bigg(\sum_{i=1}^{N}\sum_{j=1}^{N}\beta_{ij}e_{k+i-1}y_{j-1}^{T}\bigg)S_{Y_{p}}^{q}
		\end{equation}
	where $S_{Y_{p}} := Y_{p}Y_{p}^{T} = \sum_{r=1}^{N}y_{r-1}y_{r-1}^T$.
			
		Now, we show $\bar{E}_{f}$ remains approximately uncorrelated with compressed
		data $\bar{W}_{p}$ and $\bar{U}_{f}$. 
		\begin{claim}\label{Claim_Efbar}
			$\bar{E}_{f}/\begin{bmatrix}\bar{W}_{p}\\\bar{U}_{f}\end{bmatrix}\approx0$. 
		\end{claim}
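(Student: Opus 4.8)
The plan is to reduce the claim to the approximate vanishing of the cross-covariance between the compressed noise $\bar{E}_f$ and the compressed regressors $\bar{W}_p,\bar{U}_f$, and then to show that this cross-covariance inherits the open-loop orthogonality that already holds in the uncompressed case (see \eqref{Yfuturefinal}). First I would expand the stochastic orthogonal projection using Definition \ref{def:stochasticorthogonalproj},
\[
\bar{E}_f/\begin{bmatrix}\bar{W}_p\\\bar{U}_f\end{bmatrix} = \left[\bar{E}_f\begin{bmatrix}\bar{W}_p^T & \bar{U}_f^T\end{bmatrix}\right]\left[\begin{bmatrix}\bar{W}_p\\\bar{U}_f\end{bmatrix}\begin{bmatrix}\bar{W}_p^T & \bar{U}_f^T\end{bmatrix}\right]^{\dagger}\begin{bmatrix}\bar{W}_p\\\bar{U}_f\end{bmatrix},
\]
and observe that, since the trailing two factors are fixed, it suffices to show the leading cross-covariance $\bar{E}_f\begin{bmatrix}\bar{W}_p^T & \bar{U}_f^T\end{bmatrix}$ is approximately zero. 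Because $\bar{W}_p=\begin{bmatrix}\bar{U}_p^T & \bar{Y}_p^T\end{bmatrix}^T$, this block factor splits into the three sub-blocks $\bar{E}_f\bar{U}_p^T$, $\bar{E}_f\bar{Y}_p^T$ and $\bar{E}_f\bar{U}_f^T$, so the task reduces to showing each sub-block vanishes approximately.

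Next I would apply Lemma \ref{Lemma_Efbar}, the identity \eqref{eq_YpYpbar}, and the obvious analog for $\bar{E}_f\bar{U}_p^T$, to write each sub-block in the schematic form $S_{E_f}^q\,\big(\sum_{i,j}\beta_{ij}\,e_{k+i-1}\,v_{j}^T\big)\,S_\bullet^q$, where $v_j$ denotes the relevant regressor column. Invoking the approximation \eqref{eq_BBT_expectation}, namely $\beta_{ij}\approx\delta_{ij}$, collapses each double sum to its diagonal single sum. For the three sub-blocks these diagonal sums are exactly $\sum_i e_{k+i-1}u_{k+i-1}^T=E_fU_f^T$, $\sum_i e_{k+i-1}y_{i-1}^T=E_fY_p^T$, and the analogous $E_fU_p^T$, i.e.\ precisely the uncompressed noise/regressor cross-covariances.

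The final step is to invoke Assumption \ref{assumption:system}: the innovations are uncorrelated with the input (so $E_fU_f^T\approx0$ and $E_fU_p^T\approx0$ in the $\mathbb{E}_j$ sense), and, since the data is open-loop with no feedback, future innovations are uncorrelated with past outputs as well, giving $E_fY_p^T\approx0$. These are exactly the orthogonalities that force $E_f/\begin{bmatrix}W_p\\U_f\end{bmatrix}=0$ in \eqref{Yfuturefinal}. Because the central factor of each expression is (approximately) the zero matrix, pre- and post-multiplying by the fixed bounded-norm covariance factors $S_{E_f}^q$ and $S_\bullet^q$ leaves it approximately zero for either choice $q\in\{0,1\}$ (for $q=0$ these factors are simply identities). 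Hence the whole cross-covariance, and therefore the projection, is approximately zero.

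I expect the main obstacle to be the justification of $E_fY_p^T\approx0$: unlike the inputs, the past outputs carry past innovation contributions, so one must argue via causality and the whiteness of $e(t)$ that the future innovations remain uncorrelated with these past-innovation terms, which is the substance of the open-loop argument behind \eqref{Yfuturefinal}. A secondary subtlety is that the result is inherently approximate — it rests on the finite-sample assumption \eqref{eq_BBT_expectation} that the sample second moments of $\mathcal{C}$ match their expectations — so the conclusion is the approximate orthogonality $\approx 0$ rather than an exact identity.
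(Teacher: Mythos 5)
Your proposal is correct and follows essentially the same route as the paper's own proof: both reduce the projection to the cross-covariance block $\begin{bmatrix}\bar{E}_f\bar{U}_p^T & \bar{E}_f\bar{Y}_p^T & \bar{E}_f\bar{U}_f^T\end{bmatrix}$ via Definition \ref{def:stochasticorthogonalproj}, use Lemma \ref{Lemma_Efbar}, \eqref{eq_YpYpbar} and the approximation $\beta_{ij}\approx\delta_{ij}$ from \eqref{eq_BBT_expectation} to collapse each double sum to the uncompressed cross-covariance sandwiched between $S_{E_f}^q$ and $S_\bullet^q$, and then invoke the open-loop uncorrelatedness of the innovations with the inputs and past outputs from Assumption \ref{assumption:system}. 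Your closing remarks correctly identify the two points the paper also treats only heuristically (the future-noise/past-output orthogonality and the finite-sample nature of \eqref{eq_BBT_expectation}), so nothing further is needed.
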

	\begin{proof}
		Using definition \ref{def:stochasticorthogonalproj}, we evaluate only the first term of \eqref{expectation_sto} i.e. $\begin{bmatrix}\bar{E}_{f}\bar{W}^{T}_{p} & \bar{E}_{f}\bar{U}^{T}_{f} \end{bmatrix} = \begin{bmatrix}\bar{E}_{f}\bar{U}^{T}_{p} &\bar{E}_{f}\bar{Y}^{T}_{p} & \bar{E}_{f}\bar{U}^{T}_{f} \end{bmatrix}$. Now, using \eqref{eq_EfUfbar} and \eqref{eq_BBT_expectation},
		\begin{equation}\begin{aligned}
			\bar{E}_{f}\bar{U}^{T}_{f} \approx S_{E_{f}}^{q}\bigg(\sum_{i=1}^{N}e_{k+i-1}u_{k+i-1}^{T}\bigg) S_{U_{f}}^{q} \approx 0
		\end{aligned}
		\end{equation}	
		The last equality follows from the fact that $e$ and $u$ are uncorrelated i.e. $\mathbb{E}\big[e_{i}u_{j}^{T}\big] =0$ (see Assumption \ref{assumption:system}).
		Now using same argument as above for $\bar{U}^{T}_{p}$, we get $\bar{E}_{f}\bar{U}_{p}^{T} \approx 0$. Next we show that $\bar{E}_{f}\bar{Y}_{p}^{T} \approx 0$. Using \eqref{eq_YpYpbar} and \eqref{eq_BBT_expectation}:
			\[
		\begin{aligned}\bar{E}_{f}\bar{Y}_{p}^{T} \approx S_{E_{f}}^{q}\bigg(\sum_{i=1}^{N}e_{k+i-1}y_{i-1}^{T}\bigg)S_{Y_{p}}^{q}  \approx 0
		\end{aligned}
		\]
		The last equality is due to the fact that the output gets multiplied with future noise. Since they are uncorrelated,  i.e. $\mathbb{E}\big[e_{k+i}y_{i}^{T}\big] =0$, we assume that the middle term is approximately zero. Therefore, $\bar{E}_{f}\bar{W}^{T}_{p} \approx 0$ and $\bar{E}_{f}\bar{U}^{T}_{f} \approx 0$.
	\end{proof}
	
	Now, the orthogonal projection of $\bar{Y}_{f}$ onto the joint span of $\bar{W}_{p}$ and $\bar{U}_{f}$ is 
	\begin{equation}\label{eq_orthoproj_compressed}
		\begin{aligned}\bar{Y}_{f}/\begin{bmatrix}\bar{W}_{p}\\
				\bar{U}_{f}
			\end{bmatrix} & =\bar{L}_{p}\bar{W}_{p}/\begin{bmatrix}\bar{W}_{p}\\
				\bar{U}_{f}
			\end{bmatrix}+\Psi_{k}\bar{U}_{f}/\begin{bmatrix}\bar{W}_{p}\\
				\bar{U}_{f}
			\end{bmatrix}\\
			& +\Phi_{k}\underbrace{\bar{E}_{f}/\begin{bmatrix}\bar{W}_{p}\\
					\bar{U}_{f}
			\end{bmatrix}}_{\approx 0}\\
			& \approx \bar{L}_{p}\bar{W}_{p}+\Psi_{k}\bar{U}_{f}
		\end{aligned}
	\end{equation}
	The third term in the above equation becomes negligible using Claim \ref{Claim_Efbar}.
	The above result ensures that we can use projection to extract the desired subspace under assumption \ref{assumption:system} (similar to conventional method). 
	Recall the oblique projection $\zeta:=Y_{f}/_{U_{f}}W_{p}\in\mathbb{R}^{kp\times N}$
	and the compressed version $\bar{\zeta}:=\bar{Y}_{f}/_{\bar{U}_{f}}\bar{W}_{p}\in\mathbb{R}^{kp\times N_{c}}$.	Then the following result holds.
	\begin{lemma}
		\label{Lemma_obliqueproj} $\mathcal{R}(\bar{\zeta})=\mathcal{R}(\zeta)$
		a.s. 
	\end{lemma}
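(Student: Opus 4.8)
The plan is to reduce the statement to a single application of Theorem \ref{Theorem_image_presevation} (right multiplication by a Gaussian matrix preserves the column space a.s.), after rewriting $\bar{\zeta}$ in a form that exposes the compression matrix $\mathcal{C}$ on the right. First I would start from the oblique projection of the compressed data: by \eqref{eq_orthoproj_compressed} together with Claim \ref{Claim_Efbar}, the oblique projection of $\bar{Y}_{f}$ onto $\bar{W}_{p}$ along $\bar{U}_{f}$ satisfies $\bar{\zeta}=\bar{L}_{p}\bar{W}_{p}$, with the same coefficient $\bar{L}_{p}=\Theta_{k}L_{p}$ that appears in the uncompressed relation $\zeta=\bar{L}_{p}W_{p}$ from \eqref{Yf_ortho}. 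Substituting the definition $\bar{W}_{p}=(W_{p}W_{p}^{T})^{q}W_{p}\mathcal{C}$ then gives
\[
\bar{\zeta}=\bar{L}_{p}(W_{p}W_{p}^{T})^{q}W_{p}\,\mathcal{C}=M\mathcal{C},\qquad M:=\bar{L}_{p}(W_{p}W_{p}^{T})^{q}W_{p}\in\mathbb{R}^{kp\times N}.
\]

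Next I would peel off $\mathcal{C}$. Since $M\in\mathbb{R}^{kp\times N}$ and $\mathcal{C}\in\mathbb{R}^{N\times N_{c}}$ is Gaussian with $kp<N_{c}<N$ (because $N_{c}=2k(m+p)+l>kp$ while $N\gg 2k(m+p)$), Theorem \ref{Theorem_image_presevation} applies and yields $\mathcal{R}(\bar{\zeta})=\mathcal{R}(M\mathcal{C})=\mathcal{R}(M)$ a.s.\ (the factor $1/N_{c}$ in the variance of $\mathcal{C}$ only makes it a scalar multiple of a standard Gaussian matrix and hence does not affect the range, exactly as already used in Lemma \ref{Lemma_iosubspacepreserved}). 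It then remains to show that the power-iteration factor does not change the relevant range, i.e.\ $\mathcal{R}(M)=\mathcal{R}(\bar{L}_{p}W_{p})=\mathcal{R}(\zeta)$.

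For this I would invoke the standard identity $\mathcal{R}\big((W_{p}W_{p}^{T})^{q}W_{p}\big)=\mathcal{R}(W_{p})$: writing the SVD $W_{p}=U\Sigma V^{T}$ gives $(W_{p}W_{p}^{T})^{q}W_{p}=U(\Sigma\Sigma^{T})^{q}\Sigma V^{T}$, and raising the nonzero singular values to the power $2q+1$ leaves their support, and hence the span of the corresponding left singular vectors, unchanged. Consequently, using $\mathcal{R}(AB)=A\,\mathcal{R}(B)$,
\[
\mathcal{R}(M)=\bar{L}_{p}\,\mathcal{R}\big((W_{p}W_{p}^{T})^{q}W_{p}\big)=\bar{L}_{p}\,\mathcal{R}(W_{p})=\mathcal{R}(\bar{L}_{p}W_{p})=\mathcal{R}(\zeta),
\]
which, chained with the previous paragraph, delivers $\mathcal{R}(\bar{\zeta})=\mathcal{R}(\zeta)$ a.s. The argument is uniform in $q\in\{0,1\}$: for $q=0$ the factor disappears and $\bar{\zeta}=\zeta\mathcal{C}$ directly.

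I expect the main obstacle to be the power-iteration step rather than the randomized compression. The Gaussian right-multiplication is immediate from Theorem \ref{Theorem_image_presevation} and the dimension count $kp<N_{c}<N$ is routine. The subtle point is that $(W_{p}W_{p}^{T})^{q}$ multiplies $W_{p}$ on the \emph{left}, where one must resist the wrong intuition that a left factor changes the column space; the SVD computation is precisely what certifies $\mathcal{R}\big((W_{p}W_{p}^{T})^{q}W_{p}\big)=\mathcal{R}(W_{p})$ and keeps the conclusion valid for $q=1$. A secondary point to state carefully is that $\bar{\zeta}=\bar{L}_{p}\bar{W}_{p}$ inherits the ``$\approx$'' of Claim \ref{Claim_Efbar}; in the idealized setting where the compressed-innovations projection is treated as exactly zero this is an equality, mirroring the treatment of $\zeta=\bar{L}_{p}W_{p}$ in the conventional theory.
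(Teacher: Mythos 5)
Your proof is correct and follows essentially the same route as the paper: both reduce to $\bar{\zeta}=\bar{L}_{p}\bar{W}_{p}$ via \eqref{eq_orthoproj_compressed} and then invoke Theorem \ref{Theorem_image_presevation} to identify $\mathcal{R}(\bar{W}_{p})$ (equivalently, your $\mathcal{R}(M\mathcal{C})$) with $\mathcal{R}(W_{p})$ before pushing the equality through the left factor $\bar{L}_{p}$. The only difference is that you explicitly justify, via the SVD, that the power-iteration factor $(W_{p}W_{p}^{T})^{q}$ does not alter the relevant range — a step the paper's one-line proof leaves implicit.
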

	\begin{proof}
		
		From  \eqref{eq_orthoproj_compressed}, $\bar{\zeta}:=\bar{Y}_{f}/_{\bar{U}_{f}}\bar{W}_{p}=\bar{L}_{p}\bar{W}_{p}$. The result follows from the fact that $\mathcal{R}(W_{p}) = \mathcal{R}(\bar{W}_{p})$ a.s. from Theorem \ref{Theorem_image_presevation}.
	\end{proof}
 We have shown that third term of \eqref{Ybarfinal} is negligible hence an equation  similar to \eqref{Yf_ortho_data} can be obtained for the compressed case using QR decomposition of $\bar{H}^{T}$:
	\begin{equation}\label{Yfbar_ortho_data}
		\bar{Y}_{f}/\begin{bmatrix}\bar{W}_{p} \\\bar{U}_{f}\end{bmatrix} = \bar{R}_{32}\bar{R}_{22}^{\dagger}\bar{W}_{p} + (\bar{R}_{31}-\bar{R}_{32}\bar{R}^{\dagger}_{22}\bar{R}_{21})\bar{R}^{-1}_{11}\bar{U}_f
	\end{equation}
	where $\bar{(.)}$ denote the equivalent matrices for the compressed case. In the above equation $\bar{R}_{11}$ is invertible due to the fact that $rank(\bar{U}_{f})$ = $rank(U_{f})$ a.s.

	Therefore, various projections of $\bar{Y}_{f}$ onto $\bar{U}_{f}$ and $\bar{W}_{p}$ as shown in \eqref{Ybarfinal} can be calculated from the compressed QR factors ($\bar{R}_{ij}$ $\forall i,j \in \{1, 2, 3\}$). Using the above Lemma and comparing \eqref{Yfbar_ortho_data} and \eqref{eq_orthoproj_compressed}, we get
	\begin{equation}\label{eq_zetabar}
		\bar{\zeta} = \underbrace{\bar{R}_{32}\bar{R}_{22}^{\dagger}}_{\bar{L}_p}\bar{W}_{p}
	\end{equation}
	\subsection{Projection: SVD Step}\label{subsection:SVDstep}
	As mentioned in preliminaries, after computing the oblique projection $\zeta \in \mathbb{R}^{kp \times N}$ (see \eqref{eqzeta}), the next step is to perform SVD according to \eqref{eqzetasvd}.  Since, we are interested to compute $\Theta_{k}$ using only the left singular vectors of $\zeta$, we show that an equivalent operation can be performed on $\bar{\zeta}$.
	\begin{theorem}
		\label{Theorem_thetak} There exists a decomposition of $\bar{\zeta}=\bar{L}_{p}\bar{W}_{p}=\bar{\Theta}_{k}X\in\mathbb{R}^{kp\times N_{c}}$
		such that $\mathcal{R}(\bar{\Theta}_{k})=\mathcal{R}(\Theta_{k})$
		a.s. where $\bar{\Theta}_{k}\in\mathbb{R}^{kp\times n}$
		and $X\in\mathbb{R}^{n\times N_{c}}$.
	\end{theorem}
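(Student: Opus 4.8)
The plan is to obtain the required factorization of $\bar{\zeta}$ from its singular value decomposition, exactly mirroring the SVD step in \eqref{eqzetasvd}, and then to establish the range equality by chaining together the range-preservation results already proved. First I would establish the central identity $\mathcal{R}(\bar{\zeta})=\mathcal{R}(\Theta_{k})$ a.s. This is immediate: Lemma \ref{Lemma_obliqueproj} gives $\mathcal{R}(\bar{\zeta})=\mathcal{R}(\zeta)$ a.s., and Proposition \ref{Lemma_zetathetak} gives $\mathcal{R}(\zeta)=\mathcal{R}(\Theta_{k})$, so composing the two yields $\mathcal{R}(\bar{\zeta})=\mathcal{R}(\Theta_{k})$ a.s.

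Next, since the pair $\{A,C\}$ is observable (Assumption \ref{assumption:system}) and $k>n$, the extended observability matrix $\Theta_{k}$ has full column rank $n$; combined with the range identity above this forces $rank(\bar{\zeta})=rank(\Theta_{k})=n$ a.s. I would then take the SVD of the rank-$n$ matrix $\bar{\zeta}$ and retain only the $n$ nonzero singular values, writing $\bar{\zeta}=U_{1}\Sigma_{1}V_{1}^{T}$ with $U_{1}\in\mathbb{R}^{kp\times n}$, $\Sigma_{1}\in\mathbb{R}^{n\times n}$ invertible, and $V_{1}\in\mathbb{R}^{N_{c}\times n}$. Setting $\bar{\Theta}_{k}:=U_{1}\Sigma_{1}^{1/2}$ and $X:=\Sigma_{1}^{1/2}V_{1}^{T}$ produces an exact factorization $\bar{\zeta}=\bar{\Theta}_{k}X$ of the prescribed dimensions $\bar{\Theta}_{k}\in\mathbb{R}^{kp\times n}$ and $X\in\mathbb{R}^{n\times N_{c}}$.

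Finally I would verify the range equality for $\bar{\Theta}_{k}$. Since $\Sigma_{1}^{1/2}$ is invertible, $\mathcal{R}(\bar{\Theta}_{k})=\mathcal{R}(U_{1}\Sigma_{1}^{1/2})=\mathcal{R}(U_{1})$, and because $U_{1}$ collects the left singular vectors associated with the nonzero singular values of the rank-$n$ matrix $\bar{\zeta}$, we have $\mathcal{R}(U_{1})=\mathcal{R}(\bar{\zeta})$. Combining this with the identity from the first step gives $\mathcal{R}(\bar{\Theta}_{k})=\mathcal{R}(\bar{\zeta})=\mathcal{R}(\Theta_{k})$ a.s., which is the claim.

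I do not anticipate a serious obstacle: once the chain $\mathcal{R}(\bar{\zeta})=\mathcal{R}(\zeta)=\mathcal{R}(\Theta_{k})$ is in hand, the theorem reduces to a standard rank factorization via SVD. The only point needing mild care is arguing that $\bar{\zeta}$ has rank \emph{exactly} $n$ rather than merely at most $n$; this is precisely where the observability assumption on $\{A,C\}$ and the almost-sure rank preservation of Theorem \ref{Theorem_image_presevation} (transmitted through Lemma \ref{Lemma_obliqueproj}) are used.
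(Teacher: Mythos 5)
Your proof is correct and follows essentially the same route as the paper: both obtain the factorization $\bar{\zeta}=\bar{\Theta}_{k}X$ from a rank factorization and then chain $\mathcal{R}(\bar{\Theta}_{k})=\mathcal{R}(\bar{\zeta})=\mathcal{R}(\zeta)=\mathcal{R}(\Theta_{k})$ via Lemma \ref{Lemma_obliqueproj} and Proposition \ref{Lemma_zetathetak}. The only difference is that you make explicit, via the SVD and the observability of $\{A,C\}$, why the inner dimension is exactly $n$ --- a point the paper's proof leaves implicit.
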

	\begin{proof}
		 There exists a decomposition $\bar{\zeta}=\bar{\Theta}_{k}X$
		such that $rank(\bar{\zeta})=rank(\bar{\Theta}_{k})$ a.s.
		Moreover, a.s.,
		\[
		\begin{aligned} & \mathcal{R}(\bar{\zeta})=\mathcal{R}(\bar{\Theta}_{k})\hspace{0.5cm}(\text{from}\hspace{0.15cm}\bar{\zeta}\hspace{0.15cm}\text{decomposition})\\
			& \mathcal{R}({\zeta})=\mathcal{R}(\bar{\zeta})\hspace{0.5cm}\text{(from Lemma \ref{Lemma_obliqueproj})}\\
			& \mathcal{R}({\zeta})=\mathcal{R}({\Theta}_{k})\hspace{0.5cm}\text{(from Proposition \ref{Lemma_zetathetak})}
		\end{aligned}
		\]
		Using the above equations, $\mathcal{R}(\bar{\Theta}_{k})=\mathcal{R}(\Theta_{k})$
		a.s. 
	\end{proof}
	As mentioned before, we need only the left singular vectors of $\bar{\zeta}$ for further computation. Hence we perform reduced QR on $\bar{\zeta}^{T}$ while ignoring the Q-factors. Let, the QR decomposition of $\bar{\zeta}^{T}$ can be represented as,
	\begin{equation}\label{zetatranspose}
		\bar{\zeta}^T = \bar{Q}_{\zeta} \bar{R}_{\zeta}
	\end{equation}
	where $ \bar{R}_{\zeta} \in \mathbb{R}^{kp \times kp}$. Now, the SVD of $ \bar{R}_{\zeta}$ is computed as follows:
	\begin{equation}\label{Rzeta}\begin{aligned}
			 \bar{R}_{\zeta} = U_{r} \Sigma_{r} V_{r}^T &= \begin{bmatrix} U_1 & U_2\end{bmatrix} \begin{bmatrix} \Sigma_1 & 0 \\ 0 & 0\end{bmatrix} \begin{bmatrix} V_{1}^T \\ V_{2}^T\end{bmatrix}\\
			&= U_{1} \Sigma_{1} V_{1}^T
	\end{aligned}\end{equation} 
	
	\begin{lemma}\label{Lemma_LSV} 
		The left singular vectors of  $\bar{\zeta}$ are equal to the right singular vectors of $ \bar{R}_{\zeta}$. The singular values of $\bar{\zeta}$ and $ \bar{R}_{\zeta}$ are same. 
	\end{lemma}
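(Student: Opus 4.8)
The plan is to exhibit an explicit singular value decomposition of $\bar{\zeta}$ built directly from the QR factorization \eqref{zetatranspose} and the SVD \eqref{Rzeta}, and then simply read off the two claimed identifications. First I would transpose \eqref{zetatranspose} to get $\bar{\zeta}=\bar{R}_{\zeta}^{T}\bar{Q}_{\zeta}^{T}$, and substitute the SVD $\bar{R}_{\zeta}=U_{r}\Sigma_{r}V_{r}^{T}$ (noting $\Sigma_{r}$ is diagonal, hence symmetric), yielding
\begin{equation*}
	\bar{\zeta}=\bar{R}_{\zeta}^{T}\bar{Q}_{\zeta}^{T}=V_{r}\Sigma_{r}U_{r}^{T}\bar{Q}_{\zeta}^{T}=V_{r}\Sigma_{r}(\bar{Q}_{\zeta}U_{r})^{T}.
\end{equation*}

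Next I would verify that the right-hand side is a genuine SVD of $\bar{\zeta}$ by checking orthogonality of the outer factors. The factor $V_{r}$ is orthogonal by construction and $\Sigma_{r}$ is diagonal with nonnegative entries, so the only nontrivial point is that $W:=\bar{Q}_{\zeta}U_{r}\in\mathbb{R}^{N_{c}\times kp}$ has orthonormal columns. Since the reduced QR factor $\bar{Q}_{\zeta}\in\mathbb{R}^{N_{c}\times kp}$ satisfies $\bar{Q}_{\zeta}^{T}\bar{Q}_{\zeta}=I$ and $U_{r}$ is orthogonal, we obtain $W^{T}W=U_{r}^{T}\bar{Q}_{\zeta}^{T}\bar{Q}_{\zeta}U_{r}=U_{r}^{T}U_{r}=I$. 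Hence $\bar{\zeta}=V_{r}\Sigma_{r}W^{T}$ is a valid (thin) singular value decomposition of $\bar{\zeta}$.

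Finally, reading off this decomposition gives both assertions at once: the left singular vectors of $\bar{\zeta}$ are the columns of $V_{r}$, which are precisely the right singular vectors of $\bar{R}_{\zeta}$, while the singular values of $\bar{\zeta}$ are the diagonal entries of $\Sigma_{r}$, i.e. the singular values of $\bar{R}_{\zeta}$. I expect the main (though minor) obstacle to be exactly the orthonormality check for $W$: because $\bar{Q}_{\zeta}$ comes from a \emph{reduced} QR it is tall and not square, so one only has column-orthonormality, and the key observation is that right-multiplication by the square orthogonal matrix $U_{r}$ preserves precisely this property. I would also remark that singular vectors associated with repeated or zero singular values are determined only up to the usual rotational/sign ambiguity, but this does not affect the statement, since only the singular values and the span of the dominant left singular vectors (i.e. $\mathcal{R}(\bar{\Theta}_{k})$, via Theorem \ref{Theorem_thetak}) are used downstream.
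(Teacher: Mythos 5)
Your proposal is correct and follows essentially the same route as the paper: transpose the reduced QR factorization, substitute the SVD of $\bar{R}_{\zeta}$, and recognize the result $V\Sigma(\bar{Q}_{\zeta}U)^{T}$ as a valid SVD of $\bar{\zeta}$. Your explicit check that $W=\bar{Q}_{\zeta}U_{r}$ has orthonormal columns is in fact slightly more careful than the paper's one-line appeal to ``product of orthogonal matrices is orthogonal,'' since $\bar{Q}_{\zeta}$ from a reduced QR is tall and only column-orthonormal.
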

	\begin{proof}
		Using, \eqref{zetatranspose} and \eqref{Rzeta},
		\begin{equation}\label{zetatransposefinal}\begin{aligned}
				 \bar{\zeta} = \bar{R}_{\zeta}^T \bar{Q}_{\zeta}^T\ &= V_{1}\Sigma_{1}U_{1}^TQ_{\zeta}^T= V_{1}\Sigma_{1}(U_{1}^TQ_{\zeta}^T) \\
				&= V_{1}\Sigma_{1}\bar{U}_{1}^T 
			\end{aligned}
		\end{equation}
		The last step follows from the fact that product of orthogonal matrices is an orthogonal matrix.
	\end{proof}	
	So, instead doing SVD of $\bar{\zeta}$ we perform SVD of $ \bar{R}_{\zeta}$ to estimate $\Theta_{k}$ which can be computed using \eqref{Rzeta} as,
	\begin{equation}\label{obsmatfinal}
		\hat{\Theta}_{k} = V_{1} \Sigma_{1}^{1/2}
	\end{equation}
	This step reduces the computation-cost as we use $\bar{\zeta} \in \mathbb{R}^{kp \times N_{c}}$ to compute $\Theta_{k}$ instead of $\zeta \in \mathbb{R}^{kp \times N}$.  
	
	\begin{theorem}
		\label{Lemma_structurepreserve} $\bar{Y}_{f}=\bar{\Theta}_{k}X+\Psi_{k}\bar{U}_{f}+\Phi_{k}\bar{E}_{f}$ 
	\end{theorem}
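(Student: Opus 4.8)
The plan is to read off the claimed factorization by combining the compressed measurement equation \eqref{eqYfinalOmega} with the decomposition supplied by Theorem \ref{Theorem_thetak}; no new estimates are needed beyond these. I would begin from \eqref{eqYfinalOmega}, namely $\bar{Y}_f = \bar{L}_p \bar{W}_p + \Psi_k \bar{U}_f + \Phi_k \bar{E}_f$, which is the compressed counterpart of the uncompressed factorization \eqref{Yfuturesto}. The only term not already in the target form is the first one, $\bar{L}_p \bar{W}_p$, which is exactly the compressed oblique projection $\bar{\zeta}$.

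I would then invoke Theorem \ref{Theorem_thetak}, which guarantees a decomposition $\bar{\zeta} = \bar{L}_p \bar{W}_p = \bar{\Theta}_k X$ with $\bar{\Theta}_k \in \mathbb{R}^{kp \times n}$ and $X \in \mathbb{R}^{n \times N_c}$, where $X$ serves as a compressed state sequence and $\mathcal{R}(\bar{\Theta}_k) = \mathcal{R}(\Theta_k)$ almost surely. Substituting $\bar{L}_p \bar{W}_p = \bar{\Theta}_k X$ into \eqref{eqYfinalOmega} yields $\bar{Y}_f = \bar{\Theta}_k X + \Psi_k \bar{U}_f + \Phi_k \bar{E}_f$, which is the assertion. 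Conceptually this theorem is the compressed mirror of the uncompressed identity \eqref{Yfuture}, with $(\bar{\Theta}_k, X)$ replacing $(\Theta_k, X_f)$, so the role of the proof is to certify that data compression preserves the structural form of the state-space factorization rather than to derive anything new.

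The step I would scrutinize is therefore the prior identity \eqref{eqYfinalOmega}, since the final result is an immediate substitution once it is in hand. For the plain sketch $(q=0)$ this is clean: each block is a bare right-multiplication by $\mathcal{C}$, i.e. $\bar{Y}_f = Y_f \mathcal{C}$, $\bar{U}_f = U_f \mathcal{C}$, $\bar{E}_f = E_f \mathcal{C}$, $\bar{W}_p = W_p \mathcal{C}$, so post-multiplying \eqref{Yfuturesto} by $\mathcal{C}$ and using $\bar{L}_p = \Theta_k L_p$ reproduces \eqref{eqYfinalOmega} term by term. With the power iteration $(q=1)$ each block instead carries its own Gram prefactor, e.g. $\bar{Y}_f = (Y_f Y_f^T) Y_f \mathcal{C}$ against $\bar{U}_f = (U_f U_f^T) U_f \mathcal{C}$, and these prefactors do not commute through the Toeplitz matrices $\Psi_k$ and $\Phi_k$; hence a single right-multiplication no longer reproduces \eqref{eqYfinalOmega} exactly. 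I expect this to be the main obstacle, and I would resolve it in the same approximate sense already used in Claim \ref{Claim_Efbar}, where $\beta_{ij} \approx \delta_{ij}$ and the uncorrelatedness of $e$ with $u$ and $y$ make the residual cross terms negligible, so that the identity holds exactly for $q=0$ and approximately for $q=1$.
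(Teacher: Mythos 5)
Your proof is correct and follows exactly the paper's route: substitute the decomposition $\bar{\zeta}=\bar{L}_{p}\bar{W}_{p}=\bar{\Theta}_{k}X$ from Theorem \ref{Theorem_thetak} into the compressed equation \eqref{eqYfinalOmega}. Your additional observation that \eqref{eqYfinalOmega} holds exactly only for $q=0$ (since for $q=1$ the distinct Gram prefactors on each block do not commute through $\Psi_{k}$ and $\Phi_{k}$, so the identity is only approximate) is a valid caveat that the paper itself glosses over, but it concerns the prerequisite equation rather than the theorem's own argument.
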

	\begin{proof}
		Using equation \eqref{eqYfinalOmega} and $\bar{\zeta}=\bar{\Theta}_{k}X$
		(see Theorem \ref{Theorem_thetak}), we can write $\bar{Y}_{f}$ as,
		\[
		\begin{aligned}\label{eqYbarfinal}\bar{Y}_{f} & =\bar{L}_{p}\bar{W}_{p}+\Psi_{k}\bar{U}_{f}+\Phi_{k}\bar{E}_{f}\\
			& =\bar{\Theta}_{k}X+\Psi_{k}\bar{U}_{f}+\Phi_{k}\bar{E}_{f}
		\end{aligned}
		\]
	\end{proof}
	The above Theorem proves that even after data compression we can
	use $\bar{\Theta}_{k}$, $\Psi_{k}$ and $\Phi_{k}$ to estimate the system parameters $\{A,B,C,D,K\}$ upto similarity transform.
	
	\subsection{Estimating model order, A and C}\label{subsection:AC}
	The model order ($n$) can be estimated from rank($R_{\zeta}$). In other words, the estimated rank is equal to the number of significant singular values of $R_{\zeta}$. This is usually inferred from a plot of the logs of the singular values in $\Sigma_{r}$ as obtained in \eqref{Rzeta}. Then, $A$ and $C$ are estimated using $\hat{\Theta}_{k}$ obtained in \eqref{obsmatfinal}, as in \eqref{eqAmatrix}. 
	
	\subsection{Estimating B and D}\label{subsection:BD_proposed}
	We propose a novel method, which does not depend on I/O data size ($N$), to estimate $B$ and $D$. Firstly, $\Psi_{k}$ is estimated using the already calculated LQ decomposition by comparing the second term of \eqref{Yfbar_ortho_data} and \eqref{eq_orthoproj_compressed}, 
	\begin{equation}\label{Psik}
		\hat{\Psi}_{k} = (\bar{R}_{31}-\bar{R}_{32}\bar{R}^{\dagger}_{22}\bar{R}_{21})\bar{R}^{-1}_{11}
	\end{equation}
	Then, $\{B,D\}$ can be estimated using $\hat{\Psi}_{k}$ and $\hat{\Theta}_{k}$ obtained from \eqref{obsmatfinal}. We extract the first $m$-columns from the estimated $\Psi_{k}$ calculated in \eqref{Psik} and compare with the first block column of $\Psi_{k}$ (see section \ref{section_conv} for structure of $\Psi_{k}$). Therefore,
	\begin{equation}\label{m_bd}
		\hat{M} := \underbrace{\hat{\Psi}_{k}(:,1:m)}_{\underset{\text{Data}}{\text{from \vspace{0.25cm}} \eqref{Psik}}} = \underbrace{\begin{bmatrix}
				D \\ CB\\ CAB\\\vdots\\CA^{k-2}B
		\end{bmatrix}}_{structure \rm\ of \rm\ \Psi_{k}}
	\end{equation}
	\begin{equation}\begin{aligned}\label{mupbdestimate}
			\hat{M}^{\uparrow} := \hat{M}((p+1):kp,:) &= \begin{bmatrix}
				CB\\ CAB\\\vdots\\CA^{k-2}B
			\end{bmatrix} = \begin{bmatrix}C\\CA\\\vdots\\CA^{k-2}\end{bmatrix} B\\
			&= \hat{\Theta}_{k}^{\downarrow}B
	\end{aligned}\end{equation}
	Next, $D$ and  $B$ can be estimated using \eqref{m_bd} and \eqref{mupbdestimate} respectively as, 
	\begin{equation}\label{eqDmatrix}
		\hat{D}= \hat{M}(1:p,:)
	\end{equation}  
	and
	\begin{equation}\label{eqBmatrix}\begin{aligned}
			\hat{B} &= arg \underset{B \in \mathbb{R}^{n \times m}}{min} || \hat{\Theta}^{\downarrow}_{k}B - \hat{M}^{\uparrow}||_{F}\\
			&= (\hat{\Theta}_{k}^{\downarrow})^{\dagger}\hat{M}^{\uparrow}
		\end{aligned}
	\end{equation} 
	We can see that $D$ is estimated directly by reading the $1^{st}$-$p$ rows of $\hat{M}$. Now, to estimate $B$, negligible computation is required. Since, $(\hat{\Theta}_{k}^{\downarrow})^{\dagger}$ has been already computed to estimate $A$ (see  \eqref{eqAmatrix}) and $\hat{M}^{\uparrow}$ is just the shifted version of $\hat{M}$ (see \eqref{mupbdestimate}), hence, the only computation required to estimate $B$ is the multiplication of two small matrices in \eqref{eqBmatrix}. 
	
	\subsection{Estimation of K}\label{subsection:section_stochasticident}
	We propose a new fast method, which is independent of data-size $N$, to estimate $K$. We use an approach similar to the $\{B, D\}$ estimation. Here we exploit the structure of $\Phi_{k}$ (see section \ref{section_conv}), thereby avoiding computation on the full sized data matrices. 
	From LQ decomposition (see \eqref{Hmatrix}),  $Y_{f}$ can be written as,
	\begin{equation}\label{eqYfinal}
		Y_{f} = R_{31}Q_{1}^{T} + R_{32}Q_{2}^{T} + R_{33}Q_{3}^{T}
	\end{equation}
	Similarly for the randomized case, $\bar{Y}_{f}$ can be written as,
	\begin{equation}\label{eqYbfinal}
		\bar{Y}_{f} = \bar{R}_{31}\bar{Q}_{1}^{T} + \bar{R}_{32}\bar{Q}_{2}^{T} + \bar{R}_{33}\bar{Q}_{3}^{T}
	\end{equation}
	For the uncompressed case as shown in \cite{Katayama_LQfullmodel}, the stochastic component can be obtained  using \eqref{Yfuturesto} and \eqref{eqYfinal},
	\begin{equation}\label{R33}
		R_{33}Q^T_{3} =  \Phi_{k} E_f
	\end{equation}
	We show that even after data
	compression, the third term of \eqref{eqYbfinal} obeys a similar equality as \eqref{R33}.
	\begin{lemma}\label{Lemma_phikbarEf}
		$\Phi_{k}\bar{E}_{f}\approx\bar{R}_{33}\bar{Q}_{3}^{T}$
	\end{lemma}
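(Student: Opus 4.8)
The plan is to reproduce the geometry behind the uncompressed identity \eqref{R33}, while tracking exactly where the compression-induced approximation enters. The two ingredients are the structure-preserving identity $\bar{Y}_{f}=\bar{L}_{p}\bar{W}_{p}+\Psi_{k}\bar{U}_{f}+\Phi_{k}\bar{E}_{f}$ from \eqref{eqYfinalOmega} (equivalently Theorem~\ref{Lemma_structurepreserve}) and the LQ factorization \eqref{eqYbfinal}. The key observation is that $\bar{R}_{33}\bar{Q}_{3}^{T}$ is precisely the component of $\bar{Y}_{f}$ orthogonal to the joint row-space of $\bar{U}_{f}$ and $\bar{W}_{p}$, and I want to show that this component is carried almost entirely by the noise term $\Phi_{k}\bar{E}_{f}$.

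First I would isolate $\bar{R}_{33}$ exactly. Since the LQ factors obey $\bar{Q}_{i}^{T}\bar{Q}_{j}=\delta_{ij}I$, right-multiplying \eqref{eqYbfinal} by $\bar{Q}_{3}$ annihilates the first two terms and gives $\bar{Y}_{f}\bar{Q}_{3}=\bar{R}_{33}$. On the other hand, right-multiplying the structural identity by $\bar{Q}_{3}$ and using $\bar{U}_{f}=\bar{R}_{11}\bar{Q}_{1}^{T}$ together with $\bar{W}_{p}=\bar{R}_{21}\bar{Q}_{1}^{T}+\bar{R}_{22}\bar{Q}_{2}^{T}$ (so that $\bar{U}_{f}\bar{Q}_{3}=0$ and $\bar{W}_{p}\bar{Q}_{3}=0$) gives $\bar{Y}_{f}\bar{Q}_{3}=\Phi_{k}\bar{E}_{f}\bar{Q}_{3}$. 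Combining the two yields the exact relation $\bar{R}_{33}=\Phi_{k}\bar{E}_{f}\bar{Q}_{3}$, and hence $\bar{R}_{33}\bar{Q}_{3}^{T}=\Phi_{k}\bar{E}_{f}\bar{Q}_{3}\bar{Q}_{3}^{T}$. No approximation is used up to this point.

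It then remains to establish $\Phi_{k}\bar{E}_{f}\bar{Q}_{3}\bar{Q}_{3}^{T}\approx\Phi_{k}\bar{E}_{f}$, i.e.\ that projecting the rows of $\Phi_{k}\bar{E}_{f}$ onto $\mathrm{span}(\bar{Q}_{3})$ is essentially an identity. Because $\Phi_{k}\bar{E}_{f}=\bar{Y}_{f}-\bar{L}_{p}\bar{W}_{p}-\Psi_{k}\bar{U}_{f}$ is built from blocks of $\bar{H}$, its rows lie in the row-space of $\bar{H}$, i.e.\ in $\mathrm{span}([\bar{Q}_{1}\ \bar{Q}_{2}\ \bar{Q}_{3}])$, so it suffices to show the $\bar{Q}_{1}$- and $\bar{Q}_{2}$-components are negligible: $\Phi_{k}\bar{E}_{f}\bar{Q}_{1}\approx0$ and $\Phi_{k}\bar{E}_{f}\bar{Q}_{2}\approx0$. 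This is where Claim~\ref{Claim_Efbar} enters. From $\bar{E}_{f}\bar{U}_{f}^{T}\approx0$ and invertibility of $\bar{R}_{11}$ I get $\bar{E}_{f}\bar{Q}_{1}\approx0$; feeding this into $\bar{E}_{f}\bar{W}_{p}^{T}\approx0$ and using that $\bar{R}_{22}$ is a.s.\ full-rank (from the rank preservation of $\bar{W}_{p}$ in Theorem~\ref{Theorem_image_presevation}) gives $\bar{E}_{f}\bar{Q}_{2}\approx0$. Left-multiplication by $\Phi_{k}$ preserves both approximate equalities, so $\Phi_{k}\bar{E}_{f}\approx\Phi_{k}\bar{E}_{f}\bar{Q}_{3}\bar{Q}_{3}^{T}=\bar{R}_{33}\bar{Q}_{3}^{T}$.

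The main obstacle is this final approximation. In the uncompressed case the orthogonality $E_{f}/[W_{p};U_{f}]=0$ holds cleanly, yielding the exact \eqref{R33}; here I must instead rely on the approximate decorrelation of Claim~\ref{Claim_Efbar}, which itself comes from replacing the sample cross-moments $\beta_{ij}$ by their expectations in \eqref{eq_BBT_expectation}. I would therefore be explicit that the quality of the stated approximation is controlled by how closely $\mathcal{C}\mathcal{C}^{T}\approx I$ (governed by the oversampling $l$) and by the finite-sample decorrelation of $e(t)$ with the inputs and past outputs, and that the invertibility of $\bar{R}_{22}$ invoked above is justified a.s.\ by Theorem~\ref{Theorem_image_presevation}.
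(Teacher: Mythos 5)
Your proof is correct and follows essentially the same route as the paper's: both identify $\bar{R}_{33}\bar{Q}_{3}^{T}$ as the component of $\bar{Y}_{f}$ orthogonal to the joint row space of $\bar{W}_{p}$ and $\bar{U}_{f}$ via the LQ factorization, and both invoke Claim~\ref{Claim_Efbar} to argue that this component is approximately $\Phi_{k}\bar{E}_{f}$. You merely unpack the projection more explicitly (splitting the exact identity $\bar{R}_{33}=\Phi_{k}\bar{E}_{f}\bar{Q}_{3}$ from the approximation $\bar{E}_{f}\bar{Q}_{1}\approx 0$, $\bar{E}_{f}\bar{Q}_{2}\approx 0$), which is a welcome refinement but not a different argument.
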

	\begin{proof}
		From Claim \ref{Claim_Efbar}, $\bar{E}_{f}/\begin{bmatrix}\bar{W}_{p}\\
			\bar{U}_{f}
		\end{bmatrix}\approx 0$ which in turn implies that $\bar{Y}_{f}/\begin{bmatrix}\bar{W}_{p}\\
			\bar{U}_{f}
		\end{bmatrix}^{\perp}\approx \Phi_{k}\bar{E}_{f}$ (see \eqref{eqYfinalOmega} and \eqref{Ybarfinal}). From
		the LQ decomposition in \eqref{eqYbfinal}, $\bar{Y}_{f}/\begin{bmatrix}\bar{W}_{p}\\
			\bar{U}_{f}
		\end{bmatrix}^{\perp}=R_{33}\bar{Q}_{3}^{T}$. Equating both we get $\Phi_{k}\bar{E}_{f}\approx \bar{R}_{33}\bar{Q}_{3}^{T}$.
	\end{proof}
	Next we are interested in exploiting the structure of $\bar{R}_{33}$ for extracting the stochastic component. For that we analyze the structural properties of $\bar{E}_{f}\bar{E}_{f}^T$. 
	\begin{lemma}\label{Lemma_rank_barEf}
		$rank(\bar{E}_{f}\bar{E}_{f}^{T}) = rank(E_{f}E_{f}^{T})$ a.s.
	\end{lemma}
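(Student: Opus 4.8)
The plan is to sidestep any direct manipulation of the structured product
$\bar{E}_{f}\bar{E}_{f}^{T}=S_{E_{f}}^{q}E_{f}\mathcal{C}\mathcal{C}^{T}E_{f}^{T}S_{E_{f}}^{q}$ and instead reduce the statement to an equality of ranks of $\bar{E}_{f}$ and $E_{f}$ themselves. The one elementary fact I would invoke is that for any real matrix $M$ we have $rank(MM^{T})=rank(M)$. This follows because $MM^{T}x=0$ forces $x^{T}MM^{T}x=\|M^{T}x\|^{2}=0$, hence $M^{T}x=0$, so $\ker(MM^{T})=\ker(M^{T})$ and therefore $rank(MM^{T})=rank(M^{T})=rank(M)$. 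This step is purely deterministic and carries no ``a.s.'' qualification.

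First I would apply this identity to $M=\bar{E}_{f}$ to obtain $rank(\bar{E}_{f}\bar{E}_{f}^{T})=rank(\bar{E}_{f})$, and to $M=E_{f}$ to obtain $rank(E_{f}E_{f}^{T})=rank(E_{f})$. The only remaining ingredient is a bridge between $rank(\bar{E}_{f})$ and $rank(E_{f})$, and this is already supplied by Lemma \ref{Lemma_iosubspacepreserved}, which asserts $\mathcal{R}(E_{f})=\mathcal{R}(\bar{E}_{f})$ a.s.; equal range spaces give equal rank, so $rank(\bar{E}_{f})=rank(E_{f})$ a.s. Chaining the three equalities yields
\[
rank(\bar{E}_{f}\bar{E}_{f}^{T})=rank(\bar{E}_{f})=rank(E_{f})=rank(E_{f}E_{f}^{T})\quad\text{a.s.},
\]
which is the claim.

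The step carrying all the subtlety is the middle one, where the probabilistic qualifier enters solely through Lemma \ref{Lemma_iosubspacepreserved}. I would note explicitly that this lemma already accommodates the power-iteration definition $\bar{E}_{f}=(E_{f}E_{f}^{T})^{q}E_{f}\mathcal{C}$ for both $q=0$ and $q=1$: since $(E_{f}E_{f}^{T})^{q}E_{f}$ has the same column space as $E_{f}$ (raising the nonzero singular values to a power changes neither the rank nor the span of the leading left singular vectors), right-multiplication by the Gaussian $\mathcal{C}$ preserves this range a.s. by Theorem \ref{Theorem_image_presevation}. Thus there is no hard obstacle beyond correctly routing the ``a.s.'' statement through the already-established range-preservation results; the bookkeeping of which equalities are deterministic and which hold only almost surely is the main thing to get right.
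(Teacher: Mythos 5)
Your proposal is correct and follows essentially the same route as the paper: the identical three-step chain $rank(\bar{E}_{f}\bar{E}_{f}^{T})=rank(\bar{E}_{f})=rank(E_{f})=rank(E_{f}E_{f}^{T})$, with the deterministic identity $rank(MM^{T})=rank(M)$ at both ends and the randomized range/rank-preservation result supplying the almost-sure middle equality (the paper cites Lemma~\ref{Theorem_rank_preservation} directly where you cite its corollary Lemma~\ref{Lemma_iosubspacepreserved}, which is an immaterial difference). Your explicit bookkeeping of where the ``a.s.'' enters and your remark on the $q$-power factor are welcome elaborations of steps the paper leaves implicit.
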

	\begin{proof}
		Using Lemma \ref{Theorem_rank_preservation}, $rank(\bar{E}_{f}) = rank(E_{f})$ a.s. Now,
		\begin{equation*}\begin{aligned}
				& rank(\bar{E}_{f}\bar{E}_{f}^{T}) = rank(\bar{E}_{f})\\
				& rank(\bar{E}_{f}) = rank(E_{f})\\
				& rank(E_{f}) = rank(E_{f}E_{f}^{T})\\
			\end{aligned}
		\end{equation*}
	\end{proof}
	Since it is easy to see that $E_{f}E_{f}^{T}$ is positive definite, Lemma \ref{Lemma_rank_barEf} implies $\bar{E}_{f}\bar{E}_{f}^{T}$ is also positive definite a.s.
 
	\begin{lemma}\label{Lemma_barEf}
		$ \bar{E}_{f}\bar{E}_{f}^{T} $ is a block-diagonal matrix.
	\end{lemma}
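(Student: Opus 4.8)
The plan is to reduce the statement to the structure of the Gram matrix $S_{E_f}=E_f E_f^T$, and then to observe that block-diagonality survives both the sketch $\mathcal{C}$ and the matrix powers introduced by the power parameter $q$. First I would expand, using the definition $\bar{E}_f=(E_fE_f^T)^qE_f\mathcal{C}=S_{E_f}^qE_f\mathcal{C}$,
\begin{equation*}
\bar{E}_f\bar{E}_f^T = S_{E_f}^q\,(E_f\mathcal{C})(E_f\mathcal{C})^T\,S_{E_f}^q .
\end{equation*}
The middle factor is exactly the double sum already evaluated in the proof of Lemma \ref{Lemma_Efbar} (with $U_f$ replaced by $E_f$), namely $(E_f\mathcal{C})(E_f\mathcal{C})^T=\sum_{i=1}^{N}\sum_{j=1}^{N}\beta_{ij}\,e_{k+i-1}e_{k+j-1}^T$. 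Invoking the approximation $\beta_{ij}\approx\delta_{ij}$ from \eqref{eq_BBT_expectation}, this collapses to $\sum_{i=1}^{N}e_{k+i-1}e_{k+i-1}^T=E_fE_f^T=S_{E_f}$, so that $\bar{E}_f\bar{E}_f^T\approx S_{E_f}^{2q+1}$.

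The crux is then to show that $S_{E_f}=E_fE_f^T$ is itself block diagonal, after which the conclusion is immediate. I would partition $S_{E_f}$ into a $k\times k$ grid of $p\times p$ blocks indexed by the time offsets $a,b\in\{0,1,\dots,k-1\}$ coming from the block-Hankel structure of $E_f$. Since the $(a,b)$ block of $e_{k+s-1}e_{k+s-1}^T$ is $e(k+s-1+a)\,e(k+s-1+b)^T$, the $(a,b)$ block of $S_{E_f}$ equals $\sum_{s=1}^{N}e(k+s-1+a)\,e(k+s-1+b)^T$. Appealing to the whiteness assumption $\mathbb{E}[e(t_1)e(t_2)^T]=\eta\,\delta_{t_1t_2}$ together with the sample-mean $\approx$ expectation convention used throughout, this sum is $\approx N\eta$ when $a=b$ and $\approx 0$ when $a\neq b$. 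Hence only the diagonal blocks survive, and $S_{E_f}$ is block diagonal with $k$ diagonal blocks, each of size $p\times p$.

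Finally, since $S_{E_f}$ is block diagonal, so is every power $S_{E_f}^{2q+1}$ (products and powers of block-diagonal matrices remain block diagonal), and therefore $\bar{E}_f\bar{E}_f^T$ is block diagonal.

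The main obstacle is the block-diagonality of $S_{E_f}$: this is where the genuine stochastic content resides, because the vanishing of the off-diagonal blocks is precisely the statement that the sample cross-correlations of the innovations at nonzero lag are negligible, which holds only approximately through the white-noise assumption and the ergodic sample-mean approximation. The two surrounding steps — the sketch $\mathcal{C}$ preserving $S_{E_f}$ via $\beta_{ij}\approx\delta_{ij}$, and powers preserving the block structure — are essentially mechanical once this is granted.
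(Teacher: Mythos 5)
Your proposal is correct and follows essentially the same route as the paper's proof: expand $\bar{E}_f\bar{E}_f^T=S_{E_f}^q\big(\sum_{i,j}\beta_{ij}e_{k+i-1}e_{k+j-1}^T\big)S_{E_f}^q$, collapse the middle sum via $\beta_{ij}\approx\delta_{ij}$, and show $S_{E_f}\approx N\,\mathrm{diag}(\eta,\dots,\eta)$ by neglecting the off-diagonal blocks as sample cross-correlations of white noise at nonzero lag. The paper writes the conclusion as $\bar{E}_f\bar{E}_f^T\approx N^{2q+1}\Omega^{2q+1}$, which is exactly your $S_{E_f}^{2q+1}$ with $S_{E_f}\approx N\Omega$, and you correctly identify that the only genuinely stochastic step is the block-diagonality of $S_{E_f}$.
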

	\begin{proof}
		We know, 
		\begin{equation*}
			\begin{aligned}\bar{E}_{f} &= ({E}_{f}{E}_{f}^{T})^{q}{E}_{f}\mathcal{C} = S_{E_{f}}^{q}\bigg(\sum_{i=1}^{N} e_{k+i-1}\alpha_{k+i-1}^{T}\bigg)
			\end{aligned}
		\end{equation*}
Then $\bar{E}_{f}\bar{E}_{f}^{T}$ can be written as:
		\begin{equation}\label{eq_EfbarEfbar}
			\begin{aligned}
				\bar{E}_{f}\bar{E}_{f}^{T} 
				& =S_{E_{f}}^{q}\bigg(\sum_{i=1}^{N}\sum_{j=1}^{N}\beta_{ij}e_{k+i-1}e_{k+j-1}^{T}\bigg)S_{E_{f}}^{q}\\
			\end{aligned}
		\end{equation}
			Define, $e_{k+i-1} :=
			\big[ e(k+i-1)^{T} \hspace{0.2cm} e(k+i)^{T} \hspace{0.2cm} \hdots \hspace{0.2cm} e(2k+i-2)^{T} \big]^{T} 
		 \in \mathbb{R}^{kp}$ with $e(l) \in \mathbb{R}^{p}$. Then $S_{E_{f}}$ in matrix from:
  \small
  		\begin{equation*}\begin{aligned}
					&S_{E_{f}} = E_{f}E_{f}^{T} = \sum_{i=1}^{N}e_{k+i-1}e_{k+i-1}^T\\  &\approx \text{diag}\Big(\sum_{i=1}^{N}e(k+i-1)e(k+i-1)^{T},\sum_{i=1}^{N}e(k+i)e(k+i)^{T},\\
      &\hspace{1.2in}\hdots,\sum_{i=1}^{N}e(2k+i-2)e(2k+i-2)^{T}\Big)
					 \\
    &\text{The cross-terms are neglected since they are products of }\\
    & \text{uncorrelated variables.}\\
				& \text{(multiply and divide by $N$)}\\
					& \approx N \text{diag}\Big(
					\mathbb{E}[(e(k+i-1)e(k+i-1)^{T})],\mathbb{E}[(e(k+i)e(k+i)^{T})]\\ &\hspace{1.2in}\hdots,\mathbb{E}[(e(2k+i-2)e(2k+i-2)^{T})]\Big)
				\\
				&=N\begin{bmatrix}
					\eta& 0 & \hdots & 0\\0&\eta& \hdots&0\\
					\vdots & \vdots & \ddots & \vdots\\
					0&0&\hdots& \eta
				\end{bmatrix} =: N \Omega
			\end{aligned}
		\end{equation*}
\normalsize
where, $\eta >0 \in \mathbb{R}^{p \times p}$ is the noise/innovation covariance matrix. Similarly, the middle term of \eqref{eq_EfbarEfbar} can be approximated using \eqref{eq_BBT_expectation} as:
		\begin{equation*}\begin{aligned}
							\sum_{i=1}^{N}\sum_{j=1}^{N}\beta_{ij}e_{k+i-1}e_{k+j-1}^{T}  & \approx \sum_{i=1}^{N}e_{k+i-1}e_{k+i-1}^{T}
			\end{aligned}
		\end{equation*}
		Therefore:
		\begin{equation}\label{eq_EfEf_final}
			\bar{E}_{f}\bar{E}_{f}^{T} \approx N^{(2q+1)}\Omega^{(2q+1)}
		\end{equation}
	\end{proof}
	\begin{lemma}\label{Lemma_R33pd}
		$\bar{R}_{33}\bar{R}^T_{33}$ is a symmetric positive definite matrix.
	\end{lemma}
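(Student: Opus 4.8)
The plan is to relate $\bar{R}_{33}\bar{R}_{33}^{T}$ to the positive definite matrix $\bar{E}_{f}\bar{E}_{f}^{T}$ through the identity already established in Lemma \ref{Lemma_phikbarEf}, and then argue that congruence by the invertible Toeplitz matrix $\Phi_{k}$ preserves positive definiteness. Symmetry is immediate, since for any matrix one has $(\bar{R}_{33}\bar{R}_{33}^{T})^{T}=\bar{R}_{33}\bar{R}_{33}^{T}$; hence all the work lies in establishing positive definiteness.

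First I would start from Lemma \ref{Lemma_phikbarEf}, namely $\bar{R}_{33}\bar{Q}_{3}^{T}\approx\Phi_{k}\bar{E}_{f}$, and post-multiply each side by its own transpose. On the left-hand side, the orthonormality of the rows of the $\bar{Q}$ factor in the LQ decomposition of $\bar{H}$ used in \eqref{eqYbfinal} (in particular $\bar{Q}_{3}^{T}\bar{Q}_{3}=I$) causes the $\bar{Q}_{3}$ factors to cancel, leaving $\bar{R}_{33}\bar{R}_{33}^{T}$. On the right-hand side I obtain $\Phi_{k}\big(\bar{E}_{f}\bar{E}_{f}^{T}\big)\Phi_{k}^{T}$. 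Thus $\bar{R}_{33}\bar{R}_{33}^{T}\approx\Phi_{k}\big(\bar{E}_{f}\bar{E}_{f}^{T}\big)\Phi_{k}^{T}$.

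Next I would invoke two structural facts. First, $\Phi_{k}$ is unit lower block-triangular, with identity blocks on its diagonal (see the structure of $\Phi_{k}$ in Section \ref{section_conv}), so it is invertible. Second, $\bar{E}_{f}\bar{E}_{f}^{T}$ is positive definite a.s.: it is symmetric positive semidefinite by construction, and Lemma \ref{Lemma_rank_barEf} shows it has full rank a.s., so it is positive definite a.s.; alternatively this follows directly from the explicit form \eqref{eq_EfEf_final}, $\bar{E}_{f}\bar{E}_{f}^{T}\approx N^{(2q+1)}\Omega^{(2q+1)}$, where $\Omega$ is block diagonal with blocks $\eta>0$ and hence positive definite. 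Since congruence by an invertible matrix preserves positive definiteness, $\Phi_{k}\big(\bar{E}_{f}\bar{E}_{f}^{T}\big)\Phi_{k}^{T}$ is positive definite, and therefore so is $\bar{R}_{33}\bar{R}_{33}^{T}$.

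The main obstacle I anticipate is handling the approximate equality from Lemma \ref{Lemma_phikbarEf} rigorously: strictly speaking the chain above yields positive definiteness only up to the approximation, so I would need to argue that the neglected correlation terms are small enough to keep the smallest eigenvalue of $\bar{R}_{33}\bar{R}_{33}^{T}$ bounded away from zero. The secondary technical point is to justify $\bar{Q}_{3}^{T}\bar{Q}_{3}=I$ explicitly from the orthogonality of the factor $\bar{Q}$ in the LQ decomposition of $\bar{H}$, rather than assuming it, so that the cancellation on the left-hand side is valid.
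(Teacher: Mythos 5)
Your proposal is correct and follows essentially the same route as the paper: both start from Lemma \ref{Lemma_phikbarEf} to obtain $\bar{R}_{33}\bar{R}_{33}^{T}\approx\Phi_{k}\bar{E}_{f}\bar{E}_{f}^{T}\Phi_{k}^{T}$ and then use the positive definiteness of $\bar{E}_{f}\bar{E}_{f}^{T}$ together with the unit block-triangular (hence invertible) structure of $\Phi_{k}$. The only cosmetic difference is the last step, where the paper passes through the Cholesky factor $\Omega^{(2q+1)/2}$ of the approximation $\bar{E}_{f}\bar{E}_{f}^{T}\approx N^{(2q+1)}\Omega^{(2q+1)}$ while you invoke the general fact that congruence by an invertible matrix preserves positive definiteness, which is arguably cleaner; your remark about the approximate equality applies equally to the paper's own proof.
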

	\begin{proof}
		Using Lemma \ref{Lemma_phikbarEf},
		\begin{equation*}\label{R33_symm}\begin{aligned}
				\bar{R}_{33}\bar{R}^T_{33} &= (\Phi_{k} \bar{E}_f \bar{Q}_{3}) (\Phi_{k} \bar{E}_f \bar{Q}_{3})^T\\
				&= \Phi_{k} \bar{E}_f \bar{E}^T_f\Phi^T_{k}
			\end{aligned}
		\end{equation*}
		
		Now, multiply by $1/N^{(2q+1)}$ on both sides, we get
		\begin{equation}\label{R33_cov}\begin{aligned}
				\frac{1}{N^{(2q+1)}}	\bar{R}_{33}\bar{R}^T_{33} &=\frac{1}{N^{(2q+1)}}\Phi_{k} \bar{E}_f \bar{E}_f^T\Phi^T_{k}\\
				&= \Phi_{k} \bigg\{\frac{1}{N^{(2q+1)}}\bar{E}_f \bar{E}_f^T\bigg\} \Phi^T_{k}\\
				&=  \Phi_{k}\Omega^{(2q+1)} \Phi^T_{k} \hspace{0.5cm} (\text{from} \hspace{0.2cm} \eqref{eq_EfEf_final})
			\end{aligned}
		\end{equation}
		Since, $\Omega^{(2q+1)} >0$ and is block diagonal, the Cholesky factors of $\Omega^{(2q+1)}$, denoted by $\Omega^{(2q+1)/2}$, is also block diagonal. Hence, the above expression can be factorized as $(\Phi_{k}\Omega^{(2q+1)/2)})(\Phi_{k}\Omega^{(2q+1)/2})^{T}$. From the structures of $\Phi_{k}$ and $\Omega^{(2q+1)/2}$, it follows that their product is block lower triangular with each of the diagonal blocks being positive definite. Hence the product is  also positive definite.
	\end{proof}
	We can use the above Lemma to estimate the stochastic component. To achieve that we make use of Cholesky decomposition.
	From \eqref{R33_cov}, $\frac{1}{N^{(2q+1)}}\bar{R}_{33}\bar{R}_{33}^{T} =  \Phi_{k}\Omega^{(2q+1)} \Phi^T_{k}$. Then the Cholesky factor becomes,
	\begin{equation}\label{R33_tilde}
		\frac{1}{\sqrt{N^{(2q+1)}}} \bar{R}_{33} = \Phi_{k}\begin{bmatrix}
			\omega& 0 & \hdots & 0\\0&\omega& \hdots&0\\
			\vdots & \vdots & \ddots & \vdots\\
			0&0&\hdots& \omega
		\end{bmatrix}  
	\end{equation}
	where $\omega\omega^T = \eta^{(2q+1)}$ and $\omega \in \mathbb{R}^{p \times p}$. By defining, $\tau := \frac{1}{\sqrt{N^{(2q+1)}}} \in \mathbb{R}$ and substituting the structure of $\Phi_k$ above we get:
	\begin{equation}\begin{aligned}\label{structure_NOISE}
			\tau\bar{R}_{33}&= \begin{bmatrix}
				I & 0 & \hdots & 0\\CK & I & \ddots & \vdots\\ \vdots & \ddots & \ddots & 0 \\ CA^{k-2}K & \hdots & CK & I  \end{bmatrix} \begin{bmatrix}
				\omega& 0 & \hdots & 0\\0&\omega& \hdots&0\\
				\vdots & \vdots & \ddots & \vdots\\
				0&0&\hdots& \omega
			\end{bmatrix} \\
			&= \begin{bmatrix}
				\omega & 0 & \hdots & 0\\CK\omega & \omega & \hdots&0\\ \vdots & \vdots & \ddots& \vdots \\ CA^{k-2}K\omega & CA^{k-3}\omega & \hdots & \omega \end{bmatrix}
		\end{aligned} 
	\end{equation}
	Now we exploit the structure of the above equation to estimate $K$. Define, 
	\begin{equation}\label{Pk}
		\hat{P}_k := \underbrace{\tau\bar{R}_{33}(:,1:p)}_{\underset{\text{Data}}{\text{from LQ \vspace{0.2cm}} \eqref{eqYbfinal}}} = \underbrace{\begin{bmatrix}
				\omega \\ CK\omega\\ \vdots \\ CA^{k-2}\omega
		\end{bmatrix}}_{structure \rm\ of \rm\ \eqref{structure_NOISE}}
	\end{equation}
	Hence, $K$ can be estimated by defining $\hat{P}_{k}^{\uparrow} := \hat{P}_{k}(p+1:end,:)$ and noting that: 
	\begin{equation*}
		\hat{P}_{k}^{\uparrow} = \begin{bmatrix}
			C \\ CA\\ \vdots \\ CA^{k-2}
		\end{bmatrix} K\omega = \hat{\Theta}^{\downarrow}_{k} K\omega
	\end{equation*}
	Therefore,
	\begin{equation}\label{Kestimate}\begin{aligned}
			\hat{K} &=arg \underset{K \in \mathbb{R}^{n \times p}}{min} || \hat{\Theta}^{\downarrow}_{k}K - \hat{P}_{k}^{\uparrow}\hat{\omega}^{-1}||_{F}\\
			&= (\hat{\Theta}^{\downarrow}_{k})^{\dagger}\hat{P}_{k}^{\uparrow}\hat{\omega}^{-1}
		\end{aligned}
	\end{equation}
	where, $\omega$ can be estimated as $\hat{\omega} = \hat{P}_k(1:p,:)$ and is invertible.  Since, $(\hat{\Theta}^{\downarrow}_{k})^{\dagger}$ has been already calculated to estimate $A$ (see \eqref{eqAmatrix}), the only computation left at this step is calculating the inverse of $\hat{\omega} \in \mathbb{R}^{p \times p}$ and the matrix multiplication required in \eqref{Kestimate}.
	\begin{algorithm}\label{algo:estimateK}
	Obtain $\bar{R}_{33}$ from LQ decomposition (ref \eqref{eqYbfinal})\\
	Calculate $\tau\bar{R}_{33}$ and $\hat{P}_k := \tau\bar{R}_{33}(:,1:p)$\\
	Estimate $\omega$ as $\hat{\omega} = \hat{P}_k(1:p,:)$\\
	Define, $\hat{P}_{k}^{\uparrow} := \hat{P}_{k}(p+1:end,:)$\\
	Estimate $K$ using \eqref{Kestimate}
	\caption{Estimate of K}
\end{algorithm}
	 Hence, the estimate of K is obtained using $\bar{R}_{33} \in \mathbb{R}^{kp \times kp}$ and the multiplication of small matrices (see Algorithm \ref{algo:estimateK}).
	
	In summary, we have estimated the combined dynamics of the system up to similarity transform using equations \eqref{eqAmatrix}, \eqref{eqDmatrix}, \eqref{eqBmatrix} and \eqref{Kestimate}.

	\subsection{Proposed Alogrithm steps} The input-output training data i.e. $\{u(i)\}^{N_{t}-1}_{i=0}$ and $\{y(i)\}^{N_{t}-1}_{i=0}$ is given, where, $N_{t} := 2k+N-2$ (number of training samples). Also, we assume that a input-output validation dataset is provided, having $N_{v}$ samples which are unseen by the estimated model. To simplify the proposed algorithm, we divide it in two-parts, namely $(i)$ Streaming data compression (SDC) (see Algorithm \ref{algo:fastdatapre}) and, $(ii)$ Main algorithm. 
	The steps of the proposed main algorithm are summarized as follows:
	\begin{algorithm}[h]\label{algo:fastsubspaceid}
		\textbf{Input:} $\bar{U}_{p} \in \mathbb{R}^{km \times N_{c}}$, $\bar{U}_{f}\in \mathbb{R}^{km \times N_{c}}$, $\bar{Y}_{p}\in \mathbb{R}^{kp \times N_{c}}$ and $\bar{Y}_{f}\in \mathbb{R}^{kp \times N_{c}}$. (Output of Algorithm \ref{algo:fastdatapre} i.e. SDC)\\
		Perform reduced QR algorithm on $\bar{H}^{T}\in\mathbb{R}^{N_{c}\times2k(m+p)}$ to obtain $\bar{R}_{d}$.\\
		Extract $\bar{L}_{p}$ and $\Psi_{k}$ from $\bar{R}$ obtained above using \eqref{eq_zetabar} and \eqref{Psik} respectively. \\
		Perform reduced QR algorithm using $\bar{L}_{p}$ and $\bar{W}_{p}$ to obtain $\bar{R}_{\zeta}$ \eqref{Rzeta}.\\
		Perform SVD on $\bar{R}_{\zeta} \in \mathbb{R}^{kp \times kp}$ to estimate $\Theta_{k}$ \eqref{obsmatfinal}.\\
		Estimate A and C using $\Theta_{k}$ and it's shifted version i.e. $\Theta_{k}^{\uparrow}$ and $\Theta_{k}^{\downarrow}$ as in \eqref{eqAmatrix}.\\ 
		Estimate B and D using $1^{st}$-m columns of $\Psi_{k}$ and $\Theta_{k}^{\downarrow}$ as in \eqref{eqBmatrix} and \eqref{eqDmatrix}.\\
		Estimate K using Algorithm \ref{algo:estimateK}.\\
		Validate model using validation dataset (unseen by estimated model) and MSE.\\
		\textbf{Output:} Estimated $\{A, B, C, D, K\}$.\\ 
		\caption{Fast randomized subspace system identification (FR2SID) algorithm}
	\end{algorithm}
	
	In the proposed algorithm, if the MSE on validation data set (see step 9 in Algorithm \ref{algo:fastsubspaceid}) is high then tuning of hyper-parameters like over-sampling parameter ($l$) and $k$ might be required. It is a well known fact that low SNR leads to increased MSE. For such cases, choosing a high value of $l$ tend to give better results (\cite{anderson2022_ESRSVD}, \cite{Zhang_perturbsvd_2022}). We also present a summary or mind-map of the proposed algorithm in Fig. \ref{fig:Summary}. The blue dotted portion is the proposed algorithm.
	
	\begin{figure}[htbp]
		\begin{adjustbox}{max height=0.43\textheight,center}
			\begin{tikzpicture}[node distance=3cm]
				
				\node (in1) [io] {Load Input-Output data};
				\node (dec1) [decision, below of=in1, yshift=-0.5cm] {\begin{tabular}{ccc}
						Does $H$ fits into \\ slow memory?\\(see Note \ref{Case:RAM_space})
				\end{tabular}};
				\node (pro1) [process, right of=dec1, xshift=2cm] {Out-of-memory. \\Need Streaming Algorithm};
				\node (pro1a) [process, below of=pro1, xshift = 0cm, yshift = 0.5cm] {Randomized Subspace approach};
				\node (pro1b) [process, below of=pro1a,xshift = 0cm, yshift = 0.75cm] {Randomized Range Approx.- Streaming data\\ (Algorithm \ref{algo:fastdatapre})};
				\node (pro1c) [process, below of =pro1b, xshift = 0cm, yshift = 1cm] {FR2SID \\ (Algorithm \ref{algo:estimateK} and \ref{algo:fastsubspaceid})};
				\node (pro2) [process, left of=dec1, xshift=1.5cm, yshift = -5.5cm] {Conventional Subspace Algorithm\\(Algorithm \ref{algo:conv_SID})};
				
				\node[fit=(pro1a)(pro1b)(pro1c),myfit] (myfit1) {};
				\node[mytitle] at (myfit1) {\begin{tabular}{cc}
						Proposed Algorithm
				\end{tabular}};

				\draw [arrow] (in1) -- (dec1);
				\draw [arrow] (dec1) -- node[anchor=south] {No} (pro1);
				\draw [arrow] (pro1) -- (pro1a);
				\draw [arrow] (pro1b) -- (pro1c);
				\draw [arrow] (pro1a) -- (pro1b);
				\draw [arrow] (dec1.south) -- node[anchor=west] {Yes}(pro1b.west);
				\draw [arrow] (dec1.south) -- node[anchor=east] {Yes}(pro2.north);
			\end{tikzpicture}
		\end{adjustbox}
		\caption{Summary of the proposed algorithm}
		\label{fig:Summary}
	\end{figure}
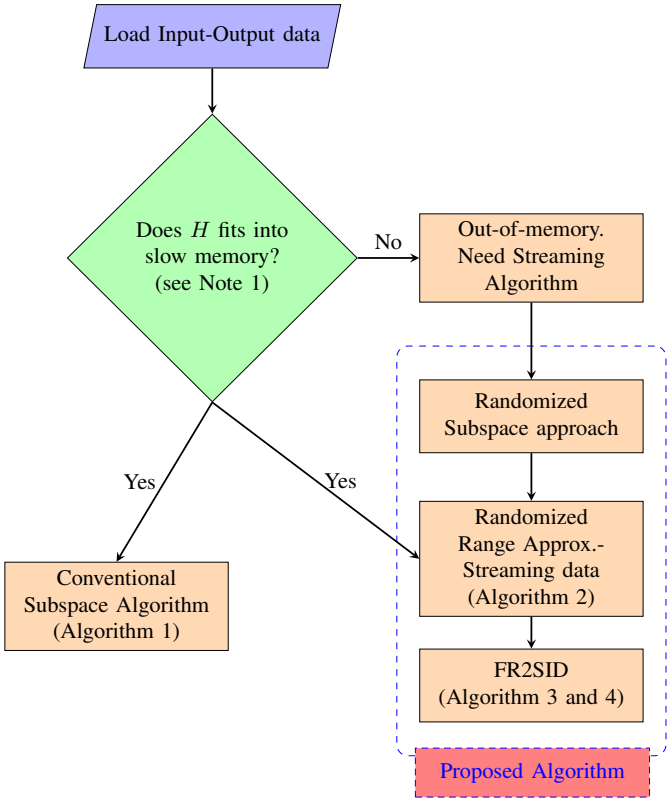
	
	\section{Algorithm performance}	\label{section:algo_performance}
	Several improvements in terms of the algorithm cost $C$ (see \eqref{Algo_cost}) are achieved in the proposed algorithm. All the calculations involving $q$-terms are derived for $q \in \{0,1\}$. 
	\subsection{SDC Analysis}\label{subsection:SDC_analysis}
	In this section we compute the memory-cost, flops and data movement for the proposed SDC algorithm (see  section \ref{subsection:SDC}). Let, $H_{i} := \begin{bmatrix}
		U_{f_{i}}^{T}& U_{p_{i}}^{T} &Y_{p_{i}}^{T}& Y_{f_{i}}^{T}\end{bmatrix}^{T} \in \mathbb{R}^{2k(m+p) \times N_{d}}$ and $\bar{H}_{i} := (H_{i}^TH_{i})^{q}H_{i}\mathcal{C}_{i} \in \mathbb{R}^{2k(m+p) \times N_{c}}$.
	\begin{itemize}
		\item Memory-cost ($M_{SDC}$): The space required to store $H_{i}$ is $2k(m+p)N_{d}$. Then $\mathcal{C}_{i}$ requires $N_{d}N_{c}$ words, and finally to store $\bar{H}$ we need $2k(m+p)N_{c}$ words. Adding all these and after simplification,
		\begin{equation*}\begin{aligned}
				M_{SDC} &= (4k(m+p)+l)N_{d} +  4k^2(m+p)^2\\& + 2kl(m+p)
			\end{aligned}
		\end{equation*}
		\item Flops ($F_{SDC}$): The flop-count can be calculated very easily considering the dimensions of the matrices $H \in \mathbb{R}^{2k(m+p) \times N}$ and $\mathcal{C} \in \mathbb{R}^{N \times N_{c}}$:
		\begin{equation}\label{FpSDC}\begin{aligned}
				F_{SDC} &= (q+1)8k^2(m+p)^2N + 4kl(m+p)N\\&+q(16k^3(m+p)^3+8k^2l(m+p)^2)	
			\end{aligned}
		\end{equation}
		\item Data-movement ($DM_{SDC}$):
		It is easy to see that in order to perform matrix multiplication we need to read both the matrices. The matrix $H$ is read $(q+1)$ times while $\mathcal{C}$ is read once; therefore $\#words_{read} = (q+2)2k(m+p)N + lN$. The number of words write due to $\bar{H}$ is $\#words_{write} = 4k^2(m+p)^2 +2kl(m+p)$. Now, in each iteration $(q+2)$ blocks/messages is read so $\#messages_{read} = (q+2)d$ while $\#messages_{write} = 1$ (since $\bar{H}$ fit into fast memory due to Assumption \ref{assumption:W_Nd}). Therefore,
		\begin{equation}\begin{aligned}\label{DMpSDC}
				DM_{SDC} &= (q+2)2k(m+p)N + lN + 4k^2(m+p)^2 \\&+ 2kl(m+p) + (q+2)d + \mathcal{O}(1)
			\end{aligned}
		\end{equation}
	\end{itemize}
	\begin{note}
		Due to Assumption \ref{assumption:W_Nd}, number of messages ($\#messages$) for all subsequent steps will be of $\mathcal{O}(1)$.
	\end{note}
	\subsection{Projection: QR Analysis}\label{subsection:QRanalysis}
	In this sub-section we compute the memory-cost, flops and data movement for the reduced QR (RQR) step (see section \ref{subsection:QRstep}).
	\begin{itemize}
		\item Memory-cost ($M_{RQR}$): The memory-cost is due to $\bar{R}$ only (since $\bar{H}$ is already considered in the previous step and $\bar{Q}$ is ignored). Therefore,
		\begin{equation*}\begin{aligned}
				M_{RQR} &= 4k^2(m+p)^2
			\end{aligned}
		\end{equation*}
		\item Flops ($F_{RQR}$): The number of flops required for RQR is 
		\begin{equation}\label{FpRQR}\begin{aligned}
				F_{RQR} &= \frac{32}{3}k^3(m+p)^3+8k^2l(m+p)^2
			\end{aligned}
		\end{equation}
		
		\item Data Movement ($DM_{RQR}$): The matrix $\bar{H}$ has to be read only once i.e. $\#words_{read_{RQR}} = 4k^2(m+p)^2 + 2kl(m+p)$. Next, we write only R-factors of size $2k(m+p) \times 2k(m+p)$ while we ignore Q-factors. Therefore, $\#words_{write_{RQR}} = 4k^2(m+p)^2$. Hence,
		\begin{equation}\begin{aligned}\label{DMpRQR}
				DM_{RQR} &= \#words_{RQR} +  \#messages_{RQR}\\
				&=  8k^2(m+p)^2 + 2kl(m+p) + \mathcal{O}(1)
			\end{aligned}
		\end{equation}
	\end{itemize}
	
	\subsection{Projection: SVD Analyis}\label{subsection:SVDanalysis}
	In this sub-section we compute the memory-cost, flops and data movement for SVD step (see section \ref{subsection:SVDstep}).
	\begin{itemize}
		\item Memory-cost ($M_{SVD}$): The memory-cost is due to $\bar{\zeta}$  and $\bar{R}_{\zeta}$ (since $\bar{L}_{p}$ and $\bar{W}_{p}$ are already considered in previous steps). Recalling the dimensions of these matrices:
		\begin{equation*}\begin{aligned}
				M_{SVD} &= 3k^2p^2+2k^2mp+klp
			\end{aligned}
		\end{equation*}
		\item Flops ($F_{SVD}$): The number of flops for SVD comprises of the flops associated with the three computation steps as shown in table \ref{tab:flopcount_svd}.
		\begin{table}[h]
			\setlength\extrarowheight{2pt}
			\begin{center}
				\caption{Flop-count for SVD step}
				\label{tab:flopcount_svd}
				\scalebox{1}{
					\begin{tabular}{|l|l|}
						\hline
						\multicolumn{1}{|c|}{\textbf{Algorithm steps}} & \multicolumn{1}{c|}{\textbf{Flop-count}}                  \\ \hline
						\begin{tabular}[c]{@{}l@{}}Matrix multiplication \\($\bar{\zeta} = \bar{L}_{p}\bar{W}_{p}$)    \end{tabular}          &   \begin{tabular}[c]{@{}l@{}}$ 4k^3pm^2+8k^3p^2m+4k^3p^3$+\\$2k^2lpm+2k^2lp^2$ \end{tabular}          \\ \hline
						QR on $\bar{\zeta}^{T} $                        &          \begin{tabular}[c]{@{}l@{}}$4k^3p^2m+2k^2lp^2+(10/3)k^3p^3$ \end{tabular}                                   \\ \hline
						SVD on $\bar{R}_{\zeta}$                              &          $4k^3p^3$                                \\ \hline
				\end{tabular}}
			\end{center}
		\end{table}
		Therefore, we can write: 
		\begin{equation}\label{FpSVD}\begin{aligned}
				F_{SVD} &= k^3(4pm^2+12p^2m+(34/3)p^3)\\
				&+k^2l(2pm+4p^2)
			\end{aligned}
		\end{equation}
		\item Data Movement ($DM_{SVD}$): It is easy to see that, the data movement will be of $\mathcal{O}(k^2+kl)$ due to size of matrices involved in computation. Therefore,
		\begin{equation}\begin{aligned} \label{DMpSVD}
				DM_{SVD} &= \mathcal{O}(k^2+kl)
			\end{aligned}
		\end{equation}
	\end{itemize}
	\begin{note}
		In summary,
		\begin{itemize}
			\item Flop-count post data compression and QR step is $\mathcal{O}(k^3+k^2l)$.
			\item Data movement post data compression step is of  $\mathcal{O}(k^2+kl)$.
		\end{itemize}
	\end{note}
	\subsection{Model parameters estimation analysis}\label{subsection:ABCDKanalysis}
	In this sub-section we compute the memory-cost, flops and data movement required to estimate $\{A, B, C, D, K\}$ (see section \ref{subsection:AC}, \ref{subsection:BD_proposed} and \ref{subsection:section_stochasticident}).
	
	\begin{itemize}
		\item Memory-cost ($M_{ABCDK}$) : From table \ref{tab:memorycost}, memory-cost is  $\mathcal{O}(k^2)$.
		\begin{table}[h]
			\setlength\extrarowheight{2pt}
			\begin{center}
				\caption{Memory-cost for $\{A, B, C, D, K\}$}
				\label{tab:memorycost}
				\scalebox{1}{
					\begin{tabular}{|l|l|}
						\hline
						\multicolumn{1}{|c|}{\textbf{Action}}                         & \multicolumn{1}{c|}{\textbf{Memory-cost}} \\ \hline
						Estimating \{A, B, C, D\}                                 & \begin{tabular}[c]{@{}l@{}}$(n+p)(m+n)+kp(3n+2m)$ \end{tabular}                   \\ \hline
						Estimating \{K\}                                 & $k^2p^2+2kp^2+np$                    \\ \hline
				\end{tabular}}
			\end{center}
		\end{table} 
		\item Flop-count ($F_{ABCDK}$): Detailed flop-count computation has been shown in table \ref{tab:ABCDK} and \ref{tab:flopcount}. It follows that $F_{ABCDK}$ for proposed method is of $\mathcal{O}(k^3)$ while for conventional methods it is of $\mathcal{O}(n^2N)$ (see Table \ref{tab:flopcount_conv}).
		\begin{table}[h]
			\setlength\extrarowheight{2pt}
			\begin{center}
				\caption{Flop-count to estimate $\{A, B, C, D, K\}$}
				\label{tab:ABCDK}
				\scalebox{1.1}{
					\begin{tabular}{|l|l|}
						\hline
						\multicolumn{1}{|c|}{\textbf{\begin{tabular}[c]{@{}l@{}}Algorithm\\ steps \end{tabular} }} & \multicolumn{1}{c|}{\textbf{Flop-count}}                  \\ \hline
						$\hat{A}$   &  $2np^2(k-1)^2+18n^{2}p(k-1)+8n^3$ \\ \hline
						$\hat{B}$   &  \begin{tabular}[c]{@{}l@{}}$\frac{2}{3}(km)^3+2k^2mp(m+p)+k^2pm+$\\$2k^3pm^2+2nmp^2(k-1)^2$\end{tabular}           \\ \hline
						$\hat{C}$   &  read-off from $\hat{\Theta}_{k}$           \\ \hline
						$\hat{D}$   &  read-off from $\hat{\Psi}_{k}$           \\ \hline
						$\hat{K}$   &   $\frac{2}{3}p^3+k^2p^2+2pn(p+k-1)$ \\ \hline
				\end{tabular}}
			\end{center}
		\end{table}
		\begin{table}[h]
			\setlength\extrarowheight{2pt}
			\begin{center}
				\caption{Flop-count for intermediate calculations}
				\label{tab:flopcount}
				\scalebox{1.1}{
					\begin{tabular}{|l|l|}
						\hline
						\multicolumn{1}{|c|}{\textbf{Algorithm steps}} & \multicolumn{1}{c|}{\textbf{Flop-count}}                  \\ \hline
						$\bar{R}^{\dagger}_{22}$                             & $8(k(m+p))^3$                                            \\ \hline
						$\bar{L}_{p}$                                  & $2k^3p(m+p)^2$                                            \\ \hline
						$\hat{\Psi}_{k}$                                     & \begin{tabular}[c]{@{}l@{}}$\frac{2}{3}(km)^3+2k^3mp(m+p)+$\\$k^2pm+2k^3pm^2$ \end{tabular}           \\ \hline
						$\hat{\Theta}_{k}$                                   & $2kpn^2+n$                                       \\ \hline
				\end{tabular}}
			\end{center}
		\end{table}
	\end{itemize}
	
	\subsection{Final performance} 
	In this section we show proposed algorithm reduces the cost $C$ (see \eqref{Algo_cost}) as compared to the conventional subspace identification algorithms. Let subscript $p$ denotes quantities for proposed algorithm while $conv$ denotes conventional algorithms. The computation in this subsection is based on section \ref{subsection:SDC_analysis}, \ref{subsection:QRanalysis}, \ref{subsection:SVDanalysis} and \ref{subsection:ABCDKanalysis}.
	In the following Lemma, we show that the memory cost for proposed algorithm is less than conventional algorithm due to iterative update, data compression and ignoring $Q$-factors in the proposed method.
	\begin{lemma}\label{Lemma_memorycost}
		$M_{p} < M_{conv}$.
	\end{lemma}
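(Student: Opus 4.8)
The plan is to isolate the single term in the proposed algorithm's memory footprint that still scales with the full data length $N$, to cap it using the fast-memory constraint of Assumption \ref{assumption:W_Nd}, and then to show that everything else is negligible against $M_{conv}=k(3m+4p)N$ (see \eqref{Mconv}). First I would collect the four contributions computed in Sections \ref{subsection:SDC_analysis}--\ref{subsection:ABCDKanalysis}, namely $M_{SDC}$, $M_{RQR}$, $M_{SVD}$ and $M_{ABCDK}$, and write $M_p$ as their sum (an upper bound, since in practice memory is reused across consecutive steps so the true peak is even smaller). The observation driving the argument is that the only summand carrying a factor of $N$ is the block term $(4k(m+p)+l)N_d$ inside $M_{SDC}$: because all subsequent steps operate on the compressed matrices $\bar{H}$, $\bar{\zeta}$, $\bar{R}_{\zeta}$, their costs $M_{RQR}$, $M_{SVD}$, $M_{ABCDK}$ are $\mathcal{O}(k^2(m+p)^2+kl(m+p))$ and contain no $N$.

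Next I would invoke Assumption \ref{assumption:W_Nd}, which was chosen precisely so that the blocks fit into fast memory. It yields $(4k(m+p)+l)N_d \le W-(q+1)4k^2(m+p)^2-2kl(m+p)$. Substituting this into $M_{SDC}=(4k(m+p)+l)N_d+4k^2(m+p)^2+2kl(m+p)$ and simplifying gives $M_{SDC}\le W-4qk^2(m+p)^2\le W$, since $q\ge 0$. Thus the entire $N$-dependent part of the proposed method is capped by the fast-memory size $W$, independent of how large $N$ becomes.

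Finally I would connect $W$ to $M_{conv}$ through the operating regime of Note \ref{Case:RAM_space}: in both cases there, the matrix $H\in\mathbb{R}^{2k(m+p)\times N}$ does not fit into fast memory, i.e. $W<2k(m+p)N$, and since $2k(m+p)=2km+2kp<3km+4kp$ we obtain $W<2k(m+p)N<k(3m+4p)N=M_{conv}$. Because $N\gg 2k(m+p)$, the remaining $N$-free terms $M_{RQR}+M_{SVD}+M_{ABCDK}=\mathcal{O}(k^2(m+p)^2)$ satisfy $k^2(m+p)^2<\tfrac12 k(m+p)N$, so they are dominated by the linear-in-$N$ slack $k(m+2p)N$ separating $2k(m+p)N$ from $M_{conv}$; adding them therefore preserves the strict inequality and delivers $M_p<M_{conv}$.

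The main obstacle is the last step: the bound from Assumption \ref{assumption:W_Nd} is phrased in terms of the fast-memory capacity $W$, whereas the claim concerns slow-memory (RAM) usage relative to $M_{conv}$. Bridging the two requires carefully exploiting the regime hypothesis that $H$ overflows fast memory to convert ``bounded by $W$'' into ``bounded by $2k(m+p)N$'', and then verifying that the accumulated $\mathcal{O}(k^2(m+p)^2)$ overhead genuinely stays below the $\mathcal{O}(k(m+p)N)$ gap rather than merely being asymptotically lower order; this is where the assumption $N\gg 2k(m+p)$ must be used quantitatively.
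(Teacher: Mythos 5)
Your argument is correct, but it proves the inequality by a different mechanism than the paper. The paper's own proof simply sums $M_{SDC}+M_{RQR}+M_{SVD}+M_{ABCDK}$, discards every term not carrying a factor of $N_d$ on the grounds that $k\ll N_d$ (and that $l$ is small), arrives at $M_p\approx(4k(m+p)+l)N_d=(4k(m+p)+l)N/d$, and compares this directly with $M_{conv}\approx k(3m+4p)N$ from \eqref{Mconv} --- the saving being ``by a factor of $d$.'' You instead cap the sole $N$-dependent term using Assumption \ref{assumption:W_Nd} to get $M_{SDC}\le W$, and then convert the fast-memory bound into a slow-memory one via the operating regime of Note \ref{Case:RAM_space} ($H$ overflows cache, so $W<2k(m+p)N<k(3m+4p)N$), before absorbing the $\mathcal{O}(k^2(m+p)^2)$ remainder into the $k(m+2p)N$ slack. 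Each route buys something: the paper's is shorter and exhibits the quantitative factor-of-$d$ saving that appears in Table \ref{tab:Performance_comparison}, whereas yours makes the inequality strict and unconditional within the stated regime, and it quietly repairs a corner the paper glosses over --- for $d=1$ the paper's own leading terms give $(4k(m+p)+l)N>k(3m+4p)N$, so its ``easy to deduce'' step implicitly requires $d\ge 2$; your invocation of Assumption \ref{assumption:W_Nd} together with $W<2k(m+p)N$ forces exactly that, since the blocks must shrink until they fit in cache. The only point to be careful about is that in Case 2 of Note \ref{Case:RAM_space} the bound $W<2k(m+p)N$ follows only because cache is smaller than RAM (Fig.~\ref{fig:memory}), which is worth stating explicitly.
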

	\begin{proof}
		Adding all memory cost from previous subsections: 
		\begin{equation}\label{Mproposed}\begin{aligned}
				M_{p} &= M_{SDC} + M_{RQR} + M_{SVD} + M_{ABCDK}\\
				&=  (4k(m+p)+l)N_{d} +  4k^2(m+p)^2+ 2kl(m+p)\\
				&+4k^2(m+p)^2+ 3k^2p^2+2k^2mp+klp + M_{ABCDK}\\
				& \approx (4k(m+p)+l)N_{d}
		\end{aligned}\end{equation}
Since, $k << N_{d}$ implies $k^2 << kN_{d}$ so the terms containing $k^2$ is ignored in the above equation. Moreover, the oversampling parameter ($l$) is very small (fixed at $l=5$, see \cite{halko_2011}, \cite{anderson2022_ESRSVD}), hence it can also be ignored.
		Using \eqref{Mconv} and \eqref{Mproposed}, it is easy to deduce that  $M_{p} < M_{conv}$. 
	\end{proof}
	
 Next we show flop-count comparison.
	
	\begin{lemma}\label{Lemma_flopcount}
		Flop-count for proposed method is  $F_{p} \approx  (q+1)8k^2(m+p)^2N + 4kl(m+p)N$ and $F_{p} < F_{conv}$.
	\end{lemma}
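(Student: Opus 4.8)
The plan is to obtain $F_p$ by summing the four component flop-counts derived in Sections \ref{subsection:SDC_analysis}--\ref{subsection:ABCDKanalysis}, namely $F_p = F_{SDC} + F_{RQR} + F_{SVD} + F_{ABCDK}$, and then to retain only the terms that scale with the (very large) data length $N$. Inspecting \eqref{FpSDC}, \eqref{FpRQR} and \eqref{FpSVD} together with Table \ref{tab:ABCDK}, I note that $F_{SDC}$ is the only summand carrying a factor of $N$: its $N$-scaling part is precisely $(q+1)8k^2(m+p)^2N + 4kl(m+p)N$, arising from the streaming product $H\mathcal{C}$ and, when $q=1$, the additional product $(HH^T)(H\mathcal{C})$ which contributes the second copy of $8k^2(m+p)^2N$. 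Every other contribution — the $q$-term in \eqref{FpSDC}, the reduced-QR cost $F_{RQR}$, the SVD cost $F_{SVD}$, and the parameter-estimation cost $F_{ABCDK}$ — is $\mathcal{O}(k^3 + k^2 l)$ and independent of $N$. Since $N \gg k, m, p, n, l$ (the very premise of the paper), these $\mathcal{O}(k^3 + k^2 l)$ terms are negligible beside the $N$-linear ones, which yields the claimed approximation $F_p \approx (q+1)8k^2(m+p)^2N + 4kl(m+p)N$. This first half is routine bookkeeping.

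For the comparison $F_p < F_{conv}$, I would simply match the coefficients of $N$ in $F_p$ against that of $F_{conv}$ from \eqref{FconvABCDK}, since both totals are dominated by their $N$-linear parts. The conventional coefficient is $8k^2(m+p)^2 + 2k^2p(m+p) + 2k^2p^2 + 4n^2 + (6m+2p)n + 2m^2 + 2pm$, whereas the proposed coefficient is $(q+1)8k^2(m+p)^2 + 4kl(m+p)$. The structural reason for the gap is that the conventional method performs the SVD on the full $N$-column matrix $\zeta$ and estimates $\{A,B,C,D,K\}$ using all $N$ equations, generating the extra $N$-scaling terms $2k^2p(m+p) + 2k^2p^2$ (matrix multiplication and SVD) and $4n^2 + (6m+2p)n + 2m^2 + 2pm$ (parameter estimation); the proposed algorithm carries out exactly these steps on compressed, $N_c$-sized data, so they never enter $F_p$.

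The main obstacle is the power-iteration case $q=1$. For $q=0$ the proposed $N$-coefficient $8k^2(m+p)^2 + 4kl(m+p)$ coincides with the leading conventional QR term $8k^2(m+p)^2$ but omits all the conventional overhead above; since $l$ is a small fixed constant (cf. the discussion in the proof of Lemma \ref{Lemma_memorycost}) we have $4kl(m+p) \ll 2k^2p(m+p) + 2k^2p^2$, so $F_p < F_{conv}$ follows immediately. For $q=1$, however, the extra streaming product doubles the leading coefficient to $16k^2(m+p)^2$, and because $k>n$ forces $4n^2 < 4k^2 \ll 8k^2(m+p)^2$, none of the conventional overhead terms can offset this doubling at the level of raw flops. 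I therefore expect the clean inequality to be genuinely provable on flop-count alone only for $q=0$; for $q=1$ the advantage of the scheme should instead be argued through the data-movement and memory metrics of Sections \ref{subsection:SDC_analysis} and \ref{subsection:ABCDKanalysis}, and I would qualify the statement of the lemma to reflect this.
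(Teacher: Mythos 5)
Your derivation of $F_{p}$ is exactly the paper's: the authors likewise sum $F_{SDC}+F_{RQR}+F_{SVD}+F_{ABCDK}$ from \eqref{FpSDC}, \eqref{FpRQR}, \eqref{FpSVD} and Table \ref{tab:ABCDK}, observe that only $F_{SDC}$ carries a factor of $N$, and discard the $\mathcal{O}(k^3+k^2l)$ remainder on the grounds that $k^3 \ll k^2N$. So the first half of your argument reproduces the paper's proof verbatim.

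Where you diverge is the second half, and there your version is the more honest one. The paper's entire justification for $F_p < F_{conv}$ is the single clause ``on comparing \eqref{FconvABCDK} and \eqref{Fproposed}''; it never actually matches the $N$-coefficients. You do, and your arithmetic is right: for $q=0$ the proposed coefficient $8k^2(m+p)^2+4kl(m+p)$ is beaten by the conventional coefficient because $4kl(m+p)$ (with $l$ a small fixed constant) is dwarfed by the extra conventional terms $2k^2p(m+p)+2k^2p^2+\mathcal{O}(n^2)$; but for $q=1$ the proposed coefficient jumps to $16k^2(m+p)^2+4kl(m+p)$, and since $8(m+p)^2 > 2p(m+p)+2p^2$ and $k>n$, the conventional overhead cannot absorb the extra $8k^2(m+p)^2N$. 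On raw flop count the lemma's inequality therefore fails for $q=1$, and your proposal to restrict the flop-count claim to $q=0$ (deferring the $q=1$ advantage to the memory-cost and data-movement comparisons, or to the parallel/fast-multiplication remarks of Note \ref{note:fastmatrixmult}) identifies a genuine gap in the paper's proof rather than a defect in your own.
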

	\begin{proof}
		Adding all flop-counts from \eqref{FpSDC}, \eqref{FpRQR}, \eqref{FpSVD} we get,
		\begin{equation}\label{Fproposed}
			\begin{aligned}
				&F_{p} = F_{SDC} + F_{RQR} + F_{SVD} + F_{ABCDK}\\
				&= (q+1)8k^2(m+p)^2N + 4kl(m+p)N\\&+k^3\bigg(16q(m+p)^3+4pm^2+12p^2m+\frac{34}{3}p^3+\frac{32}{3}(m+p)^3\bigg)\\
				&+k^2l\bigg(8(q+1)(m+p)^2+2pm+4p^2\bigg) + F_{ABCDK}\\
				& \approx  (q+1)8k^2(m+p)^2N + 4kl(m+p)N	
			\end{aligned}
		\end{equation}
		Since $k<<N$ therefore $k^3 << k^2N$. Hence we have only taken the dominating terms for the approximate calculation of flops in the last step. Therefore, on comparing \eqref{FconvABCDK} and \eqref{Fproposed}, $F_{p} < F_{conv}$.
	\end{proof}
	Although the proposed method contains $k^2N$ term corresponding to SDC algorithm but once data-compression is done, all other computations have a complexity of $\mathcal{O}(k^3)$. While for conventional method, not only the QR factorization, matrix multiplication ($L_{p}W_{p}$) and SVD steps are of $\mathcal{O}(k^2N)$, further estimation of model parameters requires $\mathcal{O}(n^2N)$ computations (see Table \ref{tab:Performance_comparison}). To emphasize this advantage we define the \% reduction in flop-count for estimating system parameters only i.e. once QR and SVD steps are done. Let $\bar{F}_{(.)}$ denotes the flop-count to estimate $\{A, B, C, D, K\}$ parameters. Therefore,
 
 \small
	\begin{equation*}\label{bdmatlab}\begin{aligned}
			\% Reduction =\frac{\bar{F}_{conv}- \bar{F}_{p}}{\bar{F}_{conv}} \times 100 = \frac{\mathcal{O}(n^2N)- \mathcal{O}(k^3)}{\mathcal{O}(n^2N)} \times 100
	\end{aligned}\end{equation*} 
\normalsize
	Since, $\bar{F}_{p}$ is independent of $N$, hence $\bar{F}_{p} < < \bar{F}_{conv}$. 
	\begin{table*}[h]
		\setlength\extrarowheight{2.5pt}
		\begin{center}
			\caption{Performance Comparison}
			\label{tab:Performance_comparison}
		\scalebox{0.95}{\begin{tabular}{|c|c|c|c|}
				\hline
				\textbf{Performance criterion} & \textbf{Conventional} & \textbf{Proposed (FR2SID)} & \textbf{Comments}          \\ \hline
				\textbf{Memory cost ($M$)}           & $ \mathcal{O}(kN)$           & $ \mathcal{O}\big(k{N}_{d}\big)$   & saving by a factor of $d$ \\ \hline
				\textbf{Flop-count ($F_{QR}$)}           & $  8k^2(m+p)^2N + \frac{16}{3}k^3(m+p)^3$           & $8k^2(q+1)(m+p)^2N+\frac{(48q+32)}{3}k^3(m+p)^3$   & see Note \ref{note:fastmatrixmult} \\ \hline
				\textbf{Flop-count ($F_{SVD}$)}           & $k^2N(4p^2+2mp) + 2k^3p^3$        & $ k^3(4pm^2+12p^2m+(34/3)p^3)$   & since, $k^3 << k^2N \hspace{0.15cm}\text{(significant speed-up)}$  \\ \hline
				\textbf{Flop-count ($F_{ABCDK}$)}           & $   \mathcal{O}(n^2N) $           & $   \mathcal{O}(k^3)$   & $\approx 100 \% \hspace{0.15cm}\text{(for large N)}$  \\ \hline
				\textbf{Data moved ($DM$)}           & $  \mathcal{O}(k^2N)$           & $ \mathcal{O}\big(kN\big)$   & \text{by an order of magnitude $k$} \\ \hline
			\end{tabular}}
		\end{center}
	\end{table*}
	
	\begin{note}\label{note:fastmatrixmult}
		Note that matrix multiplication can be parallelized easily (e.g. pp. 407 of \cite{tropp_2020}). While the flop-count for streaming data compression is of $\mathcal{O}\big((k^2+kl)N\big)$, since the high order of complexity occurs due to the matrix multiplication involved, it can be implemented efficiently using parallel algorithms. The computation can be further accelerated by using fast Hankel matrix-vector multiplication (see Table 1.2 pp. 4 \cite{Victor_FastHankel}) or by deploying techniques presented in \cite{Francois_FastMatMul}.
	\end{note}
	\begin{lemma}\label{Lemma_datamoved}
		For the proposed algorithm, $DM_{p}$ is of $\mathcal{O}\big((k+l)N\big)$.   Also, $DM_{p} < DM_{conv}$.
	\end{lemma}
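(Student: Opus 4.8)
The plan is to sum the data-movement contributions established in the preceding subsections and extract the dominant term. Concretely, I would write $DM_{p} = DM_{SDC} + DM_{RQR} + DM_{SVD} + DM_{ABCDK}$, substituting \eqref{DMpSDC}, \eqref{DMpRQR}, \eqref{DMpSVD} together with the $\mathcal{O}(k^2)$-scale estimate for the parameter-estimation step (which, like QR and SVD, acts only on compressed matrices). First I would isolate the terms that depend on $N$: the only genuinely $N$-dependent contributions arise in the streaming compression step \eqref{DMpSDC}, namely the word count $(q+2)2k(m+p)N + lN$ from reading $H$ and $\mathcal{C}$, and the message count $(q+2)d$. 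Since $q \in \{0,1\}$, the factor $(q+2) \leq 3$ is a constant, so the $N$-dependent word movement is $\mathcal{O}((k+l)N)$; and because $d = N/N_{d}$ with $N_{d} \geq 1$ forces $d \leq N$, the message term satisfies $(q+2)d = \mathcal{O}(N)$, which is absorbed into $\mathcal{O}((k+l)N)$.

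Next I would verify that every remaining term is of strictly lower order. By Assumption \ref{assumption:W_Nd}, the compressed matrices $\bar{H}$, $\bar{\zeta}$ and $\bar{R}_{\zeta}$ fit into fast memory, so the QR, SVD and $\{A,B,C,D,K\}$ steps incur data movement of at most $\mathcal{O}(k^2 + kl)$, independent of $N$ (consistent with the Note that their message counts are $\mathcal{O}(1)$). Since $k \ll N$ implies $k^2 \ll kN$, each such term is dominated by $(k+l)N$. Collecting these observations gives $DM_{p} = \mathcal{O}\big((k+l)N\big)$.

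For the comparison $DM_{p} < DM_{conv}$, I would invoke the conventional lower bound recorded in section \ref{section_main_issues}: the QR factorization of the full matrix $H \in \mathbb{R}^{2k(m+p) \times N}$ forces data movement of order $k^2 N$ (bounded below by $k^2 N/\sqrt{W}$), so $DM_{conv} = \mathcal{O}(k^2 N)$. Comparing $\mathcal{O}\big((k+l)N\big)$ with $\mathcal{O}(k^2 N)$ and using that $l$ is a small fixed oversampling parameter while the prediction horizon $k$ exceeds the model order, we have $k + l < k^2$ for the relevant range of $k$, so the proposed movement is smaller by a factor of order $k$, as summarized in Table \ref{tab:Performance_comparison}.

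The main obstacle I anticipate is arguing that the $N$-dependent part of the proposed cost is genuinely below that of the conventional algorithm rather than merely of smaller asymptotic symbol. This hinges on two points: that the one-time streaming compression reads the full $N$-column data exactly $q+2$ times (a bounded, $k$-free number of passes), whereas the blocked conventional QR must repeatedly stream the $N$-column matrix through the limited fast memory, incurring the extra factor of $k$; and that the oversampling parameter satisfies $l \ll k(k-1)$ so that the $lN$ contribution does not erode the factor-$k$ advantage.
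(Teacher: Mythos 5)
Your proposal is correct and follows essentially the same route as the paper: sum $DM_{SDC}+DM_{RQR}+DM_{SVD}+DM_{ABCDK}$, observe that only the streaming-compression step contributes $N$-dependent terms ($2k(q+2)(m+p)N+lN$ plus an $\mathcal{O}(N)$ message count), discard the $\mathcal{O}(k^2+kl)$ remainders since $k\ll N$, and compare the resulting $\mathcal{O}\big((k+l)N\big)$ against the conventional $\mathcal{O}(k^2N)$ to gain a factor of order $k$. Your extra care about the $(q+2)d$ message term and the condition $l\ll k(k-1)$ is a slight refinement of, not a departure from, the paper's argument.
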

	\begin{proof}
		Adding all DM terms from \eqref{DMpSDC}, \eqref{DMpRQR} and \eqref{DMpSVD} we get,
		\begin{equation}\label{DMproposed}\begin{aligned}
				DM_{p} &= DM_{SDC} + DM_{RQR} + DM_{SVD} + DM_{ABCDK}\\
				&= 2k(q+2)(m+p)N + lN +\text{terms having $k^2$ and $kl$}\\
				& \approx \big(2k(q+2)(m+p)+ l\big)N
		\end{aligned}\end{equation} 
		Since $k<<N$ therefore $k^2 << kN$. Hence we have only taken the dominating terms for final calculation in the last step. For conventional case, data movement turns out to be approximately $DM_{conv} = \mathcal{O}(k^2N)$ (see section \ref{section_main_issues}). Note that, DM for the proposed method is less than the DM for the conventional methods by an order of $k$. 
		Therefore, $DM_{p} < DM_{conv}$. 
	\end{proof}
	\begin{note}
		 Since, the latency term ($\alpha \times \#messages$) in \eqref{Talgo} may dominate other terms due to the fact that $\alpha >> \beta > \gamma$ \cite{henessy_latency} therefore data movement maybe more important for large data size.
	\end{note}
	Finally from Lemma \ref{Lemma_memorycost}, Lemma \ref{Lemma_flopcount} and Lemma \ref{Lemma_datamoved} it follows that $C_{p} < C_{conv}$. 
	
	\begin{theorem}\label{Theorem_detsto}
		For combined deterministic-stochastic subspace identification,  the algorithm cost for proposed method is less than the conventional algorithm i.e. $C_{p} < C_{conv}$.
	\end{theorem}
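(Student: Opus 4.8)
The plan is to recognize Theorem~\ref{Theorem_detsto} as an immediate aggregation of the three component bounds already in hand, exploiting the fact that the algorithm cost in \eqref{Algo_cost} is additively separable: $C = M + F + DM$. Writing this out for each algorithm gives $C_p = M_p + F_p + DM_p$ and $C_{conv} = M_{conv} + F_{conv} + DM_{conv}$, so the entire substance of the argument has been front-loaded into Lemma~\ref{Lemma_memorycost}, Lemma~\ref{Lemma_flopcount} and Lemma~\ref{Lemma_datamoved}, and nothing new needs to be computed at this stage.

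First I would invoke Lemma~\ref{Lemma_memorycost} for $M_p < M_{conv}$; here the saving is the factor-$d$ replacement of $N$ by $N_d$ in the dominant term together with the discarded $Q$-factors. Next, Lemma~\ref{Lemma_flopcount} supplies $F_p < F_{conv}$, the key point being that after the streaming compression every post-SDC step is only $\mathcal{O}(k^3 + k^2 l)$ and is therefore negligible against the $\mathcal{O}(n^2 N)$ parameter-estimation cost incurred by the conventional N4SID/MOESP route. Finally, Lemma~\ref{Lemma_datamoved} gives $DM_p < DM_{conv}$, a gain of an order in $k$ (from $\mathcal{O}(k^2 N)$ down to $\mathcal{O}((k+l)N)$). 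Since each of $M$, $F$, $DM$ is a non-negative quantity satisfying a strict inequality, I would simply add the three inequalities termwise to obtain
\[
C_p = M_p + F_p + DM_p < M_{conv} + F_{conv} + DM_{conv} = C_{conv},
\]
which is exactly the claimed bound.

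The hard part is therefore not this final summation, which is mechanical, but ensuring the three strict inequalities are valid in a \emph{common} parameter regime so that they may legitimately be combined. All three lemmas are established under the standing assumptions $n < k \ll N$ with a small fixed oversampling parameter $l$ (so that $k^2 \ll kN$ and $k^3 \ll k^2 N$, which is what lets one drop the lower-order terms and compare only the dominant contributions). The most delicate of the three is the flop-count comparison: for $q=1$ the leading coefficient of $F_p$ doubles to $16k^2(m+p)^2N$, so the strict inequality there leans on the conventional method's additional $\mathcal{O}(n^2N)$ estimation overhead and its SVD/matrix-multiplication cost rather than on the QR term alone. Consequently I would state the theorem subject to this shared regime, after which the proof reduces to citing Lemmas~\ref{Lemma_memorycost}--\ref{Lemma_datamoved} and summing.
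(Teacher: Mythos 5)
Your proposal matches the paper's own argument: the paper proves this theorem simply by citing Lemma~\ref{Lemma_memorycost}, Lemma~\ref{Lemma_flopcount} and Lemma~\ref{Lemma_datamoved} and summing the three strict inequalities over the additive cost $C = M + F + DM$ in \eqref{Algo_cost}. Your additional remark about verifying a common parameter regime ($k \ll N$, small fixed $l$) is a reasonable refinement but does not change the substance of the argument.
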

	
	A comparison for all the performance criteria for combined deterministic-stochastic identification is shown in Table \ref{tab:Performance_comparison}.  Since the oversampling parameter $l$ is very small (fixed at $l=5$), so we have ignored terms containing it. 
	
	\section{Efficiency and Accuracy of Randomized algorithm: QR and SVD steps}
	In this section, we experimentally demonstrate the efficiency and accuracy of the QR and SVD steps of Algorithm \ref{algo:fastsubspaceid}. All experiments were performed on intel core-i7 (9th generation) having level-2 cache of 2 MiB (fast memory), 32 GB RAM (slow memory) and 1 TB hard-drive using MATLAB-R2022a.
	We have defined the algorithm cost $C$ in \eqref{Algo_cost} which includes $\#messages$ and $\#words$. However these quantities cannot be measured directly in the experiments we perform next. On the other hand, since, we can measure $T_{algo}$ (see \eqref{Talgo}) easily using ``tic-toc" in MATLAB, we use $T_{algo}$ as a proxy for flops, $\#messages$ and $\#words$ in the numerical experiments below. Further, since $W$ is unknown, the optimal value of ${d}$ as in \cite{Demmel_seqQR} is also not known. Instead we choose ${d}$ empirically based on I/O data size.  
	
	First, we compare the current state-of-art SQR vs the proposed SDC with RQR. Then we show the effectiveness of the proposed range approximation algorithm (SVD step). For all the case studies, actual computation-time was measured experimentally using the ``tic-toc" command in MATLAB. Average computation time (ACT) was calculated by taking average time taken over 10 simulation runs where the random compression matrix is re-generated each time.
	
	\subsection{Full SQR vs SDC with RQR}
	
	In order to test the efficiency of the proposed SDC with RQR algorithm, matrices of different sizes ($H\in\mathbb{R}^{2k(m+p) \times N}$ with $2k(m+p) << N$, where each element is uniformly distributed between $(0,1)$) were generated. In all cases, the compression matrix $\mathcal{C}\in\mathbb{R}^{N\times N_{c}}$,
	where $N_{c}=2k(m+p)+l$ with $l=5$. This resulted in $\bar{H}:= (HH^T)^qH\mathcal{C}\in\mathbb{R}^{2k(m+p)\times N_{c}}$ with $q \in \{0, 1\}$. For fair comparison, $d$ (chosen heuristically) was kept same for SQR as well as for SDC with RQR. We have fixed $N=100,000$ in all cases while $\{k,m,p,d\}$ are varied for each case. From table \ref{tab:SQR_SDC_MSQR_Exp} we see that SDC with RQR is faster as compared to SQR and the advantage grows with larger data sizes.
	\begin{table}[h]
		\setlength\extrarowheight{1.5pt}
		\begin{center}
			\small\addtolength{\tabcolsep}{-5pt}
			\caption{ACT Comparison for Full SQR vs SDC with RQR}
			\label{tab:SQR_SDC_MSQR_Exp}
			\scalebox{0.8}{\begin{tabular}{|c|c|c|c|c|c|c|}
				\hline
				\textbf{\begin{tabular}[c]{@{}c@{}}Case\end{tabular}} & \textbf{\begin{tabular}[c]{@{}c@{}}Matrix \\ size \end{tabular}} & \textbf{\begin{tabular}[c]{@{}c@{}}Compressed \\matrix  size \end{tabular}} & \multicolumn{2}{c|}{\textbf{\begin{tabular}[c]{@{}c@{}}ACT speedup\\ R-factor  \end{tabular}}} \\  \hhline{|~|~|~|--|}
				$\{k, m, p, d\}$	& ($H$)&($\bar{H}$) & $q=0$ & $q=1$\\ \hline
				\textbf{\begin{tabular}[c]{@{}c@{}} $\{10,2,2,5\}$\end{tabular}}   & $80 \times 100k$  &   $80 \times 85$   & 1.71 &  1.22   \\ \hline
				\textbf{\begin{tabular}[c]{@{}c@{}}$\{20,5,5,10\}$\end{tabular}} &  $400 \times 100k$ &  $400 \times 405$ &  3.11  & 2.09   \\ \hline
				\textbf{\begin{tabular}[c]{@{}c@{}}$\{60,10,5,15\}$\end{tabular}}      &  $1800 \times 100k$&   $1800 \times 1805$ & 3.84  &  2.57 \\ \hline
				\textbf{\begin{tabular}[c]{@{}c@{}}$\{100,10, 10,20\}$\end{tabular}}      &  $4000 \times 100k$  & $4000 \times 4005$& 5.69&  3.86  \\ \hline
			\end{tabular}}
		\end{center}
	\end{table}
	For large matrices the advantage results mainly from the fact that the proposed method performs the QR factorization on a much smaller compressed matrix ($\bar{H}$). Moreover, SQR has to handle the intermediate $Q$-factors (write-cycle), while the proposed algorithm completely ignores the Q-factors, leading to lesser data data movement between slow and fast memory. Further, see Note \ref{note:fastmatrixmult} for possible advantages in parallel implementations of the matrix multiplication required in SDC.
	\subsection{Range approximation}
	Next, we test the accuracy of our proposed range-space approximation algorithm based on two criteria: (i) accuracy of rank-preservation 
	(see Theorem \ref{Theorem_image_presevation}) and (ii) the distance between two subspaces as in \cite{golub1995canonical} (see Chapter 6.4.3 of \cite{golub1995canonical} for more details).
	
	\textbf{Distance between two subspaces \cite{golub1995canonical}:} Let, $A\in\mathbb{R}^{m\times n}$
	and $B\in\mathbb{R}^{m\times l}$ and assume that $rank(A)\geq rank(B)=r$. Let, $A = Q_{A}R_{A}$ and $B = Q_{B}R_{B}$ such that $Q_{A}$ and $Q_{B}$ forms the basis for $\mathcal{R}(A)$ and $\mathcal{R}(B)$ respectively. Next let the SVD of $Q_{A}^{T}Q_{B} := Udiag(\sigma_{1}(A,B),\sigma_{2}(A,B),\hdots,\sigma_{r}(A,B))V^{T}$ where $\sigma_{i} := cos(\theta_{i})$, $0 \leq \theta_{i} \leq \pi/2$ $\forall i \in \{1, 2 , \hdots, r\}$. 
	Then the closeness of $\mathcal{R}(A)$ and $\mathcal{R}(B)$ can
	be measured by the following equation: $d(A,B)= sin(\theta_{max})$.

	In order to test the proposed rank preservation theorem, matrices of different sizes $\zeta = \bar{L}_{p}W_{p} \in\mathbb{R}^{kp\times N}$ with $kp << N$ and $\bar{\zeta} =\bar{L}_{p}\bar{W}_{p}\in\mathbb{R}^{kp\times N_{c}}$ were generated randomly as in previous subsection.
	\begin{table}[h]
		\setlength\extrarowheight{2pt}
		\begin{center}
			\small\addtolength{\tabcolsep}{-2pt}
			\caption{Subspace approximation}
			\label{tab:Subspace_approximation}
			\scalebox{0.8}{\begin{tabular}{|c|c|c|c|c|}
					\hline
					\textbf{\begin{tabular}[c]{@{}c@{}}Case\\ $\{k, m, p\}$\end{tabular}} & \textbf{\begin{tabular}[c]{@{}c@{}}$\zeta$\\$(kp \times N)$\end{tabular}} & \textbf{\begin{tabular}[c]{@{}c@{}}\textbf{$\bar{\zeta}$}\\$(kp \times N_{c})$\end{tabular}} & \textbf{\begin{tabular}[c]{@{}c@{}}Rank\\ preserved\end{tabular}} & \begin{tabular}[c]{@{}c@{}}\textbf{Distance:}\\ $d(\zeta, \bar{\zeta})$\end{tabular} \\ \hline
					\textbf{\begin{tabular}[c]{@{}c@{}} $\{10,2,2\}$\end{tabular}}   & $20 \times 100k$  &   $20 \times 85$   & Yes &  9.42e-08   \\ \hline
					\textbf{\begin{tabular}[c]{@{}c@{}}$\{20,5,5\}$\end{tabular}} &  $100 \times 100k$ &  $100 \times 405$ &  Yes  &   2.12e-07 \\ \hline
					\textbf{\begin{tabular}[c]{@{}c@{}}$\{60,10,5\}$\end{tabular}}      &  $300 \times 100k$&   $300 \times 1805$ & Yes  &  3.87e-07  \\ \hline
					\textbf{\begin{tabular}[c]{@{}c@{}}$\{100,10,10\}$\end{tabular}}      &  $1000 \times 100k$  & $1000 \times 4005$& Yes &  7.45e-07  \\ \hline
			\end{tabular}}
		\end{center}
	\end{table}
	We have fixed $N=100,000$ in all cases while $\{k,m,p\}$ is varied. From Table \ref{tab:Subspace_approximation}),
	it is observed that the rank is preserved in all cases. Also, the distance between the actual and
	approximated range-spaces turns-out to be negligible. Hence, Table \ref{tab:Subspace_approximation} verifies
	the range-space approximation/rank preservation theorem experimentally. In the above experiment, we have chosen $q=0$. Similar results were also achieved for $q=1$ (more robust).
	\begin{note}
		It is evident from the sizes of matrices $\zeta \in \mathbb{R}^{kp \times N}$ and $\bar{\zeta} \in \mathbb{R}^{kp \times N_{c}}$, that significant speed-up is obtained for the proposed method. Hence we omit reporting any efficiency data for the computation involved in Table \ref{tab:Subspace_approximation} here.
	\end{note}
	\section{CASE STUDIES: System Identification}\label{section_case_studies}
	In this section we verify the effectiveness of the proposed algorithm by identifying several synthetic and one practical systems. These case-studies will be evaluated on the basis of the following metrics:
	\begin{itemize}
		\item Normalized eigenvalue error (NEE):
		\begin{equation*}
			NEE := \sum_{i=1}^{n} \frac{|\lambda_{i} - \hat{\lambda}_{i}|^2}{|\lambda_{i}|^2}
		\end{equation*} 
		where, $\lambda_{i}$'s are the actual eigenvalues of the system and $\hat{\lambda}_{i}$'s are the estimated eigenvalues. For the proposed (randomized) case we have taken $\hat{\lambda}_{i}$ to be average of estimated eigenvalues over $n_{iter}$ simulation runs i.e. $\hat{\lambda}_{i} := \frac{1}{n_{iter}} \sum_{j=1}^{n_{iter}}\hat{\lambda}_{i}^{(j)} $.
		\item Mean Squared Error (MSE) on the validation data set:
		\begin{equation*}
			\text{MSE} = \sum_{i=1}^{p}\bigg(\frac{1}{N_v}\sum_{s=1}^{N_v} (y_{i}(s) - \hat{y}_{i}(s))^2\bigg) 
		\end{equation*} where, $s =\{1,2,....,N_v\}$, $y \in \mathbb{R}^{p}$ is the actual output, $\hat{y} \in \mathbb{R}^{p}$ is the predicted output and the subscript $i$ denotes $i^{th}$ component of the output. For the proposed algorithm, we use average MSE over $n_{iter}$ simulation runs i.e. Net-MSE (proposed) = $\frac{1}{n_{iter}}\sum_{j=1}^{n_{iter}} \text{MSE}_j$, where  ${MSE}_j$ is the MSE for the $j$-th run.
	\end{itemize}
	In all case-studies we have used input $u(t)$ and noise $e(t)$ as white gaussian signals. The variance of noise is decided based on the chosen SNR.
	\subsection{Synthetic models}
	In order to have wide separation between poles we have used the ``randi" function in Matlab to generate a fast pole $\lambda_{fast} \in ([-100,-50])$ and a slow pole $\lambda_{slow} \in (0.001*[-10,-1])$, while the remaining $(n-2)$ poles are placed randomly from a uniform distribution between $\lambda_{fast}$ and $\lambda_{slow}$. Thereafter, standard pole placement technique is used to place the eigenvalues of randomly generated system matrices $\{A,B,C,D,K\}$ (with iid normal elements) at the pole locations generated above. In all the case-studies we fix: $n_{iter} = 50$, oversampling-parameter $l=5$ and $N_{v} \approx 0.3*N_{t}$.
	\subsubsection{\textbf{Deterministic Case}} 
	We examine three distinct systems generated randomly. In these case studies, we have selected $q = 0$ due to the absence of noise.  The proposed method is approximately five times faster (see Table \ref{tab:Deterministic case}) as compared to the conventional methods with almost matching NEE and MSE.
	
	\subsubsection{\textbf{General Case}} 
	In this case we have taken four randomly generated systems. We explore the impact of SNR as well as resilience to noise ($q \in \{0, 1\}$) on the estimated parameters.  The performance comparison is presented in Table \ref{tab:General case}. Based on the data presented in the table, we can deduce the following:
	\begin{itemize}
		\item The ACT gap increases with increasing size of data matrices (high dimensional systems and large sample size). For example, in case 1, it is roughly 2.5 times faster, and in case 3, it's about 5.5 times faster compared to the conventional case.
		\item As the SNR decreases, the quality of estimates degrades for all methods.
		\item 	For low SNR case, the performance of the proposed algorithm with $q=0$ degrades faster as compared to other conventional algorithms (for instance, refer case 2(c)). This can be explained due to possibility of noise amplification during data compression. 
		\item The proposed algorithm, using a power-method approach ($q=1$), demonstrates strong noise robustness at a slightly increased computational cost.
		\item In case 4, Inf and NA indicates that the computer system have gone ``out-of-memory" implying that the conventional methods did not work in these cases.
	\end{itemize}
	In summary, the proposed method is very efficient for  large data/matrix sizes and has demonstrated good noise robustness, acceptable mean squared error (MSE), and faster estimation when compared to conventional algorithms. 
	\begin{table*}[htbp]
		\setlength\extrarowheight{2.25pt}
		\begin{center}
			\caption{Performance comparison for Deterministic Case}
			\label{tab:Deterministic case}
			\scalebox{0.967}{\begin{tabular}{|c|c|c|c|c|c|c|c|c|c|c|c|c|c|c|}
				\hline
				\textbf{S.No.} & \textbf{\{n, k, m, p, d\}} & \textbf{N}& \textbf{Nc} & \multicolumn{3}{c}{\textbf{ACT (msec.)}} & \multicolumn{3}{|c}{\textbf{NEE}}& \multicolumn{3}{|c|}{\textbf{Net-MSE}} \\
				\cline{5-13}
				&  & $(\times 10^3)$& & \textbf{FR2SID}& \textbf{N4SID} & \textbf{MOESP}& \textbf{FR2SID}& \textbf{N4SID} & \textbf{MOESP}& \textbf{FR2SID} & \textbf{N4SID} & \textbf{MOESP}\\  \hline
				\textbf{1}&\{2, 3, 5, 5, 5\}& 70 & 65 & 84 & 525& 410& 6.04e-25 & 1.85e-23&6.38e-24  & 3.05e-25 &1.66e-24 & 1.96e-25\\  \hline
				\textbf{2}& \{5, 6, 5, 5, 5\}& 70 & 125 &  128 & 589& 494&  7.33e-19& 6.24e-24 & 2.35e-23&1.39e-25  &3.45e-23 & 6.10e-26\\  \hline
				\textbf{3}&\{10, 11, 10, 10, 15  \} & 70  & 445 &  670& 3788& 3590& 1.89e-16& 9.12e-21 & 1.39e-23&  1.18e-23& 1.37e-23 &3.62e-24 \\  \hline  
			\end{tabular}}
		\end{center}
	\end{table*}
	
	\begin{table*}[htbp]
		\setlength\extrarowheight{2.25pt}
		\begin{center}
			\caption{Performance comparison for General Case}
			\label{tab:General case}
			\scalebox{0.887}{
				\begin{tabular}{|c|c|c|c|c|c|c|c|c|c|c|c|c|c|c|c|}
					\hline
					\textbf{S.No.} & \textbf{\{n, k, m, p, d\}} & \textbf{N} & \textbf{Nc}& \textbf{SNR} & \textbf{q}& \multicolumn{3}{c}{\textbf{ACT (msec.)}} & \multicolumn{3}{|c}{\textbf{NEE}}& \multicolumn{3}{|c|}{\textbf{Net-MSE}} \\
					\cline{7-15}
					& & $(10^3)$ & & & & \textbf{FR2SID}&\textbf{N4SID} & \textbf{MOESP}& \textbf{FR2SID} & \textbf{N4SID} & \textbf{MOESP}& \textbf{FR2SID}& \textbf{N4SID} & \textbf{MOESP}\\  \hline
					
					\textbf{1 (a)}& \{2, 5, 5, 5, 4\}&44 & 105&  100& 0& 45& 193& 179 &2.52e-08 &2.43e-12 & 1.91e-12&  2.14e-07& 2.81e-10 & 1.43e-10 \\  \hhline{|~|~|~|~|~|--|~|~|-|~|~|-|~|~|}
					& & & &  & 1& 67& & &7.57e-11 & & & 1.09e-08 &  & \\  \hline
					
					\textbf{1 (b)}& &- &- & 70 &0& -& -& -& 3.9e-04 & 1.16e-09 &1.10e-09& 1.89e-05&3.99e-06 & 6.47e-06 \\  \hhline{|~|~|~|~|~|--|~|~|-|~|~|-|~|~|}
					& & & &  &1&- & & &1.12e-04 & & & 3.16e-05&  & \\  \hline  
					
					\textbf{1 (c)}& &- &- & 50 &0& -& -& -&0.026 & 9.39e-08& 6.97e-08& 2.49e-02 &2.61e-04  & 7.20e-04\\  \hhline{|~|~|~|~|~|--|~|~|-|~|~|-|~|~|}
					& & & &  &1& -& & &1.25e-04& & &8.84e-03  &  & \\  \hline  
					
					\textbf{2 (a)}& \{10, 15, 5, 5, 10\} & 90 & 305&  100& 0&524 & 2682&2595 &2.68e-06 & 9.29e-10& 8.80e-10& 2.17e-08 & 7.29e-10 & 1.32e-09\\   \hhline{|~|~|~|~|~|--|~|~|-|~|~|-|~|~|}
					& & & &  & 1&832 & & & 2.51e-09& & & 9.52e-10 &  & \\  \hline
					
					\textbf{2 (b)}& &- & -&  70 &0& -& -& -& 4.84e-03& 1.61e-06&7.15e-07 & 3.80e-05 & 1.06e-06 & 5.22e-06\\     \hhline{|~|~|~|~|~|--|~|~|-|~|~|-|~|~|}
					& & & &  &1&- & & & 3.20e-04& & &5.24e-06  &  & \\  \hline  
					
					\textbf{2 (c)}& & - & -&  50  &0&- & -&- & 177.956& 6.1e-03&3.12e-05 &  1.52e-03& 2.33e-04 &7.87e-05 \\   \hhline{|~|~|~|~|~|--|~|~|-|~|~|-|~|~|}
					& & & &  &1&- & & & 0.7973& & &7.01e-04  &  & \\  \hline  
					
					\textbf{3 (a)}& \{30, 40, 10, 10, 25\}& 100 & 1605&  100  & 0& 3492& 21365 & 20893& 1.92e-06&7.30e-09 & 7.55e-09&1.22e-09  &3.26e-10  &1.99e-09 \\  \hhline{|~|~|~|~|~|--|~|~|-|~|~|-|~|~|}
					& & & &  & 1&5142 & & &1.35e-08 & & & 2.45e-10 &  & \\  \hline
					
					\textbf{3 (b)}& & - & -&  70 &0& -& -& -& 6.77e-01& 1.25e-05& 6.33e-06& 1.06e-06 & 7.65e-08 &8.67e-08 \\   \hhline{|~|~|~|~|~|--|~|~|-|~|~|-|~|~|}
					& & & &  &1&- & & & 2.08e-03& & & 6.34e-07 &  & \\  \hline  
					
						\textbf{4}& \{50, 70, 20, 20, 25\}& 150 & 5605&  100  & 0& 52688& \textcolor{red}{Inf} & \textcolor{red}{Inf} & 9.97e-07&\textcolor{red}{NA} & \textcolor{red}{NA} &5.63e-10  &\textcolor{red}{NA}  &\textcolor{red}{NA} \\  \hhline{|~|~|~|~|~|--|~|~|-|~|~|-|~|~|}
					& & & &  & 1&92561 & & &2.51e-08 & & & 3.40e-10 &  & \\  \hline
			\end{tabular}}
		\end{center}
	\end{table*}
	
	\subsection{Identification of Pressurized Heavy Water Reactor}
	In this section, we apply the proposed method to identify an LTI zone power model for a pressurized heavy water nuclear reactor (PHWR). This particular application was chosen since it  typically exhibits both fast and slow dynamics. In PHWR nuclear reactors the fastest time-constants are in the order of 0.05 seconds, while the slowest oscillations due to Xenon occur over 20 hours \cite{nuclear}. For more details about the system dynamics considered in this case study, the reader is referred to \cite{Vaswani_NuclearReactor}.  It is observed experimentally that the PHWR models have high noise sensitivity, resulting in the estimation of spurious poles with low values of $k$. Moreover, the poles are very close to the origin hence a small amount of noise may result in the estimated model being unstable. Hence, for this particular model, a slightly higher value of $k$, than those used in the synthetic cases above, is selected.
 
 The data is generated from a $56^{th}$ order MIMO zone power model of the PHWR with all real poles. The fastest pole $\lambda_{fastest} = -113.72$ while the slow dynamics is dominated by $\lambda_{slowest} = -2.88e-05$ (very close to origin). The parameters for this model are: $\{k, m,  p, d, l, N, N_{c} \}$  = \{120, 14, 15, 25, 10, 200000, 6970\}, $n_{iter} = 5,  N_{v} = 50000$, $SNR =100$. Model order ($n$) estimation is done by plotting log of singular values. A sharp knee is observed around the $35^{th}$ singular value. Consequently we have chosen $n=35$. The performance of the proposed algorithm is given in  Table \ref{tab:zonepower_case4}. In this case the $H$ matrix does not fit in the available RAM (out-of-memory). Hence conventional methods cannot be applied.

	\begin{table}[h]
		\begin{center}
		   \setlength\extrarowheight{2pt}
			\caption{Performance comparison for Zone power model}
			\label{tab:zonepower_case4}
			\scalebox{0.85}{
				\begin{tabular}{|c|c|c|c|c|}
					\hline
					\textbf{\begin{tabular}[c]{@{}c@{}}Performance\end{tabular}}  & \textbf{N4SID} & \textbf{MOESP}&  \multicolumn{2}{c|}{\textbf{\begin{tabular}[c]{@{}c@{}}FR2SID  \end{tabular}}} \\  \hhline{|~|~|~|--|}
					& && $q=0$ & $q=1$\\ \hline
					\textbf{ACT (msec)}                                                         & \textcolor{red}{Inf} & \textcolor{red}{Inf} & 158787   &   251485         \\ \hline
					\textbf{Net-MSE}                                                             &\textcolor{red}{NA} & \textcolor{red}{NA} & 2.136e-04 &     3.268e-05       \\ \hline
					\textbf{NEE}                                                                     & \textcolor{red}{NA} & \textcolor{red}{NA} & 0.079     &   0.0025       \\ \hline
			\end{tabular}}
		\end{center}
	\end{table} 

	\section{CONCLUSION}
	
	A novel fast randomized subspace identification algorithm to identify combined deterministic-stochastic LTI state-space model has been presented. The proposed algorithm is able to outperform the conventional subspace methods in terms of memory cost, flop-count and computation-time, with comparable accuracy, for cases where conventional methods can still accommodate large data sizes. However, the proposed method is capable of handling significantly larger data sizes than what can be processed in conventional methods. It seems that the degradation in the estimates for increased noise, is amplified due to compression. The effect of the compression matrix on this degradation and the design of a compression technique with provable noise immunity, are currently under investigation.
	

	


	\addtolength{\textheight}{-2cm}   
	




\end{document}